\documentclass[twocolumn,notitlepage,superscriptaddress, nofootinbib,nobibnotes, aps,prd,10pt]{revtex4-2}

\usepackage[english]{babel}
\usepackage{graphicx}
\graphicspath{ {./Figures/} }
\usepackage{tikz}
\usetikzlibrary{shapes.misc, positioning, arrows}
\usepackage{verbatim}
\usepackage{bbold}
\usepackage{bbm}
\usepackage{mathrsfs}
\usepackage{subfigure}
\usepackage{enumitem}

\usepackage{amssymb}
\usepackage{amsmath}
\usepackage{amsthm}
\usepackage{mathtools}
\usepackage[utf8]{inputenc}

\usepackage{hyperref}
\usetikzlibrary{shapes.misc, positioning}
\usetikzlibrary{quantikz}
\usetikzlibrary{cd}

\usepackage{enumitem}
\setlist{leftmargin=0mm}

\tikzset{
operator/.append style={fill=black!5}}

\newcommand{\beq}{\begin{equation}}
\newcommand{\eeq}{\end{equation}}
\newcommand{\NE}[1]{\Delta S_{\hat N}({#1})}
\newcommand{\NA}[1]{#1|_{\hat N_A}}
\newcommand{\N}[1]{#1|_{\hat N}}
\newcommand{\Ket}[1]{\lvert#1\rangle}
\newcommand{\Bra}[1]{\langle #1\lvert}
\newcommand{\pu}[1]{\Pi_{#1}}

\newcommand{\sep}{\mathscr{D}^{\text{sep}}}
\newcommand{\dn}{\mathscr{D}_{\hat N}} 
\newcommand{\sepn}{\mathscr{D}^{\text{sep}}_{\hat N}}

\newcommand{\symsepn}{\mathscr{D}^{\textup{symsep}}_{\hat N}}
\newcommand{\dnloc}{\mathscr{D}_{\hat N_{\textup{local}}}}
\newcommand{\sepnloc}{\mathscr{D}^{\textup{sep}}_{\hat N_{\textup{local}}}}
\newcommand{\symsepnloc}{\mathscr{D}^{\textup{symsep}}_{\hat N_{\textup{local}}}}
\newcommand{\symsepnloci}{\mathscr{D}^{\textup{symsep}}_{\hat N_{i,\textup{local}}}}
\newcommand{\symbisepn}{\mathscr{D}^{\textup{sym-2-sep}}_{\hat N}}
\newcommand{\bisepn}{\mathscr{D}^{\text{2-sep}}_{\hat N}}

\newtheorem{lemma} {Lemma}
\newtheorem{proposition} {Proposition}
\newtheorem{corollary} {Corollary}
\theoremstyle{definition}

\newtheorem{conjecture}{Conjecture}

\begin{document}
\title{The typicality of symmetry-induced entanglement}

\author{Christian Boudreault}
\email{christian.boudreault@cmrsj-rmcsj.ca}
\affiliation{Coll\`ege militaire royal de Saint-Jean,
15 Jacques-Cartier Nord, Saint-Jean-sur-Richelieu, QC, Canada, J3B 8R8}

\author{Nicolas Levasseur}
\email{nicolas.levasseur@umontreal.ca}
\affiliation{Universit\'{e} de Montr\'{e}al, C.\,P.\,6128, Succursale Centre-ville, Montr\'eal, QC, Canada, H3C 3J7}

\date{\today} 

\begin{abstract}
\vspace{-1pt}
\begin{center}\textbf{\abstractname}\end{center}\vspace{-5pt} 
\noindent In the presence of a globally conserved charge $\hat N$, a natural question is whether a given separable state can be separated into charge-conserving components. We dub this problem the Symmetric Separability Problem (SSP). On random states, the SSP is answered negatively with probability one for almost all $\hat N$. Using a witness to the failure of symmetric separability, the \emph{number entanglement} (NE) introduced in~\cite{Ma2022symmetric}, we show that most symmetric and separable states are actually far from being symmetrically separable, with the NE featuring Gaussian concentration around a strictly positive mean value. We discuss some consequences of our results for quantum tasks in the presence of a superselection rule or in the absence of a common reference frame. Progress is made on the question of the size of the separable space constrained by $\hat N$. We also touch upon the question of the complexity of SSP, and multiparty entanglement. 
\end{abstract}

\maketitle

\section{Introduction}
The existence of nonclassical correlations might be the most profound outcome of the quantum revolution, both conceptually and practically. Entanglement, nonlocality and contextuality have advanced our understanding of matter and information~\cite{Zeng2015quantum, Nielsen_Chuang2010quantum}, and pushed back the limits of computational and communication tasks that can be realistically implemented~\cite{Bennett1984quantum,Bennett1993teleporting, Shor1994algorithms, Grover1996fast, Lloyd1996universal, Bouwmeester1997experimental, Raussendorf2013contextuality, Howard2014contextuality}. Finding which quantum states possess nonclassical correlations, quantifying the amount present, and characterizing the type of correlations is thus an urgent, but difficult endeavor. The outstanding Quantum Separability Problem (QSP) seeks to decide whether a given density matrix $\rho$ of a two-party system $\mathscr{H}_A \otimes \mathscr{H}_B$ can be written in the form
\beq\label{E:sep_main}
\rho=\sum_i p_i \rho_{A,i}\otimes \rho_{B,i},
\eeq
where $\rho_{A,i}$ and $\rho_{B,i}$ are densities on $A$ and $B$, respectively. The state is \emph{separable} and has only classical correlations if it can be written in the form~\eqref{E:sep_main}, and is entangled otherwise. As simple as it looks, the Quantum Separability Problem is known to be NP-hard in many cases, and believed to be so in general~\cite{Gurvits2003classical, Gharibian2010strong}. In the presence of a globally conserved charge, the natural question becomes whether a state can be separated into charge-conserving components. For a charge $\hat N$ such that $[\rho,\hat N]=0$, the separable state~\eqref{E:sep_main} is \emph{symmetrically separable for} $\hat N$ (\emph{symsep} for short) if 
\beq\label{E:symsep_main}
[\rho_{A,i}\otimes \rho_{B,i} , \hat N]=0 \qquad , \qquad \forall i.
\eeq
If $\rho$ is formally separable in the sense of~\eqref{E:sep_main} but not symmetrically separable in the sense of~\eqref{E:symsep_main}, it contains entanglement in at least one charge sector, and this entanglement will be unlocked by charge measurement~\cite{Ma2022symmetric}. We call this type of entanglement \emph{symmetry-induced} because prior to a choice of symmetry, and subsequent measurement, the separable state~\eqref{E:sep_main} contains only classical correlations. To the best of our knowledge, the Symmetric Separability Problem (SSP) --- deciding whether a given separable and symmetric state is symmetrically separable --- has no known general solution. In fact, SSP boils down to QSP within the restricted class $\dnloc$ of $\hat N_{\textup{local}}$-symmetric states, where $\hat N_{\textup{local}}$ is a localized version of $\hat N$. Within this class, separability and symmetric separability coincide (Fig.~\ref{F:sets_intro}). Thus when $\dnloc$ is of measure zero in $\dn$ (the set of $\hat N$-symmetric states), separable and symmetric states fail to be symmetrically separable with probability one. 

To cope with the algorithmic complexity of finding a general solution to the problem of entanglement, a complementary line of attack is to adopt a measure-theoretic point of view aiming at quantifying \emph{how much} entanglement is typical in physical states. Such knowledge is of value for theoretical investigations as well as numerical simulations, where one is led to make assumptions about typical quantum states or make use of statistical ensembles of states which presuppose minimal prior knowledge about a quantum system~\cite{Zyczkowski2001induced}. Celebrated findings from this approach include the typicality of nonseparable mixed states, along with a tradeoff between purity and separability~\cite{Zyczkowski1998volume1,Zyczkowski1999volume2}, the typicality of highly entangled pure states, and the existence of large subspaces in which all pure states are close to maximally entangled~\cite{Hayden2006aspects}. The toolbox of measure theory comprises a varied supply of theorems collectively known as \emph{concentration of measure}, the most familiar example being Lévy's Inequality Lemma, which states informally that any function of bounded gradient on a sphere will strongly concentrate around its mean value~\cite{Ledoux2001concentration}. Because (normalized) pure states and (normalized) purifications of mixed states live on spheres, the lemma will apply to functions of bounded gradient on these states. The entanglement entropy (EE), a witness of pure-state entanglement, is a notable example of the phenomenon~\cite{Hayden2006aspects}. With overwhelming probability, a random pure state will have an EE almost equal to the (near maximal) average pure-state EE, and will be (highly) entangled.  

To further investigate symmetric separability, we apply the concentration of measure phenomenon to another entanglement witness, the \emph{number entanglement} introduced in~\cite{Ma2022symmetric}, which is specifically tailored to diagnose the presence of symmetry-induced entanglement. For degenerate $\hat N$, we show that on $\hat N$-symmetric and separable states (i.e. on the set $\sepn$ of Fig.~\ref{F:sets_intro}) the number entanglement strongly concentrates around its (strictly positive) mean value. Thus, typical states in $\sepn$ will be found at some positive distance from symmetrically separable ones. (We will attempt to numerically evaluate that distance.) To emphasize, the concentration result shows that not only are typical states symmetrically inseparable, but also \emph{generically far} from being symmetrically separable. Thus, while it is a contribution to the Symmetric Separability Problem, the result is logically independent (i.e.~neither proves the other) from a general algebraic solution to this problem.  
\begin{figure}
\includegraphics{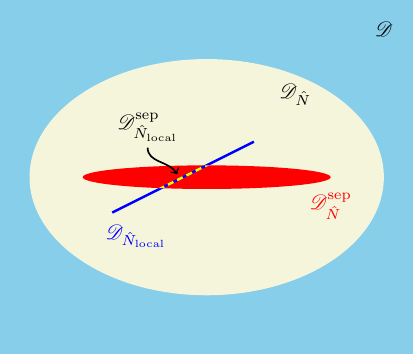}
\caption{\textbf{Pictorial representation of some classes of states considered in this work.} Class $\mathscr{D}$ contains all states on $\mathscr{H}_A \otimes \mathscr{H}_B$. A lower index $\hat N$ indicates symmetries, and an upper index `sep' or `symsep' indicates that states are separable or symmetrically separable, respectively. We algebraically show that $\sepnloc = \symsepnloc = \symsepn$. For a degenerate $\hat N$, $\sepnloc$ is of measure zero in $\sepn$. Thus, separable and $\hat N$-symmetric states fail to be \emph{symmetrically} separable with probability one. On the other hand, we show that $\text{vol}\sepnloc / \text{vol}\dnloc$ is nonzero. As the figure suggests, $\sepn$ is of positive measure in $\dn$ when $\hat N = \hat N_A\otimes 1_B + 1_A \otimes \hat N_B$.}\label{F:sets_intro}
\end{figure}

The paper is organized as follows. In Section~\ref{S:reduction}, SSP is reduced to QSP within the class $\dnloc$, and complexity of SSP is briefly touched upon. In Section~\ref{S:EE} we recall the phenomenon of concentration for the bipartite EE in pure states~\cite{Hayden2006aspects}. In Section~\ref{S:NE} we present the number entanglement (NE) and its property of being a witness to symmetric inseparability~\cite{Ma2022symmetric}. In Section~\ref{S:result} we show that the NE concentrates on $\sepn$, our main result, and explain what it tells us about the Symmetric Separability Problem. In Section~\ref{S:discussion} we discuss the relevance of our results, and Section\ref{S:conclusion} contains concluding remarks. Detailed proofs can be found in the Appendix. 

\section{Reduction to separability}\label{S:reduction}
The closed, convex set of density matrices (states) on a Hilbert space $\mathscr{H}$ will be denoted $\mathscr{D}(\mathscr{H})$, or simply $\mathscr{D}$ when the context is clear. We consider two-party states in $\mathscr{H}_A \otimes \mathscr{H}_B$, i.e. $\rho\in\mathscr{D}(\mathscr{H}_A \otimes \mathscr{H}_B)$, where  $\mathscr{H}_A$ and  $\mathscr{H}_B$ have respective dimensions $d_A$ and $d_B$. For a set of Hermitean operators $\hat N = \{\hat n_j\}$ on $\mathscr{H}_A \otimes \mathscr{H}_B$, we define the $\hat N$-symmetric states
\beq
\dn = \{\rho\in\mathscr{D}\colon [\rho,\hat n_j]=0\text{ for all } \hat n_j \in \hat N\}.
\eeq
In the following, we will consider the observables $\hat N = \{\hat N_A \otimes \hat N_B\}$, and $\hat N = \{\hat N_A \otimes 1_B + 1_A \otimes \hat N_B\}$, as well as their common `localization': 
\beq
\hat N_{\textup{local}} = \{\hat N_A \otimes 1_B, 1_A \otimes \hat N_B\}.
\eeq
Then $\symsepn$ will denote the set of symmetrically separable states, satisfying~\eqref{E:sep_main} and~\eqref{E:symsep_main} with the appropriate $\hat N$. 
It is an easy exercise to show
\begin{proposition}\label{T:localized_symsep}
$\symsepn = \symsepnloc$.
\end{proposition}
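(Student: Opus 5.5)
The plan is to reduce everything to a statement about a single product term. By definition, both $\symsepn$ and $\symsepnloc$ consist of convex combinations $\sum_i p_i \rho_{A,i}\otimes\rho_{B,i}$ in which each product term commutes with the relevant observables. It therefore suffices to show, for arbitrary densities $\rho_A,\rho_B$, that
\[
[\rho_A\otimes\rho_B,\hat N]=0 \iff [\rho_A,\hat N_A]=0 \text{ and } [\rho_B,\hat N_B]=0 ,
\]
with the right-hand side being exactly commutation with both elements of $\hat N_{\textup{local}}$. I will do this separately for the two choices of $\hat N$.

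The direction ``$\Leftarrow$'' is immediate for both choices, since $\hat N$ is built from $\hat N_A\otimes 1_B$ and $1_A\otimes\hat N_B$ by products and sums. For $\hat N=\hat N_A\otimes 1_B+1_A\otimes\hat N_B$, I expand the commutator as
\[
[\rho_A,\hat N_A]\otimes\rho_B+\rho_A\otimes[\rho_B,\hat N_B]=0 .
\]
Taking the partial trace over $B$ and using $\operatorname{tr}\rho_B=1$ and $\operatorname{tr}[\rho_B,\hat N_B]=0$ gives $[\rho_A,\hat N_A]=0$. The $B$ statement follows symmetrically.

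The product case $\hat N=\hat N_A\otimes\hat N_B$ is the main obstacle, because a partial trace only yields $\operatorname{tr}(\rho_B\hat N_B)\,[\rho_A,\hat N_A]=0$, and that prefactor may vanish. Instead I write the condition as $a\otimes b=c\otimes d$ with
\[
a=\rho_A\hat N_A,\quad c=\hat N_A\rho_A,\quad b=\rho_B\hat N_B,\quad d=\hat N_B\rho_B .
\]
Since $c=a^\dagger$, if $a=0$ then $c=0$ and $\rho_A$ commutes with $\hat N_A$; the same holds on $B$.

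Otherwise $a$, $b$, $c$, $d$ are all nonzero, and equality of nonzero elementary tensors forces $a=\lambda c$ and $d=\lambda b$ for some scalar $\lambda$. Because $\|a\|_2=\|a^\dagger\|_2=\|c\|_2$ in Hilbert--Schmidt norm, we get $|\lambda|=1$. To show $\lambda>0$, I pass to the eigenbasis of $\rho_A$ with eigenvalues $p_i$. There the relation $\rho_A\hat N_A=\lambda\hat N_A\rho_A$ reads
\[
p_i(\hat N_A)_{ij}=\lambda(\hat N_A)_{ij}\,p_j .
\]
Since $a\neq0$, some entry $(\hat N_A)_{ij}$ with $p_i\neq 0$ is nonzero. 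Then $p_j\neq0$ as well, so $\lambda=p_i/p_j>0$. Hence $\lambda=1$, which gives $[\rho_A,\hat N_A]=0$, and likewise $[\rho_B,\hat N_B]=0$.

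One degenerate case needs care. If $\hat N_B=0$ while $\hat N_A$ is not a multiple of the identity, then $\hat N_A\otimes\hat N_B=0$, so every state is symmetric but not every state is locally symmetric. The product case therefore requires the implicit assumption that $\hat N_A,\hat N_B\neq0$, which I will state explicitly. Combining the two cases with the reduction to single product terms proves the proposition.
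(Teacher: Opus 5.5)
Your treatment of $\hat N_A\otimes 1_B+1_A\otimes\hat N_B$ is correct and is the same partial-trace argument the paper uses. The product case has a real hole in the case split, and it cannot be filled.

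\textbf{The gap.} You handle $a=0$ (giving $[\rho_A,\hat N_A]=0$), $b=0$ (giving $[\rho_B,\hat N_B]=0$), and $a,b\neq 0$. When $a=0\neq b$, the identity $a\otimes b=c\otimes d$ collapses to $0=0$ and says nothing about $\rho_B$. The statement itself is false in that regime. Take qubits with $\hat N_A=\hat N_B=\mathrm{diag}(0,1)$ and $\rho=\Ket{0}\Bra{0}\otimes\Ket{+}\Bra{+}$:
\begin{itemize}
\item Since $\hat N_A\Ket{0}=0$, we have $\rho\hat N=0=\hat N\rho$, so this product state lies in $\symsepn$.
\item $\rho$ is pure, so its only convex decomposition is itself.
\item $[\Ket{+}\Bra{+},\hat N_B]\neq 0$, so $\rho\notin\dnloc\supseteq\symsepnloc$.
\end{itemize}
In general, $\rho_A\hat N_A=0$ exactly when $\mathrm{supp}\,\rho_A\subseteq\ker\hat N_A$. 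So the equality fails whenever one of $\hat N_A,\hat N_B$ is singular and the other is not a multiple of the identity. Your excluded case $\hat N_B=0$ is only one instance of this.

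\textbf{The correct hypothesis.} What you need is that $\hat N_A$ and $\hat N_B$ are invertible, not merely nonzero. Then $a,b\neq 0$ for all densities, and the rest of your argument goes through: proportionality of nonzero elementary tensors, $|\lambda|=1$, and $\lambda=p_i/p_j>0$ in the eigenbasis of $\rho_A$.

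\textbf{Comparison with the paper.} The paper does not supply the missing piece either. Its product-case argument is the appendix proof of the proposition attributed to Ma \emph{et al.}, which it asks the reader to ``inspect''. That proof has three problems:
\begin{itemize}
\item It replaces $[\rho_{A,i}\otimes\rho_{B,i},\hat N]=0$ by $\rho_{A,i}\otimes\rho_{B,i}=\hat N^{\dagger}(\rho_{A,i}\otimes\rho_{B,i})\hat N$. This would require $\hat N$ to be unitary; for Hermitian $\hat N$ one only gets $\hat N\rho\hat N=\rho\hat N^{2}$.
\item It treats $\hat N_A^{\dagger}\rho_{A,i}\hat N_A$ as a normalized density in order to fix the constant $k=1$ by the trace.
\item It never considers vanishing factors.
\end{itemize}
Your formulation via $a=\rho_A\hat N_A$, $c=a^{\dagger}$, plus a positivity argument for $\lambda$, is the sound way to run that idea. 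It proves the proposition for $\hat N_A\otimes 1_B+1_A\otimes\hat N_B$ unconditionally, and for $\hat N_A\otimes\hat N_B$ under invertibility.

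Note that the paper's own product-case numerics use the qudit spectra $\{0,1,\dots,d-1\}$. These $\hat N_A$ are singular, so they fall exactly in the regime where the stated equality fails.
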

The proof consists in showing that Eqn.~\eqref{E:symsep_main} readily implies $[\rho_{A,i}\otimes \rho_{B,i} , \hat N_A \otimes 1_B]=0 =[\rho_{A,i}\otimes \rho_{B,i} , 1_A \otimes \hat N_B]$. Thus, SSP within $\dn$ is reduced to SSP within $\dnloc$. Let us define the twirl
\beq
\mathcal{G}[\rho]\coloneqq \int_{\mathcal{T}^2} dg\; T (g)\rho \, T^{\dagger}(g),
\eeq
where $T(g)=e^{ig_1 \hat N_A\otimes 1_B + ig_2 1_A \otimes \hat N_B}$, and $dg$ is the Haar measure over the two-dimensional torus $\mathcal{T}^2$. It follows from translation-invariance of the Haar measure that $\mathcal{G}[\rho]$ commutes with both $\hat N_A\otimes 1_B$ and $1_A \otimes \hat N_B$. The range of $\mathcal{G}$ is thus in $\dnloc$. 
Actually, 
\begin{proposition}
$\mathcal{G}$ is the (unique) retraction of $\mathscr{D}$ onto $\dnloc$, i.e. it collapses $\mathscr{D}$ onto $\dnloc$, and $\dnloc$ is the set of fixed points of $\mathcal{G}$. 
\end{proposition}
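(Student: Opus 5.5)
We need to show three things: $\mathcal{G}$ maps $\mathscr{D}$ into $\dnloc$; every element of $\dnloc$ is fixed by $\mathcal{G}$; and $\mathcal{G}$ is the unique retraction in the intended sense. The first point was already noted just before the statement, via translation invariance of the Haar measure. One adds that $\mathcal{G}$ preserves positivity and trace, since it is an average of unitary conjugations, so $\mathcal{G}[\rho]\in\mathscr{D}$. Concretely, $T(g+h)=T(g)T(h)$, so $T(h)\mathcal{G}[\rho]T^\dagger(h)=\mathcal{G}[\rho]$ for all $h$. Differentiating at $h=0$ gives $[\hat N_A\otimes 1_B,\mathcal{G}[\rho]]=0=[1_A\otimes\hat N_B,\mathcal{G}[\rho]]$.

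For the fixed points I would work in the spectral decompositions $\hat N_A=\sum_a \lambda_a P_a$ and $\hat N_B=\sum_b \mu_b Q_b$, with $P_a,Q_b$ the spectral projectors. Then
\beq
T(g)=\sum_{a,b} e^{i(g_1\lambda_a+g_2\mu_b)}P_a\otimes Q_b ,
\eeq
so
\beq
\mathcal{G}[\rho]=\sum_{a,a',b,b'}\Big(\int dg\, e^{ig_1(\lambda_a-\lambda_{a'})+ig_2(\mu_b-\mu_{b'})}\Big)(P_a\otimes Q_b)\,\rho\,(P_{a'}\otimes Q_{b'}) .
\eeq
Here is the one delicate step, and I expect it to be the main obstacle. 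Taking $dg$ as Haar measure on a compact torus implicitly requires the $g_i$ to be periodic for $T(g)$, which holds if the spectra are integer, as for number operators. In that case the integral equals $\delta_{aa'}\delta_{bb'}$, and we get $\mathcal{G}[\rho]=\sum_{a,b}(P_a\otimes Q_b)\rho(P_a\otimes Q_b)$. For general real spectra, I would instead read $\int dg$ as the invariant mean $\lim_{L\to\infty}L^{-2}\int_{[0,L]^2}$, which yields the same Kronecker deltas. With this formula, $\mathcal{G}$ is manifestly the pinching onto the joint eigenspaces of the two commuting operators.

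An operator $\rho$ commutes with both $\hat N_A\otimes 1_B$ and $1_A\otimes\hat N_B$ exactly when it commutes with all their spectral projectors $P_a\otimes 1_B$ and $1_A\otimes Q_b$. That is equivalent to $\rho$ being block diagonal, $\rho=\sum_{a,b}(P_a\otimes Q_b)\rho(P_a\otimes Q_b)$. Hence $\mathcal{G}[\rho]=\rho$ if and only if $\rho\in\dnloc$. It follows that $\mathcal{G}\circ\mathcal{G}=\mathcal{G}$, so $\mathcal{G}$ is a retraction of $\mathscr{D}$ onto $\dnloc$.

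Uniqueness needs an interpretation, since arbitrary continuous retractions are not unique. I would establish it in two forms. First, $\mathcal{G}$ is the unique \emph{covariant} retraction: suppose $\mathcal{R}$ is a linear retraction onto $\dnloc$ satisfying $\mathcal{R}[T(g)\rho T^\dagger(g)]=\mathcal{R}[\rho]$. Averaging over $g$ gives $\mathcal{R}=\mathcal{R}\circ\mathcal{G}=\mathcal{G}$, because $\mathcal{G}[\rho]$ lies in $\dnloc$ and is therefore fixed by $\mathcal{R}$. Second, $\mathcal{G}$ is the orthogonal projection onto $\dnloc$ in Hilbert–Schmidt norm. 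It is self-adjoint, since the adjoint of conjugation by $T(g)$ is conjugation by $T(-g)$ and Haar measure is inversion invariant, and it is idempotent. So $\mathcal{G}$ is also the unique nearest-point (conditional expectation) retraction onto $\dnloc$.
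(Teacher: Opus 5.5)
Your proposal is correct, and it differs from the paper's proof in a few places.

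\textbf{Range.} Your first step is the paper's own argument: translation invariance of the Haar measure gives $[\mathcal{G}[\rho],T(g)]=0$, hence commutation with both generators.

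\textbf{Fixed points.} Here the routes differ. The paper argues in one line. If $\rho$ commutes with $\hat N_A\otimes 1_B$ and $1_A\otimes\hat N_B$, it commutes with every $T(g)$, so every integrand equals $\rho$. Conversely, a fixed point lies in the range, which is symmetric. You instead compute the pinching $\mathcal{G}[\rho]=\sum_{a,b}(P_a\otimes Q_b)\rho(P_a\otimes Q_b)$ explicitly. This costs a computation but gives more: it identifies $\mathcal{G}$, and each one-parameter twirl, directly with the nonselective measurement. The paper reaches that identification in its corollary only by appealing to a uniqueness it never proves.

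\textbf{Periodicity.} The paper tacitly needs $g\mapsto T(g)$ to be periodic; it writes $\int_{S^1}dg/2\pi$. Without periodicity its translation-invariance step fails, and your invariant-mean reading repairs this.

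\textbf{Uniqueness.} The paper asserts ``unique'' without proof. You rightly note that continuous retractions are not unique, and both of your senses (invariant linear retractions; the Hilbert--Schmidt orthogonal projection) are valid. The invariance hypothesis can in fact be dropped: any \emph{affine} retraction $\mathcal{R}\colon\mathscr{D}\to\dnloc$ equals $\mathcal{G}$.
\begin{itemize}
\item Extend $\mathcal{R}$ linearly to Hermitian operators and set $\Delta=\mathcal{R}-\mathcal{G}$, which vanishes on the symmetric ones.
\item Take unit vectors $u,v$ in different joint eigenspaces. Let $H_\theta=e^{i\theta}uv^{\dagger}+e^{-i\theta}vu^{\dagger}$ and $\psi=te^{i\theta}u+\sqrt{1-t^2}\,v$. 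Then $\mathcal{R}[\psi\psi^{\dagger}]=t^2uu^{\dagger}+(1-t^2)vv^{\dagger}+t\sqrt{1-t^2}\,\Delta(H_\theta)$.
\item Positivity for small $t$ of either sign forces $\langle w|\Delta(H_\theta)|w\rangle=0$ for all $w\perp v$. Swapping the roles of $u$ and $v$ gives the same for all $w\perp u$.
\item Since $\Delta(H_\theta)=\mathcal{R}(H_\theta)$ is block diagonal, its $u$--$v$ entries also vanish, so $\Delta(H_\theta)=0$.
\item The $H_\theta$ span all off-block-diagonal Hermitian directions, hence $\Delta=0$.
\end{itemize}
This is the natural reading of the paper's ``(unique)'', and it is exactly what the paper's corollary needs.
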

Because each $T(g)$ is of the form $e^{ig_1 \hat N_A}\otimes e^{ig_2 \hat N_B}$, it follows by linearity that $\mathcal{G}$ sends separable states to ($\hat N_{\textup{local}}$-symmetric) separable states. In fact, we have the stronger result
\begin{proposition}
$\sepnloc = \mathcal{G}[\sep] = \symsepnloc$.
\end{proposition}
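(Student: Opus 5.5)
We will prove the chain of inclusions
\[
\symsepnloc \subseteq \sepnloc \subseteq \mathcal{G}[\sep] \subseteq \symsepnloc ,
\]
which gives the equalities.

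\textbf{First inclusion.} The inclusion $\symsepnloc \subseteq \sepnloc$ is immediate. A symmetrically separable decomposition is in particular a separable decomposition. Moreover, each term commutes with $\hat N_A\otimes 1_B$ and with $1_A\otimes \hat N_B$, so by linearity the sum $\rho$ does too; hence $\rho\in\dnloc$.

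\textbf{Second inclusion.} For $\sepnloc \subseteq \mathcal{G}[\sep]$ we use the previous Proposition, which says that $\dnloc$ is the fixed-point set of $\mathcal{G}$. Any $\rho\in\sepnloc$ is separable and satisfies $\rho=\mathcal{G}[\rho]$, so $\rho\in\mathcal{G}[\sep]$.

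\textbf{Third inclusion.} The inclusion $\mathcal{G}[\sep]\subseteq\symsepnloc$ is the substantive step. Take $\rho=\sum_i p_i\,\rho_{A,i}\otimes\rho_{B,i}$. Since $T(g)=e^{ig_1\hat N_A}\otimes e^{ig_2\hat N_B}$ and the Haar measure on $\mathcal{T}^2$ is the product measure $dg_1\,dg_2$, Fubini gives
\[
\mathcal{G}[\rho_{A,i}\otimes\rho_{B,i}] = \mathcal{G}_A[\rho_{A,i}]\otimes\mathcal{G}_B[\rho_{B,i}],
\qquad
\mathcal{G}_X[\sigma]=\int_{\mathcal{T}} dh\; e^{ih\hat N_X}\,\sigma\, e^{-ih\hat N_X}.
\]
Each $\mathcal{G}_X$ is a convex average of unitary conjugations. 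It therefore maps states to states, and by translation invariance its output commutes with $\hat N_X$. Consequently every term $\mathcal{G}_A[\rho_{A,i}]\otimes\mathcal{G}_B[\rho_{B,i}]$ commutes with both localized charges. This exhibits $\mathcal{G}[\rho]$ as a symmetrically separable decomposition for $\hat N_{\textup{local}}$, i.e.\ $\mathcal{G}[\rho]\in\symsepnloc$.

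\textbf{Main obstacle.} The only delicate point is the factorization of the twirl. It requires the Haar measure on $\mathcal{T}^2$ to be the product of the one-dimensional Haar measures, together with the product form of $T(g)$; both hold for the torus. A secondary technical issue is that the period $2\pi$ twirl yields exact commutation only when the spectra of $\hat N_A,\hat N_B$ are integer-valued. Otherwise, the averages should be read as limits over $[0,L]$ as $L\to\infty$, or as pinching maps onto the eigenspaces of $\hat N_X$. In either reading the factorization and the commutation still hold, and the argument goes through unchanged.
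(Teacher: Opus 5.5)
Your proof is correct and follows essentially the paper's argument: the fixed-point property of $\mathcal{G}$ gives $\sepnloc\subseteq\mathcal{G}[\sep]$, and the factorization $\mathcal{G}[\rho_{A,i}\otimes\rho_{B,i}]=\mathcal{G}_A[\rho_{A,i}]\otimes\mathcal{G}_B[\rho_{B,i}]$ gives $\mathcal{G}[\sep]\subseteq\symsepnloc$, exactly as in the paper. Your cyclic chain, closed by the trivial inclusion $\symsepnloc\subseteq\sepnloc$, is a slight streamlining of the paper's two separate pairs of inclusions. Your caveat that the $2\pi$-periodic twirl needs integer spectra (otherwise read it as a pinching) is correct, and the paper leaves it implicit.
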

\begin{corollary}\label{T:symsepn_sepnloc}
$\symsepn = \sepnloc$.
\end{corollary}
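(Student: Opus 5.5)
The corollary follows by chaining two results already stated. Proposition~\ref{T:localized_symsep} gives $\symsepn = \symsepnloc$. The preceding proposition gives $\symsepnloc = \sepnloc$. Composing the two equalities yields $\symsepn = \sepnloc$. There is nothing more to prove once these two results are accepted, so the proof will simply state this transitivity.

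It may be worth sketching why each equality holds, since the argument relies on both.

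\emph{Proposition~\ref{T:localized_symsep}.} Suppose $[\rho_{A}\otimes\rho_{B},\hat N]=0$.
\begin{itemize}
\item For $\hat N=\hat N_A\otimes 1_B+1_A\otimes\hat N_B$, the commutator splits as $[\rho_A,\hat N_A]\otimes\rho_B+\rho_A\otimes[\rho_B,\hat N_B]=0$. Take the partial trace over $B$, using $\mathrm{tr}\,\rho_B=1$ and $\mathrm{tr}[\rho_B,\hat N_B]=0$. This gives $[\rho_A,\hat N_A]=0$, and symmetrically $[\rho_B,\hat N_B]=0$.
\item For $\hat N=\hat N_A\otimes\hat N_B$, the commutator is $\rho_A\hat N_A\otimes\rho_B\hat N_B-\hat N_A\rho_A\otimes\hat N_B\rho_B=0$. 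If the two operators in each term are nonzero, this forces $\rho_A\hat N_A=\lambda\hat N_A\rho_A$ and $\hat N_B\rho_B=\lambda\,\rho_B\hat N_B$ for some scalar $\lambda$. Taking adjoints and positivity arguments should then force $\lambda=1$. Degenerate cases, where some $\rho\hat N$ product vanishes, need separate but trivial handling.
\end{itemize}
The converse inclusion $\symsepnloc\subseteq\symsepn$ is immediate in both cases.

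\emph{$\sepnloc=\symsepnloc$.} Take $\rho\in\sepnloc$ with a separable decomposition. Since $\rho$ is a fixed point of $\mathcal G$, we have $\rho=\mathcal G[\rho]=\sum_i p_i\,\mathcal G_A[\rho_{A,i}]\otimes\mathcal G_B[\rho_{B,i}]$. Here $\mathcal G_A$ and $\mathcal G_B$ are the local twirls, and each local factor commutes with its local charge. So every $\rho\in\sepnloc$ has a symmetric separable decomposition.

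The main obstacle is the $\hat N_A\otimes\hat N_B$ case of Proposition~\ref{T:localized_symsep}, where the scalar argument must be done carefully. For the corollary itself, however, no obstacle remains.
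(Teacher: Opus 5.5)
Your chaining is exactly how the paper obtains the corollary: in the appendix it is declared immediate from $\symsepn=\symsepnloc$ and $\mathcal G[\sep]=\sepnloc=\symsepnloc$. Your twirl argument for $\sepnloc=\symsepnloc$ and your partial-trace argument for $\hat N_A\otimes 1_B+1_A\otimes\hat N_B$ are correct.

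The gap is your claim that, for $\hat N=\hat N_A\otimes\hat N_B$, the cases where a product $\rho\hat N$ vanishes need only ``trivial'' handling. This is exactly where Proposition~\ref{T:localized_symsep}, and hence the corollary, breaks down. Suppose $\rho_A\hat N_A=0$, which is possible as soon as $\hat N_A$ has a zero eigenvalue and $\rho_A$ is supported on its kernel. Then also $\hat N_A\rho_A=(\rho_A\hat N_A)^\dagger=0$. So $\rho_A\otimes\rho_B$ commutes with $\hat N_A\otimes\hat N_B$ for \emph{every} $\rho_B$, and nothing forces $[\rho_B,\hat N_B]=0$. Concretely, take $\hat N_A=\hat N_B=\mathrm{diag}(0,1)$, so that $\hat N=|11\rangle\langle 11|$, and $\rho=|0\rangle\langle 0|\otimes|+\rangle\langle +|$. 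This single product term commutes with $\hat N$, so $\rho\in\symsepn$. But $[\rho,1_A\otimes\hat N_B]=|0\rangle\langle 0|\otimes[|+\rangle\langle +|,|1\rangle\langle 1|]\neq0$, so $\rho\notin\sepnloc$. The corollary is therefore false for $\hat N_A\otimes\hat N_B$ whenever $\hat N_A$ or $\hat N_B$ is singular. This includes level-number operators with eigenvalue $0$, such as those in the paper's own $\hat N_A\otimes\hat N_B$ numerics.

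What you can prove is the following:
\begin{itemize}
\item the corollary for $\hat N_A\otimes 1_B+1_A\otimes\hat N_B$, unconditionally;
\item the corollary for $\hat N_A\otimes\hat N_B$, under the extra hypothesis that $\hat N_A$ and $\hat N_B$ are invertible.
\end{itemize}
In the invertible case, $\rho_A\hat N_A$ and $\rho_B\hat N_B$ never vanish, and your scalar argument closes. Taking adjoints gives $|\lambda|=1$. In an eigenbasis of $\rho_A$ with eigenvalues $r_j\geq0$, the relation $\rho_A\hat N_A=\lambda\hat N_A\rho_A$ reads $(r_j-\lambda r_k)(\hat N_A)_{jk}=0$. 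Picking $(j,k)$ with $r_j(\hat N_A)_{jk}\neq0$ forces $r_k>0$ and $\lambda=r_j/r_k>0$, hence $\lambda=1$.

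The paper's own proof misses the same case. It starts from $\rho_{A,i}\otimes\rho_{B,i}=\hat N^\dagger(\rho_{A,i}\otimes\rho_{B,i})\hat N$, which would follow from invariance under a unitary, not from commutation with a Hermitian $\hat N$. You should state the invertibility hypothesis explicitly, or restrict to the sum case, rather than dismiss the degenerate case.
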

Therefore, SSP for $\hat N$ is reduced to QSP within the class of $\hat N_{\textup{local}}$-symmetric states. Importantly, when $\hat N$ is degenerate and not a multiple of the identity,  $\sepnloc$ is of measure zero in $\sepn$, and states in $\sepn$ fail to be symsep with probability one. Because of the SSP-to-QSP reduction, it is tempting to believe that for some symmetries at least, the SSP problem is NP-complete, as it is for the `trivial' case $\hat N = 1$~\cite{Gurvits2003classical, Gharibian2010strong}. An interesting question is: For which $\hat N$ is SSP an NP-complete problem? Progress towards an answer might come from knowledge of the relative size of $\sepn$ within $\dn$ (more specifically, of $\sepnloc$ within $\dnloc$). One can show that $\dn$ and $\sepn$ naturally split into the convex hulls of their charge sectors as follows:
\beq
\begin{aligned}\label{E:conv_main}
\dn &= \textup{conv}\bigcup_{n}\mathscr{D}(h_A^{n} \otimes h_B^{n})\\
\sepn &= \textup{conv}\bigcup_{n}\sep(h_A^{n} \otimes h_B^{n}),
\end{aligned}
\eeq
where charge-$n$ densities act on subspaces of the form $h_A^{n} \otimes h_B^{n}$ with $h_A^{n} \subset \mathscr{H}_A$, $h_B^{n} \subset \mathscr{H}_B$. Numerical evidence~\cite{Zyczkowski1998volume1, Zyczkowski1999volume2} suggests that each ratio 
\beq\label{E:ratio_main}
\frac{\textup{vol}\,\sep(h_A^{n} \otimes h_B^{n}) }{ \textup{vol}\,\mathscr{D}(h_A^{n} \otimes h_B^{n})}
\eeq
 is exponentially small in the dimension of $h_A^{n} \otimes h_B^{n}$. Because charge sectors are orthogonal, we conjecture that at least when all sector dimensions are large the relative size of $\sepn$ within $\dn$ will be exponentially small (in some combination of sector dimensions).

To conclude this section, let us mention that when $\hat N$ is \emph{nondegenerate}, it follows that $\sepn = \symsepn$, i.e.~SSP and QSP coincide. This is because any $\hat N$-symmetric state is diagonalized in the \emph{unique} eigenbasis of $\hat N$ (also an eigenbasis of $\hat N_A\otimes 1_B$ and $1_A \otimes \hat N_B$), whence $\dn=\dnloc$. Corollary~\ref{T:symsepn_sepnloc} in turn implies $\sepn = \symsepn$. Another extreme case is when $\hat N_A \otimes 1_B$ and/or $1_A \otimes \hat N_B$ is \emph{completely} degenerate, i.e.~a multiple of the identity, for $\hat N$ commutes with it. Then again, SSP and QSP coincide. Nontrivial behavior of the NE is to be searched for in between these extremes.

\section{Concentration}\label{S:concentration}
In this section, we show that the NE concentrates on $\sepn$, our main result, and explain what it tells us about the Symmetric Separability Problem. We begin by recalling facts about the concentration of entanglement entropy, some of which will be used later on.

\subsection{Concentration for pure-state entanglement entropy}\label{S:EE}
The von Neumann entropy of a state $\rho$ (pure or mixed) is 
\beq\label{E:vN_main}
S(\rho) = -\text{Tr}(\rho\log\rho).
\eeq
As usual, for  two-party states $\rho\in\mathscr{D}(\mathscr{H}_A \otimes \mathscr{H}_B)$ the bipartite entanglement entropy (EE) on $A$ is obtained by first tracing out $B$, and computing the von Neumann entropy of the resulting reduced matrix,
\beq
S(\rho_A) = S(\text{Tr}_B \rho).
\eeq
On pure states, the EE is sensitive only to nonclassical correlations between $A$ and $B$, and symmetric under the interchange of $A$ and $B$. We will write $\psi = \Ket\psi\Bra\psi$ for the density corresponding to a pure state $\ket\psi$. Normalized pure states live on the sphere $S^{2d_A d_B-1}$ up to a global phase. As was proved in~\cite{Hayden2006aspects}, pure-state EE is Lipshitz continuous: 
\begin{lemma}\label{T:hayden_main}
For pure states $\Ket\psi , \Ket\phi\in\mathscr{H}_A \otimes \mathscr{H}_B$, and $d_A \geq 3$, 
\beq
|S(\psi_A) - S(\phi_A) | \leq \eta \lVert \Ket\psi - \Ket\phi \rVert_2 ,
\eeq
with $\eta = \sqrt{8}\log d_A$, and where $\lVert\cdot\rVert_2$ is the Euclidean norm.
\end{lemma}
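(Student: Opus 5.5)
This is the Hayden–Leung–Winter Lipschitz estimate, and I would follow their route: bound the gradient of $f(\psi)=S(\psi_A)$ as a function on the real sphere $S^{2d_Ad_B-1}$ and then integrate along a geodesic.

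\emph{Step 1: setup.} Write $\Ket\psi=\sum_{ij}\psi_{ij}\Ket{i}_A\Ket{j}_B$, so that $\psi_A=MM^\dagger$ with $M=(\psi_{ij})$ a $d_A\times d_B$ matrix and $\lVert\Ket\psi\rVert_2=\lVert M\rVert_{HS}$. The function $f$ is invariant under local unitaries. Hence at any point we may take $M$ in Schmidt form, $M=\mathrm{diag}(\sqrt{\lambda_1},\dots)$, with $\lambda_k$ the eigenvalues of $\psi_A$.

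\emph{Step 2: gradient bound.} Let $g(\psi)=-\mathrm{Tr}\,\psi_A\log\psi_A$ be defined for unnormalized $M$. Then
\[
dg=-\mathrm{Tr}\bigl[(\log\psi_A+1)\,d\psi_A\bigr],\qquad d\psi_A=dM\,M^\dagger+M\,dM^\dagger .
\]
In the Schmidt basis this gives $\lVert\nabla g\rVert_2^2=4\sum_k\lambda_k(1+\log\lambda_k)^2$. Restricting to the tangent space of the sphere can only decrease the norm. I would then bound $\sum_k\lambda_k(1+\log\lambda_k)^2\le 2\bigl(1+\sum_k\lambda_k\log^2\lambda_k\bigr)$.

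The main obstacle is to show that $\sum_k\lambda_k\log^2\lambda_k\le\log^2 d_A$ for $d_A\ge3$. The function $x\mapsto x\log^2x$ is concave only on $x\ge e^{-1}$, so Jensen does not apply directly. The approach I would try first is Hayden et al.'s case analysis. Use Lagrange multipliers, or split the $\lambda_k$ into those above and below $e^{-1}$ and apply concavity separately on each part. This should show that the maximum is at the uniform distribution once $d_A\ge3$; at $d_A=2$ the bound fails, which explains the hypothesis.

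This gives $\lVert\nabla f\rVert^2\le 8\log^2 d_A$, where the constant $1$ is absorbed using $\log d_A\ge1$ for $d_A\ge3$ (natural log). Hence $\lVert\nabla f\rVert\le\sqrt8\log d_A=\eta$.

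\emph{Step 3: integrate.} The set of points where some $\lambda_k=0$ has measure zero, and $f$ is continuous there, so nondifferentiability causes no trouble. Along the great-circle arc from $\psi$ to $\phi$ we get $|f(\psi)-f(\phi)|\le\eta\cdot(\text{arc length})$. To obtain the Euclidean chord $\lVert\Ket\psi-\Ket\phi\rVert_2$ instead, choose the global phase of $\phi$ so that the overlap is real and positive; $f$ is phase-invariant. I would then either accept the arc length, at a cost of a factor $\pi/2$, or better integrate along the straight segment. Along the segment the unnormalized $g$ applied to $\rho/\mathrm{Tr}\rho$ has the same gradient bound up to normalization effects. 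The cleanest route is to follow Hayden et al. and use the chord–arc comparison for the paper's stated convention.
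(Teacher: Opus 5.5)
The paper does not prove this lemma; it imports it from Hayden, Leung and Winter, whose argument is the Schmidt-basis gradient computation you set up. Your route is therefore the right one. As written, however, the proposal has two gaps and does not reach $\eta=\sqrt8\log d_A$.

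\emph{Step 2 loses a factor of two.} The bound $(1+\ln\lambda)^2\le2(1+\ln^2\lambda)$ gives $\lVert\nabla g\rVert^2\le8(1+\ln^2d_A)$. An additive $8$ cannot be absorbed into $8\ln^2d_A$: using $\ln d_A\ge1$ you only get $16\ln^2d_A$, i.e.\ $\eta=4\log d_A$. Instead, expand the square and drop the cross term, which equals $2\sum_k\lambda_k\ln\lambda_k=-2S(\psi_A)\le0$:
\[
\sum_k\lambda_k(1+\ln\lambda_k)^2\le1+\sum_k\lambda_k\ln^2\lambda_k\le1+\ln^2d_A\le2\ln^2d_A .
\]
The last step is where $\ln d_A\ge1$ enters. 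Working in bits multiplies both sides of the final comparison by $1/\ln^22$, so nothing changes.

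The middle inequality is still only a plan, and the convexity you quote is reversed. Since $(x\ln^2x)''=2(1+\ln x)/x$, the function is concave on $(0,e^{-1}]$ and convex on $[e^{-1},1]$. A short argument:
\begin{enumerate}
\item At an interior critical point, the Lagrange condition $\ln^2\lambda_k+2\ln\lambda_k=\mathrm{const}$ allows at most two distinct values $u\neq v$, and necessarily $uv=e^{-2}$.
\item If both values occur, with multiplicities $m$ and $d_A-m$, AM--GM gives $mu+(d_A-m)v\ge\tfrac2e\sqrt{m(d_A-m)}>1$ whenever $d_A\ge3$. This contradicts normalization, so the uniform point is the only interior critical point.
\item Faces follow by induction on the number of nonzero entries. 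A face with at most two nonzero entries gives at most $2\cdot4e^{-2}<\ln^23$.
\end{enumerate}

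\emph{Step 3 is unresolved.} A bound on the tangential gradient controls $|f(\psi)-f(\phi)|$ by $\eta$ times the \emph{arc} length. Since chord $\le$ arc, no chord--arc comparison turns this into the Euclidean statement with the same $\eta$. Accepting the arc costs a factor of up to $\pi/2$, or $\pi/(2\sqrt2)$ after aligning phases. Integrating the normalized function $g(x/\lVert x\rVert)$ along the chord merely reproduces the arc bound.

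What works is the mean value inequality for the \emph{unnormalized} $g(M)=-\mathrm{Tr}\,MM^\dagger\ln MM^\dagger$ along the chord, which lies in the unit ball. There $g$ is $C^1$: in singular values it is $-\sum_ks_k^2\ln s_k^2$, whose partial derivatives $-2s_k(1+\ln s_k^2)$ vanish continuously at $s_k=0$. Write its eigenvalues as $\mu_k=r^2\lambda_k$ with $r\le1$, and set $S=-\sum_k\lambda_k\ln\lambda_k$ and $Q=\sum_k\lambda_k\ln^2\lambda_k$. Then
\[
\sum_k\mu_k(1+\ln\mu_k)^2=r^2\bigl[(1+\ln r^2)^2-2(1+\ln r^2)S+Q\bigr]\le1+\ln^2d_A .
\]
The inequality holds in both regimes:
\begin{itemize}
\item For $r^2\ge e^{-1}$, the middle term is nonpositive and $r^2(1+\ln r^2)^2\le1$.
\item For $r^2<e^{-1}$, the three terms are at most $4e^{-3}$, $2e^{-2}\ln d_A$ and $e^{-1}\ln^2d_A$, whose sum stays below $1+\ln^2d_A$.
\end{itemize}
Hence $\sup_{\lVert M\rVert_2\le1}\lVert\nabla g\rVert\le2\sqrt{1+\ln^2d_A}\le\sqrt8\ln d_A$ for $d_A\ge3$. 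This gives the Euclidean statement directly, and rank-deficient points need no separate treatment.
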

The most celebrated instance of measure concentration, Lévy's Inequality Lemma, deals specifically with Lipshitz functions on spheres:
\begin{lemma}[Lévy's Lemma]\label{T:levy_main}
Let $f : S^{n}\to \mathbb{R}$ be a Lipshitz function with constant $\eta$, i.e.
\beq
|f(x)-f(y)|\leq \eta\lVert x-y\rVert_2, \hspace{.8cm}\forall x,y \in S^{n},
\eeq
and let $X$ be a random variable on $S^n$ equipped with the Haar measure. Then  
\beq
\textup{Prob}(|f(X)-\mathbb{E}f(X)| > \alpha) \leq 2e^{-c(n+1)\alpha^2 / \eta^2},
\eeq
where $\mathbb{E}f(X)$ is the expectation value of $f(X)$, and $c$ is a positive constant that may be chosen as $c=(18\pi^3)^{-1}$. 
\end{lemma}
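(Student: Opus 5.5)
We would prove Lévy's Lemma by the classical route: spherical isoperimetry gives concentration around a \emph{median}, and we then pass from the median to the mean. The very small constant $c=(18\pi^3)^{-1}$ leaves plenty of room to absorb the losses in these two steps.

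\emph{Step 1 (isoperimetry).} Let $\mu$ be the normalized Haar measure on $S^n\subset\mathbb{R}^{n+1}$. We would invoke Lévy's spherical isoperimetric inequality: among all Borel sets of a given measure, spherical caps minimize the measure of $\epsilon$-neighbourhoods. Apply it to a set $A$ with $\mu(A)\ge 1/2$. Its Euclidean $\epsilon$-neighbourhood $A_\epsilon$ then has at least the measure of the $\epsilon$-neighbourhood of a hemisphere. Bounding the cap integral $\int_{\epsilon'}^{\pi/2}\cos^{n-1}\theta\,d\theta\big/\int_{-\pi/2}^{\pi/2}\cos^{n-1}\theta\,d\theta$, using $\cos\theta\le e^{-\theta^2/2}$ and a Gaussian-tail estimate, gives
\beq
\mu(S^n\setminus A_\epsilon)\le \sqrt{\pi/8}\,e^{-(n-1)\epsilon^2/2}.
\eeq
Here one has to account for the difference between geodesic and chordal distance. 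Since $\epsilon_{\text{chord}}\le\epsilon_{\text{geo}}$, this costs at most a constant factor in the exponent.

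\emph{Step 2 (median).} Let $M$ be a median of $f$. Apply Step 1 to $A_\pm=\{\pm(f-M)\le 0\}$, each of measure at least $1/2$. The Lipschitz condition gives $\{f>M+\alpha\}\subset S^n\setminus (A_-)_{\alpha/\eta}$, and similarly for the lower tail. Hence
\beq
\textup{Prob}(|f-M|>\alpha)\le 2\sqrt{\pi/8}\,e^{-(n-1)\alpha^2/(2\eta^2)}.
\eeq

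\emph{Step 3 (median to mean).} Integrate the tail bound: $|\mathbb{E}f-M|\le\int_0^\infty \textup{Prob}(|f-M|>t)\,dt\le C\eta/\sqrt{n-1}$ for an explicit $C$. For $\alpha\ge 2C\eta/\sqrt{n-1}$ we have $\{|f-\mathbb{E}f|>\alpha\}\subset\{|f-M|>\alpha/2\}$, so Step 2 yields a bound of the form $2e^{-c'(n+1)\alpha^2/\eta^2}$. Converting $n-1$ to $n+1$ is harmless for $n\ge 2$; small $n$ is covered by the next remark. For $\alpha<2C\eta/\sqrt{n-1}$, the claimed right-hand side $2e^{-c(n+1)\alpha^2/\eta^2}$ is already at least $1$ once $c\le(\log 2)(n-1)/(4C^2(n+1))$. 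The statement is then trivial, and this is precisely why such a small $c$ is admissible.

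The main obstacle is Step 1: the isoperimetric inequality itself is a deep theorem (proved via two-point symmetrization) that we would cite rather than prove. The remaining work is careful constant bookkeeping to check that $c=(18\pi^3)^{-1}$ survives all the losses. As a fallback we could use the Gaussian route. Write $X=G/\lVert G\rVert$ with $G$ standard Gaussian in $\mathbb{R}^{n+1}$, and apply the Maurey--Pisier inequality $\textup{Prob}(|F(G)-\mathbb{E}F(G)|>t)\le 2e^{-2t^2/(\pi^2L^2)}$ to a Lipschitz extension of $f$. This produces exactly the $\pi$-powers appearing in $c$, at the cost of controlling the fluctuations of $\lVert G\rVert$ around $\sqrt{n+1}$.
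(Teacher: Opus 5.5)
The paper does not prove this lemma. It quotes it from the literature: Ledoux~\cite{Ledoux2001concentration}, with the value $(18\pi^3)^{-1}$ inherited from Hayden--Leung--Winter~\cite{Hayden2006aspects}. That $\pi$-laden constant comes from the Gaussian-type route you list as a fallback, which concentrates directly around the mean via Maurey--Pisier.

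Your isoperimetric route is the other standard proof, and your bookkeeping goes through:
\begin{itemize}
\item The chordal/geodesic issue costs nothing, since a Euclidean-Lipschitz function is geodesically Lipschitz with the same constant.
\item Integrating the median tail gives $|\mathbb{E}f-M|\le \frac{\pi}{2}\,\eta/\sqrt{n-1}$, i.e.\ $C=\pi/2$.
\item Your two regimes then give the bound for every $c\le (\ln 2)(n-1)/(\pi^2(n+1))$.
\item In particular $c=\ln 2/(3\pi^2)\approx 0.023$ works uniformly for $n\ge 2$, about an order of magnitude better than $(18\pi^3)^{-1}\approx 0.0018$.
\end{itemize}
So your route trades elementary Gaussian calculus for a deep theorem (Lévy's isoperimetric inequality). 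In return it gives a sharper constant and the median version for free.

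One loose end: your final remark does not cover $n=1$, where $\sqrt{n-1}=0$ and the isoperimetric bound is vacuous. There, and in fact for every $n\le 95$, use the trivial estimate $|f(x)-\mathbb{E}f|\le \eta\,\mathbb{E}_Y\lVert x-Y\rVert\le 2\eta$. The probability vanishes for $\alpha\ge 2\eta$. For $\alpha<2\eta$ the right-hand side exceeds $2e^{-4c(n+1)}\ge 1$ whenever $4c(n+1)\le \ln 2$, so the statement holds trivially.
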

An immediate consequence of these lemmas is that the EE of a Haar-random pure-state will, with overwhelming probability, be nearly equal to the average pure-state EE. Together with the provable fact that the average EE is close to the maximal value $\log d_A$ when $d_A \ll d_B$, one concludes that \emph{generic} pure states have near-maximal bipartite entanglement~\cite{Hayden2006aspects}.

\subsection{Number entanglement}\label{S:NE}
It is a very general observation that failure to comply to a symmetry opens up the possibility of creating correlations by destroying quantum coherences with the use of a measurement or projection to a charge sector. The nonselective measurement of a single observable $\hat N$, a measurement for which the outcome is not recorded, will result in the state $\N \rho= \sum_{N} \Pi_{N}\rho\Pi_{N}$, where $\Pi_{N}$ is the projector onto the sector of charge $N$. Clearly, a nonzero von Neumann entropy \emph{difference} $S(\N\rho) - S(\rho)>0$ implies $[\rho, \hat N]\neq 0$. Violation of the symmetry is key to the measurement-induced generation of correlations.  

This idea is applied to the problem of symmetric separability as follows. We consider separable states~\eqref{E:sep_main} with a globally conserved charge, $[\rho , \hat N]=0$, and ask whether the state satisfies symmetric separability, Eqn.~\eqref{E:symsep_main}. If not, then it should be possible to increase the state's entropy by destroying quantum coherences through measurement. As before, we treat only the case of observables of the form $\hat N_A \otimes \hat N_B$ or $\hat N_A \otimes 1_B + 1_A \otimes \hat N_B$, with their localization written $\hat N_{\textup{local}}$. The \emph{number entanglement} (NE) of $\rho$ with respect to $\hat N$ was defined in~\cite{Ma2022symmetric} as
\beq\label{E:NE_main}
\NE\rho = S(\rho|_{\hat N_A \otimes 1_B}) - S(\rho).
\eeq
By construction, it measures the increase in von Neumann entropy produced by a local measurement of $\hat N_A$. (From now on, we follow the common practice of simply writing $\hat N_A$ for $\hat N_A \otimes 1_B$.) Note that the NE is invariant under the exchange of $A$ and $B$. Also note that on pure states, $\NE\phi = S(\NA\phi)$. The next proposition shows that a nonzero NE witnesses the failure of symmetric separability. 
\begin{lemma}\label{T:Ma_main} 
Symmetric separability of $\rho$ implies $\NE{\rho}=0$.
\end{lemma}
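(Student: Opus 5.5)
The plan is to show that the local nonselective measurement of $\hat N_A$ leaves a symmetrically separable state unchanged, so that $\NA\rho=\rho$ and hence $\NE\rho = S(\rho)-S(\rho)=0$.

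\emph{Reduction to local symmetry.} Let $\rho$ be symmetrically separable for $\hat N$, i.e.\ $\rho\in\symsepn$, with decomposition $\rho=\sum_i p_i\rho_{A,i}\otimes\rho_{B,i}$ satisfying~\eqref{E:symsep_main}. By Proposition~\ref{T:localized_symsep} (equivalently Corollary~\ref{T:symsepn_sepnloc}), $\rho\in\symsepnloc$. We may therefore take a decomposition in which every product term commutes with $\hat N_A\otimes 1_B$. I expect this to be the only step with any content, and the earlier propositions supply it. If a direct argument is wanted, for $\hat N=\hat N_A\otimes 1_B+1_A\otimes\hat N_B$ the commutator splits as $[\rho_{A,i},\hat N_A]\otimes\rho_{B,i}+\rho_{A,i}\otimes[\rho_{B,i},\hat N_B]=0$. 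Taking the partial trace over $B$ gives $[\rho_{A,i},\hat N_A]=0$, because $\operatorname{Tr}[\rho_{B,i},\hat N_B]=0$ and $\operatorname{Tr}\rho_{B,i}=1$. The product case $\hat N_A\otimes\hat N_B$ is what the proposition covers.

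\emph{Invariance under the measurement.} If $[\rho_{A,i},\hat N_A]=0$, then $\rho_{A,i}$ commutes with every spectral projector $\Pi_{N}$ of $\hat N_A$. Hence
\[
\sum_N (\Pi_N\otimes 1_B)(\rho_{A,i}\otimes\rho_{B,i})(\Pi_N\otimes 1_B)=\Big(\sum_N\Pi_N\rho_{A,i}\Pi_N\Big)\otimes\rho_{B,i}=\rho_{A,i}\otimes\rho_{B,i}.
\]
Since the dephasing map $\rho\mapsto\NA\rho$ is linear, summing over $i$ with weights $p_i$ gives $\NA\rho=\rho$.

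\emph{Conclusion.} Substituting into~\eqref{E:NE_main} gives $\NE\rho=S(\NA\rho)-S(\rho)=0$.
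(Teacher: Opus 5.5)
\textbf{There is a genuine gap: the product case $\hat N=\hat N_A\otimes\hat N_B$ is never proved, and it fails in general.} Your sum-case argument (partial trace over $B$) is correct and is what the paper does. So is your dephasing step showing $\NA\rho=\rho$. For the product case you defer to Proposition~\ref{T:localized_symsep}. In the paper, however, that proposition is obtained only by ``inspection of the proof'' of this very lemma, so citing it is circular. The implication that termwise commutation with $\hat N_A\otimes\hat N_B$ forces $[\rho_{A,i},\hat N_A]=0$ is exactly the content you needed to supply, and your proposal gives no argument for it.

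Moreover, that implication is false in general. Take two qubits with $\hat N_A=\hat N_B=\Ket{1}\Bra{1}$ and $\rho=\Ket{+}\Bra{+}\otimes\Ket{0}\Bra{0}$. Then $\hat N\rho=\rho\hat N=0$, so the single-term decomposition satisfies \eqref{E:symsep_main} and $\rho$ is symmetrically separable. Yet $\NA\rho=\tfrac12 1_A\otimes\Ket{0}\Bra{0}$, so $\NE\rho=\log 2>0$. The statement therefore fails for product charges with a zero eigenvalue, including the $\{0,\dots,d-1\}$ spectra used in the paper's numerics. The paper's own proof hides this in the identity $\rho_{A,i}\otimes\rho_{B,i}=\hat N^{\dagger}(\rho_{A,i}\otimes\rho_{B,i})\hat N$. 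That identity does not follow from commutation with a Hermitian $\hat N$, and in this example its right side vanishes.

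A correct product-case argument needs an extra hypothesis, for example that $\hat N_B$ is invertible. Write the commutation as $\hat N_A\rho_{A,i}\otimes\hat N_B\rho_{B,i}=\rho_{A,i}\hat N_A\otimes\rho_{B,i}\hat N_B$.
\begin{itemize}
\item If $\hat N_A\rho_{A,i}=0$, then $\rho_{A,i}$ already commutes with $\hat N_A$.
\item Otherwise both sides are nonzero elementary tensors, since invertibility gives $\hat N_B\rho_{B,i}\neq 0$. Hence $\hat N_A\rho_{A,i}=k\,\rho_{A,i}\hat N_A$ for some scalar $k$.
\item In an eigenbasis of $\hat N_A$ with eigenvalues $\lambda_j$, the diagonal entries give $(1-k)\lambda_j(\rho_{A,i})_{jj}=0$.
\item Because $\rho_{A,i}\geq 0$ and $\hat N_A\rho_{A,i}\neq 0$, some $j$ has $\lambda_j(\rho_{A,i})_{jj}\neq 0$. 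This forces $k=1$.
\end{itemize}
With that step in place your dephasing argument finishes the proof. Without it, only the sum case is established.
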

The proof is given in the Appendix, following the original proof in~\cite{Ma2022symmetric}, and consists in showing that symmetric separability implies that $\rho$ is already block-diagonal in eigenbases for $\hat N_A \otimes 1_B$ before any measurement is performed, so $S(\rho|_{\hat N_A} ) = S(\rho)$, and no entropy is created by the measurement. Actually, a stronger result is immediately obtained by construction of the NE:
\begin{proposition}\label{T:NE_zero_on_dnloc}
$\NE\rho\equiv 0$ on $\dnloc$.
\end{proposition}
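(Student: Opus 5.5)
The plan is to check directly that the dephasing map $\rho\mapsto\NA\rho$ acts as the identity on $\dnloc$, so that the two entropies in the definition \eqref{E:NE_main} coincide. Take $\rho\in\dnloc$. By definition, $\rho$ commutes with both $\hat N_A\otimes 1_B$ and $1_A\otimes\hat N_B$; only the first of these will be used. Write the spectral decomposition $\hat N_A\otimes 1_B=\sum_{a} a\,\Pi_a$, where the $\Pi_a=P_a\otimes 1_B$ are orthogonal projectors onto the eigenspaces of $\hat N_A$, tensored with the identity on $B$. The nonselective local measurement of $\hat N_A$ then produces $\NA\rho=\sum_a \Pi_a\rho\Pi_a$.

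First I show that $[\rho,\hat N_A\otimes 1_B]=0$ implies $[\rho,\Pi_a]=0$ for every $a$. This holds because each spectral projector is a polynomial in the operator: explicitly, $\Pi_a=\prod_{b\neq a}(\hat N_A\otimes 1_B-b)/(a-b)$. Using $\sum_a\Pi_a=1$ and $\Pi_a^2=\Pi_a$, this gives
\beq
\NA\rho=\sum_a \Pi_a\rho\Pi_a=\sum_a \Pi_a\Pi_a\rho=\rho .
\eeq
Hence $S(\NA\rho)=S(\rho)$, and so $\NE\rho=0$ for every $\rho\in\dnloc$.

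There is no real obstacle here. The only point needing care is to use the definition \eqref{E:NE_main}, in which the measurement is local ($\hat N_A\otimes 1_B$) and not global. Commutation with the global $\hat N$ alone would not suffice, and this is exactly where membership in $\dnloc$, rather than in $\dn$, enters. As a remark, the same argument combined with Corollary~\ref{T:symsepn_sepnloc} recovers Lemma~\ref{T:Ma_main}: $\symsepn=\sepnloc\subset\dnloc$, so symmetric separability implies vanishing NE.
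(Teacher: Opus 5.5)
Your proof is correct and is the paper's argument: the paper's entire proof is the one line ``$\rho\in\dnloc$ implies $\NA\rho=\rho$''. You supply the justification for that step, namely that the spectral projectors are polynomials in $\hat N_A\otimes 1_B$ and therefore commute with $\rho$.
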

The previous lemma follows since $\symsepn$ is a subset of $\dnloc$. 

\subsection{Concentration of the number entanglement}\label{S:result}
In this section, we determine how much symmetry-induced entanglement is typical among separable $\hat N$-symmetric states for degenerate $\hat N$. Our goal is to derive a concentration result for the NE in the set $\sepn$. To that end, purifying mixed states will allow us to apply the Lipshitz continuity of pure-state EE, Lemma~\ref{T:hayden_main}, and Levy's concentration inequality, Lemma~\ref{T:levy_main}, to the case of mixed-state NE, Eqn.~\ref{E:NE_main}. The results will rely on the construction of a probability measure $\mu$ on $\sepn$, and the partition of $\sepn$ into spherical shells equipped with the Haar measure. The construction depends on the realization of a manifold of purifications for $\sepn$, the \emph{purifying manifold} $\chi_0$, and a map $\tau\colon \chi_0 \to \sepn$. (See Figure~\ref{F:mappings}.) The purifying manifold is a smooth manifold within $(\mathscr{H}_A \otimes \mathscr{H}_B)^{\otimes 2}$, whose dimension $D$, also denoted $\text{p-dim }\sepn$, is called the \emph{purification dimension}. The purifying manifold $\chi_0$, which is assumed triangulable for simplicity~\cite{Boothby1975introduction}, is covered by a finite collection of domains of integration $\{\mathcal{B}_i\}_{i=1,...,M}$, where each $\mathcal{B}_i$ is diffeomorphic to a closed $D$-dimensional real ball. Each $\mathcal{B}_i$, with its pulled-back Euclidean measure, can in turn be split into spherical shells $\mathcal{S}_{ik}$ where the restricted Euclidean measure gives rise to the Haar measure. The measure $\tilde\mu$ on $\chi_0$ is finally pushed forward to $\sepn$,
\beq
\mu (S) = \tilde\mu(\cup_{\rho\in S} \;\tau^{-1}(\rho)).
\eeq
\begin{figure}
\includegraphics{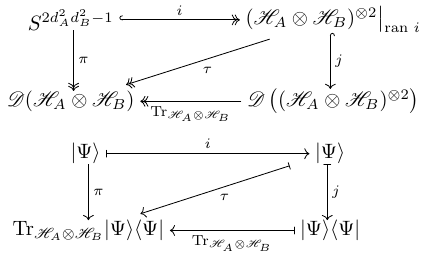}
\caption{\textbf{Sets and maps for the construction of the purifying manifold.} $i$ is a $C^{\infty}$-diffeomorphism, $j$ is $C^{\infty}$ injective, while $\pi, \tau$, and $\text{Tr}_{\mathscr{H}_A \otimes \mathscr{H}_B}$ are $C^{\infty}$ surjective, but not injective. $(\cdot)\big\lvert_{\text{ran } i}$ means restriction to the range of $i$. Rigorously speaking, the map $i$ is not from the sphere but from the complex projective space. This subtlety makes no difference in the argument.}\label{F:mappings}
\end{figure}
As before, we consider observables of the form $\hat N_A \otimes \hat N_B$ or $\hat N_A \otimes 1_B + 1_A \otimes \hat N_B$. The next proposition, an extension of Lemma~\ref{T:hayden_main}, shows that the NE of (purified) densities is of bounded gradient on the purifying manifold:
\begin{proposition}\label{T:lipschitz_main}
When $\mathscr{H}_A\otimes\mathscr{H}_B$ has dimension $d_A d_B\geq 3$, the NE is Lipshitz continuous on $\mathscr{D}$, i.e.
\beq
|\NE\sigma - \NE\rho|\leq \eta \big\lVert \Ket{\pu{\sigma}}- \Ket{\pu{\rho}}\big\rVert_2,
\eeq
with $\eta = 4\sqrt{2}\log (d_A d_B)$, and where $\Ket{\pu{\sigma}}, \Ket{\pu{\rho}}$ are any purifications of $\sigma, \rho$ on a common space. Specializing to $\sepn$ and its purifying manifold, we have : 
\beq
|\NE\sigma - \NE\rho|\leq \eta \big\lVert \Ket{\pu{\sigma}}- \Ket{\pu{\rho}}\big\rVert_2^{(D)},
\eeq
with $D=\textup{p-dim }\sepn$, $D$-dimensional Euclidean norm $\lVert \cdot\rVert_2^{(D)}$, and $\eta = 4\sqrt{2}\log D$. Also, $\Ket{\pu{\sigma}},\Ket{\pu{\rho}}$ are any purifications of $\sigma,\rho$ in a common domain of integration $\mathcal{B}_i$.
\end{proposition}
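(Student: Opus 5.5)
The plan is to write the NE as a difference of two von Neumann entropies, and to realize each of them as a pure-state entanglement entropy of a suitable purification, so that Lemma~\ref{T:hayden_main} applies to each term. Let $R$ be a purifying system with $\Ket{\pu\rho}\in\mathscr{H}_A\otimes\mathscr{H}_B\otimes\mathscr{H}_R$. Then $S(\rho)=S(\text{Tr}_R\,\pu\rho)$ is the entanglement entropy of $\Ket{\pu\rho}$ across the cut $AB|R$. The local dimension on the $AB$ side is $d_Ad_B\geq 3$, so Lemma~\ref{T:hayden_main} gives
\beq
|S(\sigma)-S(\rho)|\leq \sqrt{8}\log(d_Ad_B)\,\big\lVert \Ket{\pu\sigma}-\Ket{\pu\rho}\big\rVert_2 .
\eeq
This first step is immediate.

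For the measured term $S(\NA\rho)$, I will use a Stinespring-type purification of the dephasing. Write $\hat N_A=\sum_N N\,\Pi_N$ and adjoin a register $\mathscr{H}_E$ with orthonormal basis $\{\Ket{N}\}$. Define the isometry
\beq
V=\sum_N \Pi_N\otimes 1_B\otimes 1_R\otimes\Ket{N}_E .
\eeq
Then $V\Ket{\pu\rho}$ is a pure state on $ABRE$ whose reduced density on $AB$ is exactly $\sum_N\Pi_N\rho\Pi_N=\NA\rho$. Hence $S(\NA\rho)$ is the entanglement entropy of $V\Ket{\pu\rho}$across the cut $AB|RE$, again with local dimension $d_Ad_B$. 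Since $V$ is an isometry, $\lVert V\Ket{\pu\sigma}-V\Ket{\pu\rho}\rVert_2=\lVert\Ket{\pu\sigma}-\Ket{\pu\rho}\rVert_2$. Lemma~\ref{T:hayden_main} therefore yields the same Lipschitz bound for the measured term. Combining the two bounds with the triangle inequality gives $\eta=2\sqrt8\log(d_Ad_B)=4\sqrt2\log(d_Ad_B)$. For Lemma~\ref{T:hayden_main} we may take any purifications on a common space, padding the reference system with zeros if necessary, since padding preserves norms.

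For the specialization to $\sepn$, restrict to purifications lying in a single domain $\mathcal{B}_i$ of the purifying manifold $\chi_0\subset(\mathscr{H}_A\otimes\mathscr{H}_B)^{\otimes 2}$. The bound from the previous steps holds for the ambient Euclidean norm. The question is how it relates to the $D$-dimensional Euclidean norm $\lVert\cdot\rVert_2^{(D)}$ obtained by pulling back the coordinates of $\mathcal{B}_i$. The plan is to take the coordinates on $\mathcal{B}_i$ to be the restriction of ambient real coordinates, i.e.\ an embedded chart where the chord distance is the ambient one. The ambient estimate then transfers directly, and the constant is rewritten in terms of $D$. Since $D\leq 2(d_Ad_B)^2\geq d_Ad_B$, we have $\log(d_Ad_B)\leq\log D$, so $\eta=4\sqrt2\log D$ is a valid (weaker) constant.

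The main obstacle is this last step. One must justify that the diffeomorphism of $\mathcal{B}_i$ with a ball does not distort distances beyond the ambient norm, or else absorb a bounded distortion constant; compactness of $\mathcal{B}_i$ and smoothness of the charts would give such a constant. I would first try to choose $\mathcal{B}_i$ as graphs over coordinate planes, so that the distortion is at most one in the needed direction.
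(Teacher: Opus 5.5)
Your proof of the ambient bound is correct, but it takes a different route from the paper's. The paper works in four steps:
\begin{itemize}
\item it fixes purifications of $\rho$ and $\NA\rho$, and infimizes over purifications of $\sigma$ and of $\NA\sigma$;
\item it converts both infima into Bures distances with Uhlmann's theorem;
\item it uses monotonicity of the fidelity under the dephasing channel to get $D_B(\NA\sigma,\NA\rho)\leq D_B(\sigma,\rho)$;
\item it bounds $D_B(\sigma,\rho)\leq\lVert\Ket{\pu\sigma}-\Ket{\pu\rho}\rVert_2$.
\end{itemize}
You instead push the given purifications through the Stinespring isometry $V$ of the dephasing. This produces purifications of $\NA\sigma$ and $\NA\rho$ at exactly the same distance, so Lemma~\ref{T:hayden_main} applies to both entropy terms with no optimization. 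Your argument is shorter and more elementary; it is in effect the Stinespring proof of fidelity monotonicity, specialized to this channel, and it avoids Uhlmann's theorem. What the paper's route buys is the Lipschitz bound in Bures distance as a by-product, which is later used to show that $\{\rho:\NE\rho>0\}$ is open. You recover it from your inequality by infimizing over $\Ket{\pu\sigma}$ and invoking Uhlmann once.

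For the specialization to $\sepn$, two steps need repair.

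\emph{The bound $D\geq d_Ad_B$.} Relaxing $\log(d_Ad_B)$ to $\log D$ requires this \emph{lower} bound, and the chain ``$D\leq 2(d_Ad_B)^2\geq d_Ad_B$'' does not give it. One way to get it: the maximally mixed state lies in $\sepn$ with a ball of separable states around it, so near it $\sepn$ coincides with $\dn$. Hence, near the fibre over this state, $\chi$ is a smooth manifold containing that whole fibre. The fibre consists of the maximally entangled states, a copy of $U(d_Ad_B)$ modulo phase, so $D\geq(d_Ad_B)^2-1\geq d_Ad_B$. The paper merely asserts this inequality.

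\emph{The chart distortion.} Your fallback of taking the $\mathcal{B}_i$ as graphs over coordinate planes goes the wrong way. To transfer $|\NE\sigma-\NE\rho|\leq\eta\lVert\cdot\rVert_2$ to the chart norm, the chart $B^D(R)\to\mathcal{B}_i$ must be non-expanding. For $u\mapsto(u,g(u))$, however, $\lVert(u,g(u))-(u',g(u'))\rVert_2\geq\lVert u-u'\rVert_2$, so you only get the constant $\eta\sqrt{1+\mathrm{Lip}(g)^2}$.

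The paper never confronts this distortion. It declares $\chi_0$ ``locally a (linear) purification space'', i.e.\ it identifies $\lVert\cdot\rVert_2^{(D)}$ with the restricted ambient distance, which is your first option. Under that reading the specialization follows at once from the ambient bound and $D\geq d_Ad_B$. Under the genuine chart-norm reading, which is the one needed to apply L\'evy's lemma on the pulled-back shells, the constant $4\sqrt2\log D$ holds only up to the chart's Lipschitz factor. Neither you nor the paper controls that factor.
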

It then follows from Lemma~\ref{T:levy_main} that
\begin{proposition}\label{T:conc_NE_conditional_main}
Separable states that are symmetric for $\hat N$ present a strong concentration of NE around the mean $m_{ik}$ in each spherical shell of the purifying manifold: for $\rho\in \tau(\mathcal{S}_{ik})$,
\beq
\textup{Prob}_{\mu}(| \NE\rho - m_{ik} | > \alpha) \leq 2v_{ik}e^{-c D \alpha^2/\eta^2},
\eeq
where $m_{ik}=v_{ik}^{-1}\int_{\tau(\mathcal{S}_{ik})}d\mu(\sigma)\NE\sigma$, and $D$ is the purification dimension.
\end{proposition}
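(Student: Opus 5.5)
The plan is to apply Lévy's Lemma (Lemma~\ref{T:levy_main}) shell by shell, using the Lipschitz bound of Proposition~\ref{T:lipschitz_main} as its input, and then to transport the probability estimate from the purifying manifold to $\sepn$ through the push-forward measure $\mu$.

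\emph{Step 1: the function on the shell.} Fix a domain of integration $\mathcal{B}_i$ and a spherical shell $\mathcal{S}_{ik}\subset\mathcal{B}_i$. Through the diffeomorphism of $\mathcal{B}_i$ with a closed $D$-ball, $\mathcal{S}_{ik}$ is a round sphere of dimension $D-1$, and the restricted Euclidean measure on it is the Haar measure. Define
\[
f_{ik}\colon\mathcal{S}_{ik}\to\mathbb{R},\qquad f_{ik}(x)=\NE{\tau(x)}.
\]
Every $x\in\mathcal{S}_{ik}$ is a purification of $\tau(x)\in\sepn$, and all such purifications lie in the common domain $\mathcal{B}_i$. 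The second part of Proposition~\ref{T:lipschitz_main} therefore gives
\[
|f_{ik}(x)-f_{ik}(y)|\le\eta\lVert x-y\rVert_2^{(D)},\qquad \eta=4\sqrt2\log D.
\]

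\emph{Step 2: Lévy on a sphere of general radius.} Lemma~\ref{T:levy_main} is stated for the unit sphere. If the shell has radius $r_k$, set $g(u)=f_{ik}(r_k u)$ on the unit sphere $S^{D-1}$. Then $g$ has Lipschitz constant $r_k\eta$. Since the shells sit inside a bounded ball, one can normalize so that $r_k\le 1$ (or absorb the bound on $r_k$ into $c$), and the constant $\eta$ survives. Applying Lemma~\ref{T:levy_main} with $n+1=D$ gives
\[
\text{Prob}_{\text{Haar}}(|f_{ik}-\mathbb{E}f_{ik}|>\alpha)\le 2e^{-cD\alpha^2/\eta^2}.
\]

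\emph{Step 3: transfer to $\mu$.} By definition, $\mu$ is the push-forward of $\tilde\mu$, and $\tilde\mu$ restricted to $\mathcal{S}_{ik}$ is $v_{ik}$ times the normalized Haar measure, where $v_{ik}=\mu(\tau(\mathcal{S}_{ik}))$. The mean then reads
\[
\mathbb{E}f_{ik}=v_{ik}^{-1}\int_{\tau(\mathcal{S}_{ik})}d\mu(\sigma)\,\NE\sigma=m_{ik}.
\]
The $\mu$-measure of the event $\{\rho\in\tau(\mathcal{S}_{ik}):|\NE\rho-m_{ik}|>\alpha\}$ is at most $\tilde\mu$ of its preimage in $\mathcal{S}_{ik}$, which is $v_{ik}$ times the Haar probability from Step~2. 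This yields the stated factor $2v_{ik}e^{-cD\alpha^2/\eta^2}$.

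\emph{Main obstacle.} The delicate point is the bookkeeping in Step~3. The fibres of $\tau$ can meet several shells, so $\tau^{-1}(\tau(\mathcal{S}_{ik}))$ may be larger than $\mathcal{S}_{ik}$. The push-forward identity therefore has to be read as applying to the part of the fibre lying in $\mathcal{S}_{ik}$, with $v_{ik}$ interpreted as the $\tilde\mu$-weight of that shell.

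A related but lesser issue is that the shells have measure zero in $\mathcal{B}_i$. The weight $v_{ik}$ must accordingly be understood as a disintegration density of $\tilde\mu$ along the radial coordinate; alternatively, one may use thin shells in a discretization, on which the Haar measure is approximated uniformly.

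I would handle both issues by working directly with $\tilde\mu$ conditioned on $\mathcal{S}_{ik}$, and only at the end rewriting the probability through $\mu$.
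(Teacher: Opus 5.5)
Your proposal follows the paper's own argument. The paper also combines the Lipschitz bound \eqref{E:Lipschitz_K} on each $\mathcal{B}_i$ with L\'evy's lemma on each shell $\mathcal{S}_{ik}$ (with its pulled-back Haar measure) to get $\tilde\mu(\{\Ket{\Psi}\in\mathcal{S}_{ik}\colon|\NE{\tau(\Ket{\Psi})}-m_{ik}|>\alpha\})\le 2v_{ik}e^{-cD\alpha^2/\eta^2}$, and it then simply asserts the same bound for the push-forward $\mu$, without discussing shell radii or fibres of $\tau$ that meet several shells. Two of your bookkeeping remarks need adjusting, though neither goes beyond what the paper tacitly assumes: rescaling the chart cannot give $r_k\le 1$ for free, because it rescales the chart Lipschitz constant inversely and L\'evy's bound depends only on $r_k\eta$; and in Step~3 the inequality for the full preimage of the event set $S$ runs the other way, $\mu(S)=\tilde\mu(\tau^{-1}(S))\ge\tilde\mu(\tau^{-1}(S)\cap\mathcal{S}_{ik})$, so the bound holds only in the shell-restricted reading you adopt at the end.
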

The NE strongly concentrates around its local mean $m_{ik}$ on each shell. Due to (Lipshitz) continuity, the NE must agree where shells meet, forcing local means $m_{ik}$ to coalesce around the global mean $m$. We thus obtain
\begin{proposition}\label{T:genN_main}
For an observable $\hat N$ with large enough purification dimension $D$, the NE on $\sepn$ strongly concentrates around its mean value:
\beq
\textup{Prob}_{\mu}(| \NE\rho - m | > \alpha) \lesssim O(e^{-c D \alpha^2 /\eta^2}),
\eeq
where $m$ is the (strictly positive) mean NE of $\hat N$ averaged over all states in $\sepn$, $c$ is a positive constant that may be chosen as $c=(18\pi^3)^{-1}$, and $\eta = 4\sqrt{2}\log D$.
\end{proposition}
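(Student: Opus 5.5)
The plan is to assemble the global concentration bound from the shell-wise bounds of Proposition~\ref{T:conc_NE_conditional_main}, and then to argue separately that the mean $m$ is strictly positive.

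\textbf{Step 1: gluing the shell-wise bounds.} Write the measure $\mu$ as a convex combination of the normalized shell measures on $\tau(\mathcal{S}_{ik})$, with weights $v_{ik}$. Then
\beq
\textup{Prob}_\mu(|\NE\rho - m|>\alpha)\le \sum_{i,k} v_{ik}\,\textup{Prob}_{ik}\bigl(|\NE\rho - m_{ik}|>\alpha - |m_{ik}-m|\bigr).
\eeq
If every local mean satisfies $|m_{ik}-m|\le\alpha/2$, each term is bounded by Proposition~\ref{T:conc_NE_conditional_main} with $\alpha/2$. Since $\sum v_{ik}=1$, this gives a bound $2e^{-cD\alpha^2/4\eta^2}$. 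Constants in the exponent are absorbed into the $\lesssim O(\cdot)$ of the statement.

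\textbf{Step 2: showing the local means coalesce.} This is the main obstacle. I would first use adjacency of shells. Two neighbouring shells $\mathcal{S}_{ik},\mathcal{S}_{i,k+1}$ in the same ball $\mathcal{B}_i$ are radial rescalings of one another, at Euclidean distance of order the shell thickness $\delta$. Balls $\mathcal{B}_i,\mathcal{B}_j$ in the triangulated manifold $\chi_0$ overlap on open sets.

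By Proposition~\ref{T:lipschitz_main}, the NE changes by at most $\eta\delta$ between corresponding points. Concentration on each shell says that most points have NE within $\alpha$ of the respective local mean. Choose a common point where both concentrations hold; such a point exists because the exceptional sets have exponentially small measure. Then
\beq
|m_{ik}-m_{i,k+1}|\le 2\alpha+\eta\delta .
\eeq
Chaining over the finite cover yields $|m_{ik}-m_{jl}|\lesssim M(\alpha+\eta\delta)$. Taking $\delta\to0$ and working at scale $\alpha/M$ gives $|m_{ik}-m|\le\alpha/2$ up to bounded factors.

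The delicate points are twofold. First, the chaining constant $M$ and the number of shells must not grow with $D$. Second, the shells near the ball centres have small dimension. I would handle the latter by discarding a small-radius core of total measure $O(e^{-cD})$, which is possible because Euclidean volume in high dimension concentrates near the boundary radius.

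\textbf{Step 3: positivity of $m$.} The NE is nonnegative, since $\rho|_{\hat N_A}$ is a pinching of $\rho$ and pinching does not decrease von Neumann entropy. It is also continuous, by Proposition~\ref{T:lipschitz_main}.

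For degenerate $\hat N$, pick a separable $\hat N$-symmetric state lying outside $\dnloc$. An example is a product of pure states $\phi_A\otimes\phi_B$ that is coherent across $\hat N_A$ eigenspaces within a single global charge sector of the decomposition~\eqref{E:conv_main}. Such a state has $\NE\rho=S(\NA\rho)-S(\rho)>0$.

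By continuity, $\NE\rho>0$ on a neighbourhood of this state. That neighbourhood has positive $\mu$-measure, because $\tau$ is a smooth surjection from $\chi_0$. Hence $m>0$.
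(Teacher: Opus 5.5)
The genuine gap is your Step 2, and the whole bound rests on it.

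\emph{Within one ball.} Inside $\mathcal{B}_i$, the chain from the innermost to the outermost shell has about $R/\delta$ links, not $M$, and each link costs $2\alpha$. Letting $\delta\to0$ therefore makes the accumulated error diverge, while the Lipschitz terms add up to $\eta R$ for every $\delta$. Radial rescaling lets you compare two shells of $\mathcal{B}_i$ directly, $\lvert m_{ik}-m_{il}\rvert\le 2\alpha+\eta\lvert r_k-r_l\rvert$, but the last term cannot be removed by any continuity argument. Indeed, $x\mapsto g(\lvert x\rvert)$ with $g$ $\eta$-Lipschitz is constant on every shell and satisfies every hypothesis you use, yet its shell means spread over an interval of length $\eta R$. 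What controls this is the measure, not continuity: shells of radius below $R(1-t)$ carry at most a fraction $(1-t)^D\le e^{-Dt}$ of the ball. So you would have to discard everything outside a thin outer layer, not merely a small core. Small-radius shells are also not of ``small dimension''; every shell is a $(D-1)$-sphere.

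\emph{Across balls.} Here the argument fails outright. In the paper's triangulation the sets $\mathcal{B}_i\cap\mathcal{B}_j$ are nullsets, not open overlaps. Even for an open cover, shells from different charts meet in sets of measure zero in each shell, which the exceptional sets are free to contain, so no ``common good point'' is guaranteed. Lipschitz continuity alone then bounds the difference between means on different pieces only by $\eta$ times their separation, which is not small. Global concentration needs an isoperimetric (connectivity) property of $\chi_0$: a dumbbell-shaped $\chi_0$ carrying different NE values on its two halves satisfies all your hypotheses.

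\emph{Comparison with the paper.} The paper follows the same plan and does not prove this step either; it only \emph{contends} that the $m_{ik}$ coalesce. It does ask for less than you do. It needs an averaged condition, $\bar\epsilon=\sum_{ik}v_{ik}\lvert m-m_{ik}\rvert\to0$ with a non-divergent $\textup{Var}(\{\epsilon_{ik}\})$, fed into a sharpened Jensen inequality for $\sum_{ik}v_{ik}e^{-cD(\alpha-\epsilon_{ik})^2/\eta^2}$. Your Step 1 instead needs the uniform bound $\lvert m_{ik}-m\rvert\le\alpha/2$.

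Two smaller points.

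\emph{Step 1.} A factor $1/4$ in the exponent cannot be absorbed into an $O(\cdot)$ prefactor, since $e^{-cD\alpha^2/4\eta^2}/e^{-cD\alpha^2/\eta^2}\to\infty$. The stated exponent needs $\lvert m_{ik}-m\rvert=o(\alpha)$, which is what the paper's $\bar\epsilon\to0$ is meant to provide.

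\emph{Step 3.} Your example does not exist for $\hat N=\hat N_A\otimes1_B+1_A\otimes\hat N_B$. A pure state is $\hat N$-symmetric only if it is an eigenvector of $\hat N$. If $\phi_A$ had weight on two $\hat N_A$-eigenvalues $p\neq p'$ and $\phi_B$ had weight on some $q$, this would force $p+q=p'+q$. Hence $\phi_A\otimes\phi_B\in\dnloc$ and its NE vanishes, consistent with the paper's remark that symmetry-induced entanglement does not exist for pure states.

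Use a pinched product instead. Take eigenvectors $\hat N_A\Ket{p}=p\Ket{p}$, $\hat N_A\Ket{p'}=p'\Ket{p'}$, $\hat N_B\Ket{q}=q\Ket{q}$, $\hat N_B\Ket{q'}=q'\Ket{q'}$ with $p+q=p'+q'$ and $p\neq p'$, and write $\Ket{pq}=\Ket{p}\otimes\Ket{q}$. The $\hat N$-pinching of $\tfrac12(\Ket{p}+\Ket{p'})\otimes(\Ket{q}+\Ket{q'})$ is $\tfrac12\Ket{\chi}\Bra{\chi}+\tfrac14\Ket{pq'}\Bra{pq'}+\tfrac14\Ket{p'q}\Bra{p'q}$, with $\Ket{\chi}=(\Ket{pq}+\Ket{p'q'})/\sqrt2$. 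This equals the $\theta$-average of the projectors onto the product states $\tfrac12(\Ket{p}+e^{i\theta}\Ket{p'})\otimes(\Ket{q}+e^{-i\theta}\Ket{q'})$, so it lies in $\sepn$. Its NE is $\log4-\tfrac32\log2=\tfrac12\log2>0$.

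This also pins down the right hypothesis: some charge sector of $\hat N$ must contain two distinct $\hat N_A$-eigenvalues. Degeneracy of $\hat N$ alone is not enough. If every sector carries a single $\hat N_A$-value, every $\hat N$-symmetric state commutes with $\hat N_A$ and the NE vanishes identically on $\dn$.

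With this repair, your positivity argument (nonnegativity by pinching, continuity, positive $\mu$-measure of nonempty open sets) is the paper's argument for $m_{ik}>0$. It is also more concrete than the paper's appeal to $\symsepn$ being a nullset.
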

In the above inequality, we have dropped a term that becomes negligible at (moderately) large values of $D$. (See details in the Appendix.) Therefore, our result applies to all degenerate observables of the form $\hat N_A \otimes \hat N_B$ or $\hat N_A \otimes 1_B + 1_A \otimes \hat N_B$, except those with small purification dimension $\textup{p-dim }\sepn$, i.e. operators of lower dimension or operators embodying stringent symmetries with few symmetric states and/or symmetric states that are mostly inseparable. 

Proposition~\ref{T:genN_main} reveals the existence of a `statistical distance' between the sets $\sepn$ and $\symsepn$, in the sense that separable and symmetric states strongly cluster at a finite distance from $\symsepn$. States that are close to being symmetrically separable are statistically hard to find. The concentration sharpens overwhelmingly with increasing (purification) dimension. Numerics for a few special case of physical interest validate the finding, showing the narrowing of states around the average NE value as the total dimension of the system is increased. Fig.~\ref{F:numerics_qudits_sum} displays the case $\hat N = \hat N_A \otimes \hat N_B$ where each subsystem $A$, $B$ support a single qudit, and $N_A$, $N_B$ are nondegenerate, i.e.~single-qudit level numbers. (See Appendix, Fig.~\ref{F:numerics_qudits_prod}, for the two-qudit case  $\hat N_A \otimes 1_B + 1_A \otimes \hat N_B$, and Fig.~\ref{F:numerics_qubits_sum_prod} for multi-qubit states where $N_A$, $N_B$ each provide $\sum \sigma_z$ on subsystems.) The mean NE is observed to decrease moderately fast with increasing dimension, while standard deviation decreases exponentially fast, guaranteeing swift concentration. Low-dimensional numerics already seem to point to a difference between qudit-based and qubit-based symmetric separability (with charges as above) when total dimensions are comparable, for the mean NE is always larger in the qubit case. This is consistent with a smaller relative volume for $\symsepn$ in the qubit case, in line with our conjecture following expressions \eqref{E:conv_main}, \eqref{E:ratio_main}. Indeed, a $q$-qubit system has $O(q)$ $\sigma_z$-sectors while the qudit-based system of the same size has $O(2^{q/2})$, which thus tend to be much smaller. 

We conclude this section by mentioning that Proposition~\ref{T:genN_main} gives un upper bound on the probability distribution around the mean NE, which can be turned into an upper bound for the purification dimension $D=\text{p-dim }\sepn$ by numerically sampling the probability distribution. Our numerics indicate, however, that Proposition~\ref{T:genN_main} is very far from being saturated at lower dimensions, with an observed concentration that is tremendously faster than it would be even with a maximum purification dimension $D_{\text{max}}=2d_A^2 d_B^2$. We do not know if Proposition~\ref{T:genN_main} is closer to optimum at higher dimensions.
\begin{figure}
\includegraphics[width=0.5\textwidth]{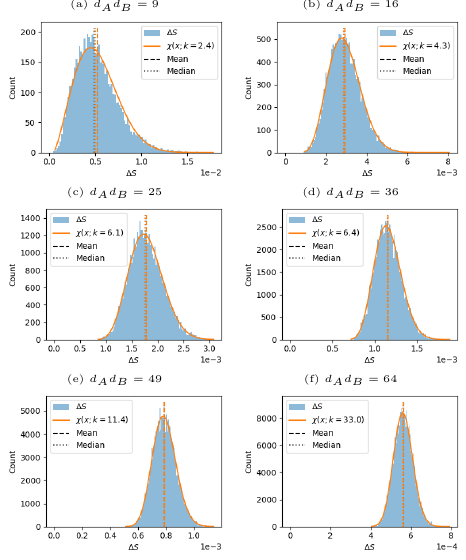}
\caption{Distribution of NE values for randomly generated 2-qudit states in $\sepn$, showing concentration around the mean as dimension increases. Each subsystem $A$, $B$ supports a single qudit with $d \in \{3,...,8\}$, and $\hat N = \hat N_A \otimes \hat N_B$ with $\hat N_A$, $\hat N_B$ possessing nondegenerate eigenvalues $\{0,1,...,d-1\}$, i.e.~single-qudit level number on each subsystem. The distributions are fitted to an empirical chi distribution of order $k$ that depends on dimension.}
\label{F:numerics_qudits_sum}
\end{figure}

\section{Discussion}\label{S:discussion}
\subsection{Symmetric separability and superselection}\label{S:SSR}
Symmetrically separable (symsep) states have measure zero among separable $\hat N$-symmetric states (for degenerate $\hat N \neq 1$). This is of special importance when the symmetry belongs to a superselection rule, especially when it arises from a lack of shared reference frame (RF)~\cite{Bartlett2007reference}. Indeed, let Jane's and John's RFs be related by the transformation $g\in G$, for a compact group $G$.\footnote{We refrain from using the familiar characters Alice and Bob to avoid any confusion with the subsystems $A$ and $B$. Jane and John will be describing the \emph{same} system $\mathcal{H}=\mathcal{H}_A \otimes \mathcal{H}_B$ from their respective points of view.} If Jane's RF is uncorrelated with John's, not only is the instantaneous value of $g$ unknown to John, but he must avoid any bias in reconstructing Jane's state $\rho$ in his own frame. He will therefore assume the state to be given, in his frame, by the unbiased twirl
\beq\label{E:twirl}
\mathcal{G}[\rho] = \int_{G} dg\; T(g)\tilde\rho\, T^{\dagger}(g),
\eeq 
where $T=T_A \otimes T_B$ is a unitary representation of $G$ on $\mathcal{H}_A \otimes \mathcal{H}_B$. (Note that Jane and John lack a common RF, but they agree on the nature of the subsystems $A$ and $B$.) Then one can show that
\beq
[\mathcal{G}[\rho], T(g)]=0\quad , \quad \forall g\in G.
\eeq
By the same token, any POVM $\{E_k \}$ in Jane's frame should be considered as an unbiased POVM $\{\mathcal{G}[E_k]\}$ in John's frame, and it follows that
\beq
[\mathcal{G}[E_k], T(g)]=0\quad , \quad \forall g\in G.
\eeq
A similar but higher-level construction can be used for general quantum operations. A completely positivity-preserving superoperator $\mathcal{E}$ in Jane's frame should be considered by John as the unbiased superoperator
\beq\label{E:superop_JohnRF}
\tilde{\mathcal{G}}[\mathcal{E}] = \int_{G} dg\; \mathcal{T}(g)\circ \mathcal{E} \circ \mathcal{T}(g^{-1}),
\eeq 
where $\mathcal{T}(g)[\rho] = T(g)\rho T^{\dagger}(g)$ is a unitary representation of $G$ on $\mathscr{D}(\mathcal{H}_A \otimes \mathcal{H}_B)$. Then one finds
\beq\label{E:superop_comm}
[\tilde{\mathcal{G}}[\mathcal{E}], \mathcal{T}(g)]=0\quad , \quad \forall g\in G.
\eeq
We refer to~\cite{Bartlett2007reference} for the details of these constructions. It is now manifest that \emph{relative to John's RF}, all preparations, measurements, and operations that Jane can implement commute with unitary representations of $G$. In other words, all states, measurements, and operations act on a direct sum $\mathcal{H}_A \otimes \mathcal{H}_B=\oplus_n \mathcal{H}_n$, where each sector $\mathcal{H}_n$ carries an inequivalent unitary representation of $G$ with charge $n$. This situation is more commonly known as a \emph{superselection rule} (SSR). (Here it results from the lack of a shared RF.\footnote{Any superselection rule associated with unitary representations of a compact group results from the lack of an appropriate RF.~\cite{Bartlett2007reference}})

Now assume that Jane and John lack a common phase (e.g.~a shared atomic clock), and let the associated unitary representation be $T(g)=e^{ig\hat N}$, where $\hat N = \hat N_A \otimes 1_B + 1_A \otimes \hat N_B$.  Notice that $T(g)=T_A(g)\otimes T_B(g)$. Any state $\rho$ that Jane prepares commutes with $\hat N$ in John's frame. In fact, the twirling operation $\mathcal{G}$, Eqn.~\eqref{E:twirl}, is the unique \emph{retraction}\footnote{A retraction of $X$ onto $Y\subset X$ is a map $r\colon X \to Y$ such that $r(y)=y$ for all $y\in Y$.} of $\mathscr{D}$ onto $\dn$. Because symmetric states are the states fixed by $\mathcal{G}$, they are the ones that can be communicated without a shared RF, i.e.~\emph{fungible} states. In particular, $\sepn$ is pointwise fixed by $\mathcal{G}$, so these states are the separable states that can be communicated without a shared RF, i.e. fungible separable states. From John's point of view, however, these states are only formally separable since they are \emph{symmetry-induced} entangled with probability one --- and possess an NE close to the nonzero average NE with overwhelming probability. 
   
The truly local states, according to John, are the symsep states. These are the states he can produce, following instructions from Jane, out of \emph{fungible} product states and \emph{fungible} LOCCs. Indeed, let them start with the fungible (i.e.~symmetric) product state $\rho_A \otimes\rho_B \in \dn$, on which they agree: $\mathcal{G}[\rho_A \otimes\rho_B]=\rho_A \otimes\rho_B$. It is easy to verify that this state is actually in $\symsepn$. Let them apply any number of fungible LOCC rounds on it. A simple class of fungible LOCCs is the class of symmetric LOCCs, defined as follows. Instead of LOCCs we consider the strictly larger set $\mathsf{SEP}$ of \emph{separable} operations, having product-form Kraus operators,
\beq\label{E:SEP_ops}
\mathcal{E}(\rho) = \sum_j (K_A^j \otimes K_B^j)\rho (K_A^j \otimes K_B^j)^{\dagger},
\eeq
with $\mathcal{E}(1) = 1$. Separable operations have the defining property of sending separable states to separable states, and it is known that $\overline{\mathsf{LOCC}_{\mathbb{N}}}\subsetneq\mathsf{SEP} $~\cite{Macieszczak2019coherence}. We say that $\mathcal{E}\in\mathsf{SEP}_{\hat N}$ if $\mathcal{E}\in\mathsf{SEP}$ and   
\beq
[K_A^j \otimes K_B^j, \hat N]=0
\eeq 
for all $j$. (Symmetric LOCCs form the subset $\mathsf{LOCC}\cap\mathsf{SEP}_{\hat N}$.) When $\mathcal{E}\in\mathsf{SEP}_{\hat N}$ it is straightforward to show that it is fungible,
\beq
\tilde{\mathcal{G}}[\mathcal{E}]=\mathcal{E}.
\eeq
Also $\mathcal{E}(\rho_A \otimes\rho_B)$, or equivalently $\tilde{\mathcal{G}}[\mathcal{E}](\mathcal{G}[\rho_A \otimes\rho_B])$, is in $\sep$. It was shown in~\cite{Macieszczak2019coherence, Ma2022symmetric} that the NE does not increase under operations from $\mathsf{SEP}_{\hat N}$. Since $\NE{\rho_A \otimes\rho_B} = 0$ it follows that $\NE{\mathcal{E}(\rho_A \otimes\rho_B)} = 0$, so $\mathcal{E}(\rho_A \otimes\rho_B)\in\dnloc$. Being separable, we find that $\mathcal{E}(\rho_A \otimes\rho_B)$ is in $\symsepn$. This shows that symsep states are the truly local states according to John, those states he can produce following instructions from Jane out of fungible product states and fungible LOCCs. Equivalently, non-symsep states cannot be produced by John using only fungible product states and fungible LOCCs. If Jane stands for John's complement (a.k.a.~Nature), then in the presence of a SSR (or in the absence of an absolute RF) John should identify the symsep states as the only genuinely unentangled states of Nature. Let us also mention that $\symsepn$ is closed under the action of $\mathsf{SEP}_{\hat N}$. 

We expect that most unentangled states in Jane's frame ($\rho\in\sep$) will be categorized as entangled in John's, since $\symsepn$ is a nullset of $\mathcal{G}[\sep]$. It is interesting to turn the tables, and ask instead if Jane can find entangled (and necessarily nonsymmetric) states $\rho$ that are identified as unentangled by John, i.e.~states $\rho\in \mathcal{G}^{-1}[\symsepn]\backslash \sep$. This is answered positively, and examples are actually easy to find. Consider the 2-qubit pure state $\ket{\Psi}= a\ket{00} + \sqrt{1-|a|^2}\ket{11}$, $|a|\in (0,1)$, and $\rho = \ket{\Psi}\bra{\Psi}$, which is entangled. Let $\hat N_A = \hat N_B = \left(\begin{smallmatrix}0&\\&1\end{smallmatrix}\right)$ be the local number of ones in the computational basis, and 
\beq
\hat N = \hat N_A \otimes 1_B + 1_A \otimes \hat N_B = \left(\begin{smallmatrix}0&&&\\&1&&\\&&1&\\&&&2\end{smallmatrix}\right). 
\eeq
Then $\rho|_{\hat N} = |a|^2 \ket{0}\bra{0}\otimes \ket{0}\bra{0} + (1-|a|^2)\ket{1}\bra{1}\otimes \ket{1}\bra{1}$, which is symsep. The reason the state $\rho$ is disentangled by twirling is that the coherences between $\ket{00}$ and $\ket{11}$ are destroyed by projecting to the charge sectors of $\hat N$, which do not mix these states (charge 0 and 2, respectively). Genuine multipartite entanglement (GME) can also be destroyed by twirling. Consider a 3-qubit system with one qubit in $A$ and two qubits in $B$. Let again use the local number of ones,
\beq
\begin{aligned}
\hat N_A &= \left(\begin{smallmatrix}0&\\&1\end{smallmatrix}\right), \hat N_B= \left(\begin{smallmatrix}0&&&\\&1&&\\&&1&\\&&&2\end{smallmatrix}\right),\\
\hat N &= \left(\begin{smallmatrix}0&&&&&&&\\&1&&&&&&\\&&1&&&&&\\&&&2&&&&\\&&&&1&&&\\
&&&&&2&&\\&&&&&&2&\\&&&&&&&3\end{smallmatrix}\right).
\end{aligned}
\eeq
Because $\hat N$ does not have a common charge sector for $\ket{000}$ and $\ket{111}$, twirling by $\hat N$ will disentangle the GHZ-type states $\ket{\psi_{\text{GHZ}}}=a\ket{000}+\sqrt{1-|a|^2}\ket{111}$. It will not disentangle W-type states $\ket{\psi_{\text{W}}}=a\ket{001}+b\ket{010}+\sqrt{1-|a|^2 - |b|^2}\ket{100}$, since $\hat N$ has a common charge sector for these charge-1 kets, so $\rho_{\text{W}}|_{\hat N} = \rho_{\text{W}}$, where of course $\rho_{\text{W}} = \ket{\psi_{\text{W}}}\bra{\psi_{\text{W}}}$. A minor change is sufficient to disentangle W-states as well, however. Use for instance $\hat N'_A = \left(\begin{smallmatrix}0&\\&3\end{smallmatrix}\right)$. Then the charge sectors of $\hat N$ are modified to
\beq
\hat N' = \left(\begin{smallmatrix}0&&&&&&&\\&1&&&&&&\\&&1&&&&&\\&&&2&&&&\\&&&&3&&&\\
&&&&&4&&\\&&&&&&4&\\&&&&&&&5\end{smallmatrix}\right),
\eeq
decoupling $\ket{001}$ and $\ket{010}$ from $\ket{100}$. The result of twirling on a W-state is symsep:
\beq
\rho_{\text{W}}|_{\hat N'} = \ket{0}\bra{0}\otimes \ket{\Phi_{ab}}\bra{\Phi_{ab}}+
\xi_{ab}\ket{1}\bra{1}\otimes \ket{00}\bra{00},
\eeq
with $\ket{\Phi_{ab}}=a\ket{01}+b\ket{10}$ and $\xi_{ab}=\sqrt{1-|a|^2 - |b|^2}$. 

These simple examples are \emph{ad hoc}. For more general entangled states $\rho$, can we find the operators $\hat N$ that destroy entanglement by twirling so that $\mathcal{G}_{\hat N}[\rho]\in\symsepn$? Can this approach be used to further characterize the structure of multiparty entangled states? And does there exist \emph{symmetry-agnostic} entanglement, resisting twirling for \emph{all} $\hat N$, and how common is it? We leave these questions for future work.

\subsection{Symmetry-induced entanglement as resource}\label{S:symsep_vs_computation}
Symsep states and symmetric LOCCs can be considered as free states and operations in certain contexts, especially in the absence of a shared reference frame where they represent fungible information that can be created locally and communicated meaningfully with no additional resource (e.g.~a clock or a system of axes). In this context, entanglement will correspond to a (nonlocal, nonfungible) resource, and will allow for the realization of tasks beyond the capabilities of free states alone, e.g.~local distinguishability of quantum states, and teleportation~\cite{Verstraete2003quantum, Schuh2004nonlocal}. Regular entanglement ($\rho\in\dn \backslash \sepn$) and symmetry-induced entanglement ($\rho\in\sepn \backslash \symsepn$) both represent resources for these tasks, although with distinct behavior. Symmetry-induced entanglement is likely to be more robust than regular entanglement, however, since $\symsepn$ is a nullset in $\sepn$ which in turn has positive volume in $\dn$~\cite{Parez2024fate}.  

Surely, not all resources are equally useful for all tasks. It is well-known that most quantum states are \emph{too entangled} to be useful for quantum speedup in measurement-based quantum computing~\cite{Bremner2009random, Gross2009quantum}. This is explained by the fact that correlations in highly entangled states are well approximated stochastically, so these states cannot solve problems in the complexity class $\mathsf{BQP}\setminus\mathsf{BPP}$. In~\cite{Gross2009quantum} for instance, it was proven that an $n$-qubit pure state $\Ket{\Psi_n}$ cannot provide superpolynomial speedup if $E_g (\Ket{\Psi_n})=n-O(\log n)$, where $E_g$ is the geometric measure. A concentration result is then obtained for $n \geq 11$, 
\beq
\text{Prob}\big(E_g (\Ket{\Psi_n})<n-\Omega(\log n)\big)<e^{-n^2},
\eeq
showing that the fraction of universal resources among $n$-qubit pure states is less than $e^{-n^2}$. 

The phenomenon of symmetry-induced entanglement does not exist for pure states. For mixed states however, it is of practical interest to determine how symmetry-induced entanglement can be used as a resource for computation, and whether states with symmetry-induced entanglement generically have too little or too much of it to be useful. Because the NE is a symmetry-induced entanglement monotone~\cite{Ma2022symmetric}, Proposition~\ref{T:genN_main} implies that generic states in $\sepn$ possess a nonzero amount of symmetry-induced entanglement close to average with overwhelming likeliness. Numerical work suggests that this average amount is decreasing with increasing system dimension. See Figs.~\ref{F:numerics_qudits_sum},~\ref{F:numerics_qudits_prod},~\ref{F:numerics_qubits_sum_prod}.

\subsection{Multiparty entanglement}\label{S:multiparty}
When we consider multipartite systems with conserved total charge $\hat N = \sum_i \hat N_i$, we should adapt the usual definitions, and consider that a state is \emph{symmetrically partially separable} if it is symmetrically separable with respect to some fixed bipartition, and \emph{symmetrically biseparable} if a convex decomposition $\rho=\sum_i p_i \rho_i$ exists where each $\rho_i$ is symmetrically partially separable (with respect to a bipartition that may depend on $i$). A state will contain symmetry-induced \emph{genuine multipartite entanglement} (GME)~\cite{Seevinck2001sufficent} if it is not symmetrically biseparable. The proofs of our propositions can be partly generalized to those cases, as we now explain.

\emph{Partial and full separability.} Consider a system $\Omega$ with $K$ subsystems. Then $\rho$ is symmetrically partially separable over the fixed bipartition $A \cup B = \Omega$ if $\rho \in \symsepn (\mathscr{H}_{A}\otimes\mathscr{H}_{B})$. This is exactly the case treated already, and all our results apply word for word. The case of symmetrically fully separable states $\rho = \sum_i p_i \rho_1 \otimes \cdots \otimes \rho_K$, with termwise symmetry, is also easily covered. We immediately find that $\symsepn = \sepnloc$ (with an obvious redefinition of $\hat N_{\textup{local}}$), so that this set has zero measure in $\sepn$ when $\hat N$ is degenerate. The NE vanishes identically on $\dnloc$, and concentrates around its mean value on $\sepn$.  

\emph{Biseparability.} To treat the case of symmetrically biseparable states of system $\Omega$, it is convenient to define them formally over the space
\beq
\mathscr{H}_{\textup{bipartitions}}=\bigoplus_i \mathscr{H}_{A_i}\otimes\mathscr{H}_{B_i},
\eeq
where $\{(A_i , B_i)\}$ is the set of all bipartitions of $\Omega$, $A_i \cup B_i = \Omega$ for all $i$. A state $\rho$ is $\hat N$-symmetric and biseparable, $\rho\in\bisepn$, if
\beq
\rho \in \textup{conv}\bigcup_{i} \sepn (\mathscr{H}_{A_i}\otimes\mathscr{H}_{B_i}).
\eeq
A state is symmetrically biseparable, $\rho\in\symbisepn$, if
\beq
\rho \in  \textup{conv}\bigcup_{i} \symsepnloci (\mathscr{H}_{A_i}\otimes\mathscr{H}_{B_i}),
\eeq
where $\hat N_{i,\textup{local}}$ localizes $\hat N$ on the bipartition $(A_i , B_i)$. By the same argument as before, $\symsepnloci$ is a nullset in $\sepn$. Even if we lack a monotone for symmetric biseparability (that is, a generalization of the NE), our results already confirm concentration within each sector $\sepn (\mathscr{H}_{A_i}\otimes\mathscr{H}_{B_i})$. However, we \emph{cannot} conclude that $\symbisepn$ is of measure zero in $\bisepn$, because sectors might be nonorthogonal. In other words, a randomly chosen state in $\bisepn$ (i.e. symmetric and biseparable) might possess \emph{symmetry-induced} GME with probability less than one. The probability will depend on $\hat N$, more specifically on the degree of orthogonality of the sectors $\sepn (\mathscr{H}_{A_i}\otimes\mathscr{H}_{B_i})$, and their dimensions. See Eqn.~\ref{E:conv_main} and its explanation.

Although incomplete, these results are fully compatible, if somewhat complementary, to the fate of entanglement identified in~\cite{Parez2024fate} for generalized quantum evolutions to equilibrium (along temperature, time, and spatial separation) with and without the presence of superselection. By imposing extremely severe constraints on the possibilities of states to evolve towards more local, separated forms, symmetries delay the partial or complete disappearance of entanglement (e.g.~sudden death). Symmetry-constrained entanglement is more robust than its unconstrained counterpart. It is compelling to study symmetry-induced entanglement as a nonfungible resource for multiparty tasks~\cite{Verstraete2003quantum, Schuh2004nonlocal, Schuh2004quantum}. Is it possible, for instance, to extend to the symmetry-induced case the fact that GME-activatable states coincide with partially separable ones~\cite{Palazuelos2022genuine}? Another question beyond the scope of this article is the extent to which the decisive role of GME in quantum metrology~\cite{Hyllus2012Fisher} and quantum cryptography~\cite{Das2021universal} conveys to symmetry-induced GME. Our results might also help characterize the symmetries for which the set of symmetric biseparable states is so thin as to essentially preclude the sudden death of symmetry-constrained GME (e.g.~when the sectors $\sepn (\mathscr{H}_{A_i}\otimes\mathscr{H}_{B_i})$ are orthogonal). Systems would carry an extremely robust form of GME in the present of such symmetries. 

In a different vein, a duality between separability and purity was numerically detected in~\cite{Zyczkowski1998volume1}, and conjectured to hold in general. Entanglement is typical for pure states (pure separable states form a nullset~\cite{Popescu1994bell, Popescu1995bell}, and generic pure states are highly entangled~\cite{Hayden2006aspects}), while separability is more and more common as purity is lowered, becoming necessary below a certain threshold (in dimension $d$, $\text{Tr}(\rho^2)<1/(d-1)$ implies separability~\cite{Gurvits2002largest}). It would be interesting to determine if a relationship holds, within the subset $\sepn$, between purity and symmetric separability, given that purities 1 and $1/d$ both ensure symmetric separability in $\sepn$.

\subsection{Assumptions and scope}\label{S:scope}
Here we discuss some assumptions that were made, and the corresponding scope of our results.

\emph{The purifying manifold.} When using the purifying manifold of $\hat N$ in the proofs, we are assuming that it contains purifications for most states in $\sepn$. That is, we assume that lower-dimensional Euclidean pieces of $\chi$ --- see Appendix~\ref{S:conc_NE_appendix} --- as well as its non-Euclidean pieces are somehow negligible compared to the purifying manifold $\chi_0$. If however this was not the case, for instance if $\chi_0$ was vacuous, then our conclusions would be invalidated. Can we characterize the observables $\hat N$ for which the purifying manifold assumption is wrong, that is, for which the purifying manifold does not represent most states in $\sepn$ in any physically reasonable sense? We note however that when the assumption is satisfied, our results are independent of the choice of a particular triangulation (or particular finite open cover).

\emph{Specific $\hat N$.} We stress that our concentration result describes a structural property of operators that only have in common that they are degenerate, and have a well-defined purification dimension $\text{p-dim}\,\sepn$ which is not too small.  All operators `typical' in this sense fall within the scope of our propositions. The drawback of abstraction, however, is the difficulty to apply our results to a specific $\hat N$ without prior knowledge of $\text{p-dim}\,\sepn$. Establishing results for specific, physically motivated operators is the object of futur investigation.

\section{Conclusion}\label{S:conclusion}
In this work we have studied the Symmetric Separability Problem --- deciding whether a given separable and $\hat N$-symmetric state is symmetrically separable --- or, equivalently, regular separability within a restricted class of states with extended, `localized' symmetry. When $\hat N\neq 1$ is degenerate, separable and symmetric states fail to be symmetrically separable with probability one. 

We have used the NE, a monotone for symmetry-induced entanglement, to show that a bipartite state which is both $\hat N$-symmetric and separable will contain an amount of symmetry-induced entanglement which is superexponentially close to the (strictly positive) average amount on $\sepn$. Simply put, in the presence of a typical degenerate symmetry, generic separable states will be far from being symmetrically separable, and will contain a finite amount of symmetry-induced entanglement locked in their charge sectors. 

We have discussed the relevance of our results for computation and communication tasks, especially in the presence of a superselection rule, or in the absence of a shared reference frame. In this scenario, symmetrically separable states and symmetric LOCCs represent free fungible information and processes. States in $\symsepn$ are unique in being constructible out of (fungible) product states and (fungible) LOCCs, and should be considered as the only genuinely unentangled states in the absence of a shared frame, while all other states correspond to nonlocal resources.  

Finally, we have shown that our results are also of import to multiparty entanglement, and are consistent with recent findings on the fate of entanglement over general system evolution. Our work supports the view that symmetry-constrained entanglement is more robust than its unconstrained counterpart, and could help characterize systems with optimal GME resilience.

\begin{acknowledgments}
CB acknowledges support from the Canadian Defence Academy under Grant No.~102564, and support from the Institut Transdisciplinaire d'Information Quantique, and Fonds de Recherche du Québec. NL acknowledges financial support from his PhD advisor's research fund, Prof. Richard MacKenzie. CB wishes to thank William Witczak-Krempa for sparking his interest in the problem of symmetric separability, and for many fruitful discussions.\end{acknowledgments}

\section*{Author contributions}
CB contributed to ideation, theoretical development, numerics, interpretation of the results, and final drafting of the article. NL was responsible for the numerical work and analysis. 

\section*{Conflicts of interest}
The authors declare no conflict of interest.

\section*{Codes and data}
Codes used for the numerical work are available at \href{https://github.com/dslap0/Number-Entanglement}{github.com/dslap0/Number-Entanglement}.

\appendix
\onecolumngrid

\section{Detailed proofs}
\subsection{Reducing SSP to QSP}
A \emph{separable} state on $\mathscr{H}_A \otimes \mathscr{H}_B$ is a mixture of factored states, 
\beq\label{E:sep}
\rho=\sum_i p_i \rho_{A,i}\otimes \rho_{B,i},
\eeq
where $\rho_{A,i}$ and $\rho_{B,i}$ are densities on $A$ and $B$, respectively. For an observable $\hat N$ on $A\cup B$, we say that a separable state~\eqref{E:sep} is \emph{symmetrically separable for} $\hat N$ if 
\beq\label{E:symsep}
[\rho_{A,i}\otimes \rho_{B,i} , \hat N]=0 \qquad , \qquad \forall i.
\eeq
Obviously, symmetric separability for $\hat N$ implies $[\rho , \hat N]=0$. We denote the set of separable states as $\sep$, and the set of states symmetrically separable for $\hat N$ as $\symsepn$. In what follows, we exclusively consider observables of the form $\hat N_A \otimes \hat N_B$ or $\hat N_A \otimes 1_B + 1_A \otimes \hat N_B$. The next proposition shows that a nonzero NE witnesses the failure of symmetric separability. 
\begin{proposition}[Ma \emph{et al.}]\label{T:Ma} 
Symmetric separability of $\rho$ implies $\NE{\rho}=0$.
\end{proposition}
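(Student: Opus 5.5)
The plan is to show that a symmetrically separable $\rho$ is unchanged by the nonselective local measurement of $\hat N_A$, i.e.\ $\NA\rho=\rho$. Then $S(\NA\rho)=S(\rho)$ and $\NE\rho=0$ by \eqref{E:NE_main}.

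First, I reduce to the localized symmetry. By Proposition~\ref{T:localized_symsep} (equivalently Corollary~\ref{T:symsepn_sepnloc}), a state symmetrically separable for $\hat N=\hat N_A\otimes 1_B+1_A\otimes\hat N_B$ or for $\hat N=\hat N_A\otimes\hat N_B$ is also symmetrically separable for $\hat N_{\textup{local}}$. So each term satisfies $[\rho_{A,i}\otimes\rho_{B,i},\hat N_A\otimes 1_B]=0$. For completeness I would sketch why this reduction holds, since it is where the specific form of $\hat N$ enters. For the sum case, expand the commutator as
\[
[\rho_{A,i},\hat N_A]\otimes\rho_{B,i}+\rho_{A,i}\otimes[\rho_{B,i},\hat N_B]=0 .
\]
Take the partial trace over $B$. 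Using $\textup{Tr}\,\rho_{B,i}=1$ and $\textup{Tr}[\rho_{B,i},\hat N_B]=0$, this gives $[\rho_{A,i},\hat N_A]=0$, and symmetrically $[\rho_{B,i},\hat N_B]=0$. For the product case the commutator equals
\[
\rho_{A,i}\hat N_A\otimes\rho_{B,i}\hat N_B-\hat N_A\rho_{A,i}\otimes\hat N_B\rho_{B,i} .
\]
Here one needs a short linear-algebra argument: if $X\otimes Y=X'\otimes Y'\neq 0$, then $X'=\lambda X$ and $Y'=\lambda^{-1}Y$. Combined with Hermiticity and positivity, this forces $\lambda=1$, or else a degenerate case in which $\rho_{A,i}\hat N_A=0$ or $\rho_{B,i}\hat N_B=0$, which also gives commutation after a little care. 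This product case is the only step I expect to be slightly delicate. If it resists, I will simply cite the already-stated Corollary~\ref{T:symsepn_sepnloc}.

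Next, commutation with $\hat N_A$ means each $\rho_{A,i}$ is block diagonal in the spectral projectors $\Pi_{N}$ of $\hat N_A$, i.e.\ $\Pi_N\rho_{A,i}\Pi_{N'}=0$ for $N\neq N'$. Hence
\[
\sum_N(\Pi_N\otimes 1_B)(\rho_{A,i}\otimes\rho_{B,i})(\Pi_N\otimes 1_B)=\Bigl(\sum_N\Pi_N\rho_{A,i}\Pi_N\Bigr)\otimes\rho_{B,i}=\rho_{A,i}\otimes\rho_{B,i}.
\]
By linearity of the measurement map over the convex decomposition, $\NA\rho=\rho$. Therefore $S(\NA\rho)=S(\rho)$ and $\NE\rho=0$.

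Equivalently, the argument shows $\rho\in\dnloc$, so Proposition~\ref{T:NE_zero_on_dnloc} applies directly.
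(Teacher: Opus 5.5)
Your sum case, the final block-diagonality step, and the nondegenerate product subcase are correct, and they follow the paper's own route. The paper likewise uses partial traces for $\hat N_A\otimes 1_B+1_A\otimes\hat N_B$, proportionality of tensor factors for $\hat N_A\otimes\hat N_B$ (which it phrases as a Pauli-word expansion), and then $\NA\rho=\rho$. With $X=\rho_{A,i}\hat N_A$ and $X'=\hat N_A\rho_{A,i}=X^\dagger$, positivity of $\rho_{A,i}$ does force $\lambda=1$.

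The gap is the degenerate subcase you wave through. If $\rho_{A,i}\hat N_A=0$, you do get $[\rho_{A,i},\hat N_A]=0$. But if $\rho_{B,i}\hat N_B=0$, the commutator with $\hat N_A\otimes\hat N_B$ vanishes for \emph{every} $\rho_{A,i}$, so nothing constrains the $A$ factor. No amount of care repairs this, because the statement is false there. Take $\hat N_A=\hat N_B=\mathrm{diag}(0,1)$, so that $\hat N=\hat N_A\otimes\hat N_B=|11\rangle\langle 11|$, and take $\rho=|+\rangle\langle +|\otimes|0\rangle\langle 0|$ with $|+\rangle=(|0\rangle+|1\rangle)/\sqrt{2}$. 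Then $\hat N\rho=\rho\hat N=0$, so $\rho$ is a single $\hat N$-symmetric product term and is symmetrically separable. Yet
\[
\NA\rho=\tfrac12\bigl(|0\rangle\langle 0|+|1\rangle\langle 1|\bigr)\otimes|0\rangle\langle 0|,\qquad \NE\rho=\log 2>0 .
\]

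Citing Proposition~\ref{T:localized_symsep} or Corollary~\ref{T:symsepn_sepnloc} does not rescue this. The same $\rho$ lies in $\symsepn$ but not in $\dnloc$, so both results fail for $\hat N_A\otimes\hat N_B$ in exactly this situation. In the appendix they are also obtained from the proof of this very proposition, so the citation would be circular anyway.

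The paper's proof has the same blind spot. It writes $\rho_{A,i}\otimes\rho_{B,i}=\hat N^{\dagger}(\rho_{A,i}\otimes\rho_{B,i})\hat N$, which treats $\hat N$ as unitary. It then calls $\hat N_B^{\dagger}\rho_{B,i}\hat N_B$ a density, silently excluding the case where it vanishes, as it does here.

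What your argument does establish is the statement for $\hat N_A\otimes 1_B+1_A\otimes\hat N_B$. It also establishes it for $\hat N_A\otimes\hat N_B$ under the extra hypothesis that $\hat N_B$ is invertible. Then $\rho_{B,i}\hat N_B\neq 0$ for every term, and the remaining degenerate subcase $\rho_{A,i}\hat N_A=0$ is harmless for the $\hat N_A$-measurement. That subcase still leaves $\rho_{B,i}$ unconstrained, so your closing claim $\rho\in\dnloc$ additionally needs $\hat N_A$ invertible. State these hypotheses explicitly rather than deferring to the corollary.
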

\begin{proof}
The proof for operators of the form $\hat N_A \otimes 1_B + 1_A \otimes \hat N_B$ is given in the original paper~\cite{Ma2022symmetric}. Here, we give only the (very similar) argument for the case $\hat N_A \otimes \hat N_B$. Symmetric separability means that $\rho$ is of the form~\eqref{E:sep} with termwise conservation~\eqref{E:symsep}. Thus
\beq
\begin{aligned}
\rho_{A,i}\otimes \rho_{B,i} &= \hat N^{\dagger}(\rho_{A,i}\otimes \rho_{B,i})\hat N\\
	&=\hat N_A^{\dagger}\rho_{A,i}\hat N_A \otimes \hat N_B^{\dagger}\rho_{B,i}\hat N_B\\
	&=\tilde\rho_{A,i}\otimes \tilde\rho_{B,i}
\end{aligned}
\eeq
for densities $\tilde\rho_{A,i},\tilde\rho_{B,i}$. Expanding the equations
\beq
\rho_{A,i}\otimes \rho_{B,i} -\tilde\rho_{A,i}\otimes \tilde\rho_{B,i}=0
\eeq
 in Pauli words, one shows that $\tilde\rho_{A,i} = k\rho_{A,i}$ and $\tilde\rho_{B,i} = \tilde\rho_{B,i}/k$ for some nonzero constant $k$. Unicity of the trace demands $k=1$. Hence, symmetric separability implies $[\rho_{A,i},\hat N_A]=0$ and $[\rho_{B,i},\hat N_B]=0$. (For operators of the form $\hat N_A \otimes 1_B + 1_A \otimes \hat N_B$, the same result is obtained by taking partial traces.) Therefore, each $\rho_{A,i}$ is block-diagonal in any $\hat N_A$-basis and $ \sum_{N_A} \Pi_{N_A}\rho_{A,i}\Pi_{N_A} = \rho_{A,i}$. Hence $\NA \rho = \sum_{N_A} \Pi_{N_A}\rho\Pi_{N_A} = \rho$, so $\NE\rho = 0$.
\end{proof}
We can gain further insight into symmetric separability by defining appropriate twirling operations. Let $\hat N = \{\hat n_j\}_{j=1,...,M}$ be a set of \emph{commuting} Hermitean operators on $\mathscr{H}_A \otimes \mathscr{H}_B$, and define the twirl
\beq
\mathcal{G}_{\hat N}[\rho]\coloneqq \int_{\mathcal{T}^M} dg\; T (g)\rho \, T^{\dagger}(g),
\eeq
where $T(g)=e^{i\sum g_j \hat n_j}=\prod e^{ig_j \hat n_j}$, and $dg$ is the Haar measure over the $M$-dimensional torus $\mathcal{T}^M$. (We will simply write $\mathcal{G}$ when there is no ambiguity.) It is immediate from translation-invariance of the Haar measure that $T(g)\mathcal{G}[\rho] T^{\dagger}(g)=\int_{\mathcal{T}^M} dg'\; T (g'+g)\rho \, T^{\dagger}(g'+g)=\mathcal{G}[\rho]$. Thus it follows
\beq\label{E:G_sym}
\forall g\in\mathcal{T}^M \colon [\mathcal{G}[\rho],T(g)]=0 \hspace{1cm} , \hspace{1cm} \forall \hat n_j \in \hat N \colon  [\mathcal{G}[\rho],\hat n_j]=0.
\eeq
\begin{proposition}\label{T:G_props}
The twirl is the restriction to $\mathscr{D}$ of a linear operator on $L(\mathscr{H}_A \otimes \mathscr{H}_B)$. It has the following properties:
\begin{enumerate}[leftmargin=1cm]
\item  It sends densities to $\hat N$-symmetric densities : $\mathcal{G}_{\hat N}[\mathscr{D}]\subset\dn = 
\{\rho\in\mathscr{D}\colon [\rho,\hat n]=0\text{ for all } \hat n \in \hat N\}$.

\item $\dn$ is the set of fixed points of $\mathcal{G}_{\hat N}$ : $\rho=\mathcal{G}_{\hat N}[\rho]$ iff $\rho\in\dn$. 

\item $\mathcal{G}_{\hat N}[\mathscr{D}] = \dn$.

\item $\mathcal{G}_{\hat N}$ is idempotent, $\mathcal{G}_{\hat N}^2=\mathcal{G}_{\hat N}$. 

\item $\mathcal{G}_{\hat N}\colon \mathscr{D}\to\dn$ is a retraction. 

\item $\mathcal{G}_{\hat N} = \mathcal{G}_{\hat n_M} \circ \dots \circ \mathcal{G}_{\hat n_1}$, where the ordering of the (commuting) $\hat n_j \in \hat N$ is immaterial.
\end{enumerate}
All these properties can be transposed to the extension of $\mathcal{G}_{\hat N}$ to $L(\mathscr{H}_A \otimes \mathscr{H}_B)$. In particular, $\mathcal{G}_{\hat N}$ is the projector onto the $\hat N$-symmetric subspace $L(\mathscr{H}_A \otimes \mathscr{H}_B)_{\hat N}$. 
\end{proposition}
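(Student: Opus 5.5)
The plan is to diagonalize the commuting family and compute the twirl explicitly; every listed property will then follow from one formula. Since the $\hat n_j$ are commuting Hermitean operators on a finite-dimensional space, they share an orthonormal eigenbasis. Grouping the basis vectors by their joint eigenvalue tuple $\lambda=(\lambda_1,\dots,\lambda_M)$ gives orthogonal projectors $\Pi_\lambda$ with $\sum_\lambda \Pi_\lambda = 1$ and $\hat n_j=\sum_\lambda \lambda_j\Pi_\lambda$. Then $T(g)=\sum_\lambda e^{i g\cdot\lambda}\Pi_\lambda$, and for any $X\in L(\mathscr{H}_A\otimes\mathscr{H}_B)$
\beq
\mathcal{G}[X]=\sum_{\lambda,\lambda'}\Big(\int_{\mathcal{T}^M} dg\, e^{ig\cdot(\lambda-\lambda')}\Big)\Pi_\lambda X\Pi_{\lambda'}.
\eeq
This expression is visibly the restriction of a linear operator on $L(\mathscr{H}_A\otimes\mathscr{H}_B)$.

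The delicate step is evaluating the character integral. With the Haar measure on the torus, $\int dg\, e^{ig\cdot k}=\delta_{k,0}$ only for integer $k$, and joint eigenvalues need not be integers. I would handle this in one of two ways. The first option is to state the convention that the eigenvalues are integers, as for number operators, so that $T$ is a genuine torus representation. The second is to interpret $\int_{\mathcal{T}^M}dg$ as the invariant mean $\lim_{L\to\infty}(2L)^{-M}\int_{[-L,L]^M}dg$ over $\mathbb{R}^M$, which gives $\delta_{\lambda,\lambda'}$ for arbitrary real spectra and preserves translation invariance, hence Eq.~\eqref{E:G_sym}. Either way I obtain the pinching formula $\mathcal{G}[X]=\sum_\lambda \Pi_\lambda X\Pi_\lambda$, which is where I expect the main obstacle to lie.

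Each property then follows in turn.
\begin{enumerate}[leftmargin=1cm]
\item The twirl is positive and trace preserving, since $\sum_\lambda\Pi_\lambda=1$ gives $\mathrm{Tr}\,\mathcal{G}[\rho]=\mathrm{Tr}\,\rho$. Hence it sends densities to densities, and these commute with all $\hat n_j$ by Eq.~\eqref{E:G_sym} (or directly, since $\mathcal{G}[\rho]$ is block diagonal). This proves item 1.
\item If $[\rho,\hat n_j]=0$ for all $j$, then $\rho$ commutes with every spectral projector $\Pi_\lambda$, since these are polynomials in the $\hat n_j$. So $\rho=\sum_{\lambda,\lambda'}\Pi_\lambda\rho\Pi_{\lambda'}=\sum_\lambda\Pi_\lambda\rho\Pi_\lambda=\mathcal{G}[\rho]$. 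Conversely, fixed points lie in the range of $\mathcal{G}$, hence in $\dn$ by item 1. This proves item 2.
\item Item 3 combines items 1 and 2: $\dn=\mathcal{G}[\dn]\subset\mathcal{G}[\mathscr{D}]\subset\dn$.
\item Idempotence follows from item 2 applied to $\mathcal{G}[\rho]\in\dn$.
\item Items 3 and 2 together are exactly the definition of a retraction.
\item Let $P^{(j)}_{\mu}$ denote the spectral projectors of $\hat n_j$ alone, so that $\mathcal{G}_{\hat n_j}[X]=\sum_\mu P^{(j)}_\mu X P^{(j)}_\mu$. Composing over $j$ gives $\sum_{\mu_1,\dots,\mu_M}P_{\mu_1\cdots\mu_M}XP_{\mu_1\cdots\mu_M}$ with $P_{\mu_1\cdots\mu_M}=\prod_j P^{(j)}_{\mu_j}$. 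These products are exactly the joint projectors $\Pi_\lambda$ (or zero), because the projectors of different $\hat n_j$ commute. Since the product is commutative, the order is immaterial. Alternatively, factor $T(g)=\prod_j e^{ig_j\hat n_j}$ and apply Fubini to the product Haar measure.
\end{enumerate}

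For the extension to $L(\mathscr{H}_A\otimes\mathscr{H}_B)$, the same computations hold with $X$ in place of $\rho$. Idempotence together with the range equalling the $\hat N$-commutant shows that $\mathcal{G}$ is a projector onto $L(\mathscr{H}_A\otimes\mathscr{H}_B)_{\hat N}$. It is moreover self-adjoint for the Hilbert–Schmidt inner product, $\mathrm{Tr}(Y^\dagger\mathcal{G}[X])=\sum_\lambda\mathrm{Tr}(Y^\dagger\Pi_\lambda X\Pi_\lambda)=\mathrm{Tr}(\mathcal{G}[Y]^\dagger X)$, so the projection is orthogonal.
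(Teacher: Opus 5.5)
Your proof is correct, but it follows a different route from the paper's. The paper never diagonalizes. Item~1 is read off from Eq.~\eqref{E:G_sym}, i.e.\ from translation invariance of the Haar measure. For item~2 it observes that $[\rho,\hat n_j]=0$ for all $j$ gives $[\rho,T(g)]=0$, so the integrand is constantly $\rho$ and the normalized integral returns $\rho$; the converse comes from item~1. Items 3--6 are then declared immediate; item~6 would be Fubini applied to $T(g)=\prod_j e^{ig_j\hat n_j}$ and the product measure, which is your alternative argument. You instead derive the pinching formula $\mathcal{G}[X]=\sum_\lambda \Pi_\lambda X\Pi_\lambda$ first and read every property off it.

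The paper's route is shorter and basis-free. It also transfers verbatim to the twirl over an arbitrary compact group in Sec.~\ref{S:SSR}, where the analogue of your formula would need Schur's lemma and the isotypic decomposition. Yours costs a joint diagonalization but buys several things:
\begin{itemize}
\item It exposes the integrality assumption that the paper's use of Haar invariance on $\mathcal{T}^M$ also silently needs, since otherwise $g\mapsto T(g)\rho T^\dagger(g)$ is not a function on the torus. Your invariant-mean variant removes that assumption.
\item It makes positivity and trace preservation of $\mathcal{G}$ explicit, which the paper leaves implicit in item~1.
\item It shows that the projector in the final sentence is orthogonal for the Hilbert--Schmidt inner product.
\item It proves the Corollary that follows (twirl equals nonselective measurement) directly. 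The paper instead appeals to uniqueness of the retraction onto $\mathscr{D}_{\hat n_j}$, which is not automatic, since retractions onto a given set are in general far from unique.
\end{itemize}
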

\begin{proof}
(1) Clear from~\eqref{E:G_sym}. (2) If $\rho\in\dn$, then $[\rho,\hat n]=0$ for all $\hat n\in \hat N$, giving $[\rho,T(g)]=0$ for all $g\in \mathcal{T}^M$, hence $\mathcal{G}_{\hat N}[\rho]=\rho$. Conversely, if $\mathcal{G}_{\hat N}[\rho]=\rho$ then $\rho\in\dn$, from (1). The remaining properties are immediate consequences. 
\end{proof}
\begin{corollary}
Each $\mathcal{G}_{\hat n_j}$ corresponds to the nonselective measurement $(\cdot)|_{\hat n_j} = \sum_{n_j}\Pi_{n_j}(\cdot)\Pi_{n_j}$, as defined in the main text.
\end{corollary}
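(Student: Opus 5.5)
The plan is to compute $\mathcal{G}_{\hat n_j}$ explicitly via the spectral decomposition. Write $\hat n_j = \sum_{n_j} n_j \Pi_{n_j}$, where the $n_j$ run over the distinct eigenvalues and the $\Pi_{n_j}$ are the orthogonal spectral projectors, with $\sum_{n_j}\Pi_{n_j} = 1$ and $\Pi_{n_j}\Pi_{n_j'} = \delta_{n_j n_j'}\Pi_{n_j}$. By the functional calculus,
\beq
e^{ig\hat n_j} = \sum_{n_j} e^{ign_j}\Pi_{n_j}.
\eeq
Substituting into the single-parameter twirl (the case $M=1$ of the definition, with $g$ ranging over the circle with normalized Haar measure) gives
\beq
\mathcal{G}_{\hat n_j}[\rho] = \sum_{n_j,n_j'} \Big(\int dg\, e^{ig(n_j-n_j')}\Big)\,\Pi_{n_j}\rho\,\Pi_{n_j'}.
\eeq

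The key step is to show that the scalar integral equals $\delta_{n_j n_j'}$. If it does, only the diagonal terms survive and we obtain $\sum_{n_j}\Pi_{n_j}\rho\Pi_{n_j} = \rho|_{\hat n_j}$, as claimed.

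This step is the only real obstacle, because it depends on how the torus is parametrized. If the eigenvalues are integers (number operators, which is the physical case of interest) and $g\in[0,2\pi)$, then $\int_0^{2\pi}\frac{dg}{2\pi}e^{ig(n-n')} = \delta_{nn'}$ directly. For general real spectra I would not rely on a compact torus. Two options are available:
\begin{itemize}[leftmargin=1cm]
\item Read the Haar average as the invariant mean on $\mathbb{R}$, namely $\lim_{L\to\infty}\frac{1}{2L}\int_{-L}^{L}dg\,e^{ig\omega}$. This equals $1$ if $\omega=0$ and tends to $0$ otherwise, since it is bounded by $1/(L|\omega|)$.
\item Argue abstractly from Proposition~\ref{T:G_props}. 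Both $\mathcal{G}_{\hat n_j}$ and the pinching $\rho\mapsto\sum\Pi_{n_j}\rho\Pi_{n_j}$ are idempotent linear maps onto the commutant of $\hat n_j$: the pinching fixes exactly the operators commuting with every $\Pi_{n_j}$, which are exactly those commuting with $\hat n_j$. Both maps are also self-adjoint with respect to the Hilbert--Schmidt inner product. For the twirl this holds because $g\mapsto -g$ preserves the measure; for the pinching it is clear. An orthogonal projector is determined by its range, so the two maps coincide.
\end{itemize}

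I would present the explicit computation for integer spectra and add the orthogonal-projector argument as the general justification. Combined with property (6) of Proposition~\ref{T:G_props}, this also shows that $\mathcal{G}_{\hat N}$ is the composition of the commuting nonselective measurements $(\cdot)|_{\hat n_j}$.
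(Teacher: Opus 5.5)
Your argument is correct, and its main line differs from the paper's. The paper's proof is a single sentence: $\mathcal{G}_{\hat n_j}$ and $(\cdot)|_{\hat n_j}$ are both ``the (unique) retraction onto $\mathscr{D}_{\hat n_j}$.''

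You instead compute the twirl directly in the spectral decomposition. This uses nothing from Proposition~\ref{T:G_props}, and it exposes a hypothesis the paper leaves implicit. The Haar average over $\mathcal{T}^M$ is a genuine group average only when $g\mapsto e^{ig\hat n_j}$ is $2\pi$-periodic, i.e.\ for integer spectra. For $\hat n=\mathrm{diag}(0,1/2)$, averaging over $[0,2\pi)$ leaves the coherences multiplied by a factor of modulus $2/\pi$, so even property~(1) of Proposition~\ref{T:G_props} fails. Your invariant mean on $\mathbb{R}$ is a clean fix for this.

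Your second bullet is essentially the paper's argument with its missing step supplied. Retractions in the paper's sense (any map fixing the target set) are not unique. For a qubit with $\hat n=\sigma_z$, the map $\rho\mapsto\mathrm{diag}(\rho_{00}+|\rho_{01}|^2,\,\rho_{11}-|\rho_{01}|^2)$ is a continuous retraction of $\mathscr{D}$ onto the diagonal states, and it differs from dephasing. So some extra structure is needed for uniqueness. Hilbert--Schmidt self-adjointness of two idempotents with the same range is exactly what makes ``determined by the range'' true.

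One caution about that abstract route: it is only as general as Proposition~\ref{T:G_props}, which rests on translation invariance of the average. For non-integer spectra it must therefore be run with the invariant mean, and at that point your explicit computation already settles the matter. The projector argument is best presented as the structural explanation, not as an independent proof of the general case.
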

\begin{proof}
Both $\mathcal{G}_{\hat n_j}$ and $(\cdot)|_{\hat n_j}$ are the (unique) retraction onto $\mathscr{D}_{\hat n_j}$.
\end{proof}
Three cases will be especially relevant :
\begin{enumerate}[leftmargin=1cm]
	\item $\hat N=\hat N_{\times}=\{\hat N_A  \otimes \hat N_B\}$ on the one-dimensional torus $\mathcal{T}^1 = S^1$. Then $\mathcal{G}\colon\mathscr{D}\to\mathscr{D}_{\hat N_{\times}}$ is a retraction onto the  densities with conserved charge $N_A N_B$. 
	\item $\hat N=\hat N_{+}=\{\hat N_A \otimes 1_B + 1_A \otimes \hat N_B\}$ on the one-dimensional torus $\mathcal{T}^1 = S^1$. Then $\mathcal{G}\colon\mathscr{D}\to\mathscr{D}_{\hat N_{+}}$ is a retraction onto the densities with conserved charge $N_A +N_B$. 
	\item $\hat N=\hat N_{\textup{local}}=\{\hat N_A \otimes 1_B , 1_A \otimes \hat N_B\}$ on the two-dimensional torus $\mathcal{T}^2$. Then $\mathcal{G}\colon\mathscr{D}\to\mathscr{D}_{\hat N_{\textup{local}}}$ is a retraction onto the densities with conserved charges $N_A$ and $N_B$.
\end{enumerate}
\begin{proposition}
$\NE\rho\equiv 0$ on $\dnloc$.
\end{proposition}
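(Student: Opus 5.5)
The plan is to show directly that every $\rho\in\dnloc$ is a fixed point of the nonselective local measurement $(\cdot)|_{\hat N_A}$. Once that holds, the definition \eqref{E:NE_main} gives $\NE\rho = S(\NA\rho)-S(\rho)=S(\rho)-S(\rho)=0$.

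First, recall that $\rho\in\dnloc$ means $[\rho,\hat N_A\otimes 1_B]=0$ and $[\rho,1_A\otimes\hat N_B]=0$; only the first condition is needed. Next, invoke Proposition~\ref{T:G_props} with the single operator $\hat n=\hat N_A\otimes 1_B$. Item (2) says that $\mathscr{D}_{\hat N_A\otimes 1_B}$ is exactly the fixed-point set of $\mathcal{G}_{\hat N_A\otimes 1_B}$. The Corollary following it identifies $\mathcal{G}_{\hat N_A\otimes 1_B}$ with the nonselective measurement $(\cdot)|_{\hat N_A}=\sum_{N_A}\Pi_{N_A}(\cdot)\Pi_{N_A}$, where each $\Pi_{N_A}$ is tensored with $1_B$.

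Since $\dnloc\subset\mathscr{D}_{\hat N_A\otimes 1_B}$, it follows that $\NA\rho=\rho$. Alternatively, one can use item (6): $\mathcal{G}_{\hat N_{\textup{local}}}=\mathcal{G}_{1_A\otimes\hat N_B}\circ\mathcal{G}_{\hat N_A\otimes 1_B}$, so applying the $\hat N_A$ twirl to a fixed point of the full local twirl returns the same state.

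To keep the argument self-contained, I would also note the elementary version. Each spectral projector $\Pi_{N_A}$ is a polynomial in $\hat N_A$, so it commutes with $\rho$. Then $\sum_{N_A}\Pi_{N_A}\rho\Pi_{N_A}=\sum_{N_A}\Pi_{N_A}^2\rho=\rho$, using the resolution of the identity $\sum_{N_A}\Pi_{N_A}=1$.

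There is no real obstacle here. The only point that needs care is being explicit that the measurement in the NE uses the localized operator $\hat N_A\otimes 1_B$, which is one of the generators of $\hat N_{\textup{local}}$, and not the global $\hat N$. This is exactly why the statement holds on $\dnloc$ and fails on $\dn$ in general.
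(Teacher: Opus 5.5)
Your proposal is correct and matches the paper's proof, which is the one-line observation that $\rho\in\dnloc$ forces $\NA\rho=\rho$, hence $\NE\rho=S(\rho)-S(\rho)=0$. Of your two justifications, the elementary one is the more robust: the spectral projectors of $\hat N_A\otimes 1_B$ commute with $\rho$, so the pinching acts trivially, and this does not rely on identifying the $S^1$ twirl with the nonselective measurement.
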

\begin{proof}
$\rho\in\dnloc$ implies $\NA \rho = \rho$.
\end{proof}
Inspection of the proof of Proposition~\ref{T:Ma} gives the next result.
\begin{proposition}\label{T:symsepn_symsepnloc}
If $\hat N = \hat N_{\times}$ or $\hat N = \hat N_{+}$, then $\symsepn = \symsepnloc$.
\end{proposition}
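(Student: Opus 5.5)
We prove both inclusions directly from the definitions, with the decomposition of Proposition~\ref{T:Ma} doing the real work for the nontrivial direction.

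\emph{The easy inclusion} $\symsepnloc\subset\symsepn$. Take $\rho=\sum_i p_i\rho_{A,i}\otimes\rho_{B,i}$ with each term commuting with $\hat N_A\otimes 1_B$ and with $1_A\otimes\hat N_B$. Then each term commutes with any polynomial in these two operators. This includes $\hat N_A\otimes 1_B+1_A\otimes \hat N_B$. It also includes the product $(\hat N_A\otimes 1_B)(1_A\otimes\hat N_B)=\hat N_A\otimes\hat N_B$. So the same decomposition witnesses $\rho\in\symsepn$ for both $\hat N_\times$ and $\hat N_+$.

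\emph{The substantive inclusion} $\symsepn\subset\symsepnloc$. Fix a decomposition with $[\rho_{A,i}\otimes\rho_{B,i},\hat N]=0$ for all $i$. It suffices to prove that termwise conservation forces $[\rho_{A,i},\hat N_A]=0$ and $[\rho_{B,i},\hat N_B]=0$. This is precisely the intermediate conclusion reached in the proof of Proposition~\ref{T:Ma}, so we only need to make that step explicit in each case.

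For $\hat N_+$ the argument is clean. The commutator is
\[
[\rho_{A,i},\hat N_A]\otimes\rho_{B,i}+\rho_{A,i}\otimes[\rho_{B,i},\hat N_B]=0.
\]
Take the partial trace over $B$. Since $\mathrm{Tr}\,\rho_{B,i}=1$ and $\mathrm{Tr}[\rho_{B,i},\hat N_B]=0$, this gives $[\rho_{A,i},\hat N_A]=0$. Tracing over $A$ gives the companion identity for $B$.

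For $\hat N_\times$, the proof of Proposition~\ref{T:Ma} passes through the conjugation $\hat N^\dagger(\cdot)\hat N$, which is not unitary. We replace that step with a more robust one. Commutation with the Hermitean $\hat N_A\otimes\hat N_B$ is equivalent to invariance under the unitaries $e^{ig\hat N_A\otimes\hat N_B}$. By the corollary to Proposition~\ref{T:G_props}, it is also equivalent to block-diagonality with respect to the spectral projectors of $\hat N_A\otimes \hat N_B$.

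\emph{The main obstacle} is the $\hat N_\times$ case. Write $\rho_{A,i}$ and $\rho_{B,i}$ in eigenbases of $\hat N_A$ and $\hat N_B$. A product entry $(\rho_A)_{aa'}(\rho_B)_{bb'}$ must vanish whenever $a b\neq a' b'$, where $a,a',b,b'$ denote eigenvalues. We must then deduce $(\rho_A)_{aa'}=0$ whenever $a\neq a'$.

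Suppose instead that $(\rho_A)_{aa'}\neq0$ for some $a\neq a'$. Then $(\rho_B)_{bb'}=0$ unless $ab=a'b'$. Diagonal entries $(\rho_B)_{bb}$ with $b\neq 0$ have positive total weight, except when $\rho_B$ is supported on the kernel of $\hat N_B$. For such a diagonal entry, $ab=a'b$ forces $a=a'$, a contradiction.

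This leaves degenerate configurations, namely $\rho_{B,i}$ supported on $\ker\hat N_B$ (or symmetrically $\rho_{A,i}$ on $\ker \hat N_A$). There $\hat N_\times$ acts trivially on the term, and $\rho_{A,i}$ need not commute with $\hat N_A$. The equality can then genuinely fail. We would handle this by first checking whether the claim needs a hypothesis excluding zero eigenvalues (or invertibility of $\hat N_A,\hat N_B$), rather than forcing a repair. If it does, we would state that hypothesis explicitly; this is the step we expect to require the most care.
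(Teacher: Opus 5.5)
Your easy inclusion and your partial-trace argument for $\hat N_+$ are exactly what the paper has in mind. Its proof is only ``inspection of the proof of Proposition~\ref{T:Ma}'', where $\hat N_+$ is handled by partial traces.

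For $\hat N_\times$ you are right to distrust that route. Proposition~\ref{T:Ma} asserts $\rho_{A,i}\otimes\rho_{B,i}=\hat N^\dagger(\rho_{A,i}\otimes\rho_{B,i})\hat N$. Commutation with the Hermitean $\hat N$ only gives $\hat N^\dagger\tau\hat N=\hat N^2\tau$ for $\tau=\rho_{A,i}\otimes\rho_{B,i}$, so the paper's argument does not establish this case. Your spectral-block argument is a valid replacement. Term by term, it shows $[\rho_{A,i},\hat N_A]=0$ whenever $\rho_{B,i}$ is not supported in $\ker\hat N_B$, and symmetrically for $B$. Positivity is what turns ``all diagonal blocks with $b\neq0$ vanish'' into ``supported in the kernel''. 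Hence the equality holds for $\hat N_\times$ whenever $\hat N_A$ and $\hat N_B$ are invertible.

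You should not leave the last step open, though: the statement is false as written. Take qubits with $\hat N_A=\hat N_B=\mathrm{diag}(0,1)$, the paper's own ``local number of ones'', so that $\hat N_\times=\mathrm{diag}(0,0,0,1)$. Let $\rho=\lvert+\rangle\langle+\rvert\otimes\lvert0\rangle\langle0\rvert$.
\begin{itemize}
\item $\rho$ is supported on $\mathrm{span}\{\lvert00\rangle,\lvert10\rangle\}$, which lies inside the zero eigenspace of $\hat N_\times$. The one-term decomposition therefore puts $\rho\in\mathscr{D}^{\textup{symsep}}_{\hat N_\times}$.
\item However, $[\rho,\hat N_A\otimes 1_B]=[\lvert+\rangle\langle+\rvert,\hat N_A]\otimes\lvert0\rangle\langle0\rvert\neq0$.
\item Every element of $\symsepnloc$ lies in $\dnloc$, so no other decomposition can rescue $\rho$, and $\rho\notin\symsepnloc$.
\end{itemize}
In this example $\hat N_\times\rho\hat N_\times=0\neq\rho$, which is exactly where the conjugation step fails. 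Also $\NE\rho=\log 2$, so Proposition~\ref{T:Ma} fails for $\hat N_\times$ as well. The paper's own $\hat N_A\otimes\hat N_B$ numerics, with spectrum $\{0,\dots,d-1\}$, sit in this regime.

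More precisely, the equality fails exactly when one of $\hat N_A,\hat N_B$ has a nontrivial kernel and the other is not a multiple of the identity. The correct statement therefore needs a hypothesis such as invertibility of $\hat N_A$ and $\hat N_B$, under which your argument is a complete proof. State that hypothesis and include the counterexample, rather than leaving it as something to check.
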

For $\hat N_{\textup{local}}$, the unitary representation is of separable form $T(g,g')=T_A(g) \otimes T_B (g')$ . For $\hat N_{+}$, it is still separable but with $g=g'$. For $\hat N_{\times}$, the unitaries are no longer separable. For $\hat N_{\textup{local}}$ and $\hat N_{+}$, and for $\xi\in \mathscr{H}_A$, let
\beq
\mathcal{G}_A[\xi]\coloneqq \int_{S^1} \frac{dg}{2\pi}\; T_A (g)\xi \, T_A^{\dagger}(g),
\eeq
where $T_A(g) = e^{ig \hat N_A}$. Define $\mathcal{G}_B[\xi]$ similarly. 
\begin{proposition}\label{T:inclusions}
Let $\mathcal{G} = \mathcal{G}_{\hat N_{\textup{local}}}$. Then 
\begin{enumerate}[leftmargin=1cm]
\item $\mathcal{G}[\sep]$ has nonzero measure in $\dnloc$.
\item  $\mathcal{G}[\sep] = \sepnloc$.
\item $\mathcal{G}[\sep] = \symsepnloc$. 
\end{enumerate}
If $\mathcal{G} = \mathcal{G}_{\hat N_{+}}$, then only properties (1) and (2) convey:
\begin{enumerate}[leftmargin=1cm]
\item $\mathcal{G}[\sep]$ has nonzero measure in $\mathscr{D}_{\hat N_{\textup{+}}}$.
\item  $\mathcal{G}[\sep] = \mathscr{D}^{\textup{sep}}_{\hat N_{\textup{+}}}$.
\end{enumerate}
If $\mathcal{G} = \mathcal{G}_{\hat N_{\times}}$, then only property (1) and half of (2) convey:
\begin{enumerate}[leftmargin=1cm]
\item $\mathcal{G}[\sep]$ has nonzero measure in $\mathscr{D}_{\hat N_{\times}}$.
\item[2*.]  $\mathcal{G}[\sep] \supset \mathscr{D}^{\textup{sep}}_{\hat N_{\times}}$.
\end{enumerate}
\end{proposition}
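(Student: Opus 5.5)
I plan to treat the three cases in parallel, starting from the structural fact (Proposition~\ref{T:G_props}) that $\mathcal{G}$ is a linear idempotent map whose fixed-point set is the relevant symmetric class. I begin with the equalities and inclusions, property (2) and its variants, since (1) and (3) will follow from them.

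\emph{Property (2) for $\hat N_{\textup{local}}$ and $\hat N_{+}$.} Here $T(g)=T_A(g_1)\otimes T_B(g_2)$, with $g_1=g_2$ for $\hat N_+$. For a separable $\rho=\sum_i p_i\rho_{A,i}\otimes\rho_{B,i}$, linearity gives
\[
\mathcal{G}[\rho]=\sum_i p_i\int dg\,\big(T_A(g_1)\rho_{A,i}T_A^\dagger(g_1)\big)\otimes\big(T_B(g_2)\rho_{B,i}T_B^\dagger(g_2)\big).
\]
This is an integral of product states against a probability measure, so it lies in the closed convex set $\sep$. Carathéodory's theorem in finite dimension turns the integral into a finite convex combination. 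By Proposition~\ref{T:G_props}(1) the image is also symmetric, so $\mathcal{G}[\sep]\subset$ the separable symmetric class. Conversely, any separable symmetric $\rho$ is a fixed point by Proposition~\ref{T:G_props}(2), so $\rho=\mathcal{G}[\rho]\in\mathcal{G}[\sep]$. This reverse inclusion uses only the fixed-point property. It therefore also holds for $\hat N_\times$, which gives $2^*$: $\mathscr{D}^{\textup{sep}}_{\hat N_\times}\subset\mathcal{G}[\sep]$. The forward inclusion fails for $\hat N_\times$ because $e^{ig\hat N_A\otimes\hat N_B}$ is entangling, so I only claim half of (2) in that case.

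\emph{Property (3) for $\hat N_{\textup{local}}$.} Here the twirl factorizes as $\mathcal{G}=\mathcal{G}_A\otimes\mathcal{G}_B$, because the Haar measure on $\mathcal{T}^2$ is a product measure. So $\mathcal{G}[\rho]=\sum_ip_i\,\mathcal{G}_A[\rho_{A,i}]\otimes\mathcal{G}_B[\rho_{B,i}]$. Each term is a product state that commutes with $\hat N_A\otimes1_B$ and with $1_A\otimes\hat N_B$, so $\mathcal{G}[\sep]\subset\symsepnloc$. The reverse inclusion $\symsepnloc\subset\sepnloc=\mathcal{G}[\sep]$ is immediate. For $\hat N_+$ the torus is the diagonal one, so this factorization is unavailable and no analogue of (3) is claimed.

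\emph{Property (1).} I restrict $\mathcal{G}$ to a set of positive measure inside $\sep$: the ball of states near the maximally mixed state $1/(d_Ad_B)$, which is separable by the Gurvits--Barnum ball result. $\mathcal{G}$ is a linear projection onto the symmetric operator subspace, fixes $1$, and is contractive in Hilbert--Schmidt norm. Hence it maps that ball onto a ball of the same radius in the affine slice of trace-one symmetric Hermitian operators around $1/(d_Ad_B)$. Every point of such a small ball is positive, hence lies in the relevant $\dn$, and it lies in $\mathcal{G}[\sep]$. So $\mathcal{G}[\sep]$ contains a full-dimensional ball of $\dn$ and has nonzero measure, in all three cases.

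I expect the main obstacle to be making the measure in (1) precise. One must identify the measure on $\dn$ (Lebesgue measure on its affine hull) and check that $\mathcal{G}$ restricted to the Gurvits ball is onto a full-dimensional neighbourhood in that hull. The cleanest route is to note that $\mathcal{G}$ is the orthogonal projection onto the symmetric subspace. Then the image of the ball $\{\|X-1/d\|_2\le r\}$ is exactly the ball of radius $r$ in the symmetric slice, because that slice's ball is already contained in the original ball and is fixed pointwise by $\mathcal{G}$.
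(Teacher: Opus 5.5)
Your proof is correct. For (2), $2^*$ and (3) it is the paper's argument. The product form $T_A(g_1)\otimes T_B(g_2)$ of the twirl unitaries (with $g_1=g_2$ for $\hat N_+$), together with linearity, gives the forward inclusion. The fixed-point property gives the reverse inclusion, which is all that survives for $\hat N_\times$. The factorization $\mathcal{G}=\mathcal{G}_A\otimes\mathcal{G}_B$ gives (3). Your appeal to closedness and convexity of $\sep$ for the integral just makes explicit what the paper calls ``linearity''.

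For (1) you take a genuinely different route. The paper picks an arbitrary interior point $\tilde\rho$ of $\sep$ (nonempty by Wen--Kempf). It then shows that the scaled copy $(1-\epsilon)\mathcal{G}[\tilde\rho]+\epsilon\,\dnloc$ lies in $\mathcal{G}[\sep]$ for small $\epsilon$, using only linearity and $\mathcal{G}[\sigma]=\sigma$ on $\dnloc$. No metric property of $\mathcal{G}$ enters, and the positive-volume piece sits around the image of \emph{any} interior separable state. (The paper's ``ball of size $\epsilon\,\textup{vol}\,\dnloc$'' is really a homothetic copy of $\dnloc$ of volume $\epsilon^{\dim}\textup{vol}\,\dnloc$.)

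You instead use the explicit Gurvits--Barnum ball at $1/(d_Ad_B)$. Its intersection with the symmetric trace-one slice is a full-dimensional ball of $\dn$ made of separable symmetric states, each fixed by $\mathcal{G}$. So your (1) is really a corollary of the inclusion $\sepn\subset\mathcal{G}[\sep]$ together with $\sepn$ having nonempty relative interior in $\dn$. This gives you an explicit radius and a uniform treatment of all three symmetries; the paper writes (1) out only for $\hat N_{\textup{local}}$.

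The contractivity of $\mathcal{G}$ in Hilbert--Schmidt norm holds because $\mathcal{G}$ averages unitary conjugations. It is also an orthogonal projection, being self-adjoint by inversion invariance of the Haar measure. You need this only to identify the image as \emph{exactly} that ball, not for positivity of the measure.
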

\begin{proof}
We prove the properties for $\mathcal{G}=\mathcal{G}_{\hat N_{\textup{local}}}$.

(1) Let $\textup{int\,}\sep$ be the (nonempty) interior of $\sep$~\cite{Wen2023separable}, and let $\rho\in\mathcal{G}[\textup{int\,}\sep]$ so that $\tilde\rho\in\mathcal{G}^{-1}[\rho]$ is in the (nonempty) interior of $\sep$, and is surrounded by an open ball of separable states. Consider any density $\sigma\in\dnloc$. For $\epsilon>0$ small enough, $(1-\epsilon)\tilde\rho + \epsilon\mathcal{G}^{-1}[\sigma]\in\textup{int\,}\sep$, i.e.~$(1-\epsilon)\tilde\rho + \epsilon\tilde\sigma$ is interior separable for any $\tilde\sigma\in\mathcal{G}^{-1}[\sigma]$. Thus, $(1-\epsilon)\rho +\epsilon\sigma = \mathcal{G}[(1-\epsilon)\tilde\rho+ \epsilon\mathcal{G}^{-1}[\sigma]]\in\mathcal{G}[\textup{int\,}\sep]$, showing that $\mathcal{G}[\textup{int\,}\sep]$ contains an open ball of size $\epsilon\,\text{vol}\,\dnloc >0$. By $\mathcal{G}[\textup{int\,}\sep] \subset \mathcal{G}[\sep]$, we have our result.

(2) Because $\mathcal{G}$ fixes $\dnloc$,  we have $\sepnloc = \mathcal{G}[\sepnloc]\subset \mathcal{G}[\sep]$. Conversely, by linearity and because unitaries in the twirl are of the form  $T(g,g')=T_A(g) \otimes T_B (g')$, we have $\mathcal{G}[\sep]\subset \sepnloc$. 

(3) For any $\rho=\sum_i p_i \rho_{A,i}\otimes \rho_{B,i}\in\sep$, we have $\mathcal{G}[\rho]\in \dnloc$ by Proposition~\ref{T:G_props}, and $\mathcal{G}[\rho] = \sum_i p_i \mathcal{G}_A[\rho_{A,i}]\otimes \mathcal{G}_B[\rho_{B,i}]$. Because $[\mathcal{G}_A[\rho_{A,i}], T_A (g)]=0$ for all $g\in S^1$, we have $[\mathcal{G}_A[\rho_{A,i}], \hat N_{A}]=0$. A similar argument gives $[\mathcal{G}_B [\rho_{B,i}], \hat N_{B}]=0$, whence $[\mathcal{G}_A[\rho_{A,i}]\otimes \mathcal{G}_B[\rho_{B,i}], \hat N_{A} \otimes 1_B]=0=[\mathcal{G}_A[\rho_{A,i}]\otimes \mathcal{G}_B[\rho_{B,i}], 1_A \otimes \hat N_{B}]$. We conclude that $\mathcal{G}[\rho]\in\symsepnloc$, whence $\mathcal{G}[\sep]\subset\symsepnloc$.  Conversely, because $\mathcal{G}$ fixes $\dnloc$, $\symsepnloc = \mathcal{G}[\symsepnloc] \subset \mathcal{G}[\sep]$.

If $\mathcal{G}=\mathcal{G}_{\hat N_{+}}$, the proof of (3) fails because $g=g'$ in the unitaries of the twirl. If $\mathcal{G}=\mathcal{G}_{\hat N_{\times}}$, the proofs of (2) and (3) fail because unitaries are not of separable form.
\end{proof}

\begin{proposition}\label{T:symsepn_in_dnloc}
If $\hat N = \hat N_{+}$ or $\hat N = \hat N_{\times}$, then $\symsepn = \sepnloc$. This set has nonzero measure in $\dnloc$.
\end{proposition}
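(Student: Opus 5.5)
This is a two-step chaining of earlier results: Proposition~\ref{T:symsepn_symsepnloc} gives $\symsepn = \symsepnloc$ for $\hat N = \hat N_{+}$ or $\hat N = \hat N_{\times}$, and items (2) and (3) of Proposition~\ref{T:inclusions}, applied with $\mathcal{G} = \mathcal{G}_{\hat N_{\textup{local}}}$, give
\beq
\sepnloc = \mathcal{G}_{\hat N_{\textup{local}}}[\sep] = \symsepnloc .
\eeq
Composing the two equalities yields $\symsepn = \symsepnloc = \sepnloc$. The point to stress is that Proposition~\ref{T:inclusions} is used only for the localized twirl. The properties that fail for $\mathcal{G}_{\hat N_{+}}$ and $\mathcal{G}_{\hat N_{\times}}$ never enter, since the reduction to $\hat N_{\textup{local}}$ is done first via Proposition~\ref{T:symsepn_symsepnloc}.

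For the first equality I would briefly make the ``inspection'' explicit, since the rest rests on it. The inclusion $\symsepnloc \subset \symsepn$ is immediate: if each product term commutes with $\hat N_A\otimes 1_B$ and $1_A\otimes\hat N_B$, it commutes with their sum and with their product $\hat N_A\otimes\hat N_B$. For the reverse inclusion, the proof of Proposition~\ref{T:Ma} shows that termwise conservation forces $[\rho_{A,i},\hat N_A]=0$ and $[\rho_{B,i},\hat N_B]=0$. For $\hat N_{+}$, take partial traces of $[\rho_{A,i}\otimes\rho_{B,i},\hat N_A\otimes 1_B+1_A\otimes\hat N_B]=0$. Tracing over $B$ gives $[\rho_{A,i},\hat N_A] + \rho_{A,i}\,\mathrm{Tr}([\rho_{B,i},\hat N_B]) = [\rho_{A,i},\hat N_A]$, because a commutator is traceless, so $[\rho_{A,i},\hat N_A]=0$; symmetrically $[\rho_{B,i},\hat N_B]=0$. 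For $\hat N_{\times}$ this is the product-state factorization argument in that proof.

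For the measure statement, item (1) of Proposition~\ref{T:inclusions} says $\mathcal{G}_{\hat N_{\textup{local}}}[\sep]$ has nonzero measure in $\dnloc$. By item (2) this set is $\sepnloc$, which by the first part equals $\symsepn$, so the claim follows.

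The only delicate step is the $\hat N_{\times}$ case of the local-commutation argument. If $\hat N_A$ or $\hat N_B$ has a zero eigenvalue, then $\hat N_A\otimes\hat N_B$ is not invertible, and the manipulation in the proof of Proposition~\ref{T:Ma} (conjugating by $\hat N$ and factoring) is not literally a similarity. There I would use the commutator form directly: $[\rho_A,\hat N_A]\otimes\rho_B\hat N_B + \hat N_A\rho_A\otimes[\rho_B,\hat N_B]=0$, and deduce termwise vanishing by a linear-independence argument. I take Proposition~\ref{T:symsepn_symsepnloc} as given, as the paper does, so I do not expect this to block the proof.
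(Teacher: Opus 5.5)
Your route is the paper's own: its proof is literally ``immediate from Propositions~\ref{T:symsepn_symsepnloc} and~\ref{T:inclusions}''. What you make explicit for $\hat N_{+}$ is correct: the easy inclusion $\symsepnloc\subset\symsepn$, the partial-trace argument, and the measure claim via item (1) of Proposition~\ref{T:inclusions}. The gap is the step you flagged, $\hat N_{\times}$ with a singular factor. Your linear-independence repair cannot work there, because the statement is false in that case. If $\rho_{B,i}$ is supported in $\ker\hat N_B$, then $\rho_{B,i}\hat N_B=0=\hat N_B\rho_{B,i}$. Both terms of your identity $[\rho_{A,i},\hat N_A]\otimes\rho_{B,i}\hat N_B+\hat N_A\rho_{A,i}\otimes[\rho_{B,i},\hat N_B]$ then vanish identically, and nothing constrains $\rho_{A,i}$.

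A concrete counterexample: take two qubits with $\hat N_A=\hat N_B=\mathrm{diag}(0,1)$, so $\hat N_{\times}=\lvert 11\rangle\langle 11\rvert$. The product state $\rho=\lvert{+}\rangle\langle{+}\rvert\otimes\lvert 0\rangle\langle 0\rvert$ satisfies $\rho\hat N_{\times}=0=\hat N_{\times}\rho$, so $\rho\in\symsepn$. But $[\rho,\hat N_A\otimes 1_B]\neq 0$, so $\rho\notin\dnloc\supset\sepnloc$. Indeed $\NE{\rho}=\log 2>0$, so even Proposition~\ref{T:Ma} fails here. The same happens for the paper's own example in Fig.~\ref{F:numerics_qudits_sum}, where the spectra $\{0,\dots,d-1\}$ contain $0$.

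Taking Proposition~\ref{T:symsepn_symsepnloc} as given therefore does not close the gap; it imports a false lemma. The paper's proof has the same hole: the conjugation step in the proof of Proposition~\ref{T:Ma} is a similarity only for invertible $\hat N$, and should use $\hat N^{-1}$ rather than $\hat N^{\dagger}$.

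What is true is this. Suppose $\rho_{A,i}\hat N_A\neq 0$ and $\rho_{B,i}\hat N_B\neq 0$. Then $\rho_{A,i}\hat N_A\otimes\rho_{B,i}\hat N_B=\hat N_A\rho_{A,i}\otimes\hat N_B\rho_{B,i}$ forces $\hat N_A\rho_{A,i}=k\,\rho_{A,i}\hat N_A$ with $|k|=1$. Positivity then forces $k=1$. Otherwise the diagonal entries of $\rho_{A,i}$ in the eigenbasis of $\hat N_A$ vanish off $\ker\hat N_A$, so $\rho_{A,i}$ is supported in $\ker\hat N_A$, giving $\rho_{A,i}\hat N_A=0$. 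Hence the $\hat N_{\times}$ half of the proposition, and your chain, go through only under an added hypothesis, for instance that both $\hat N_A$ and $\hat N_B$ are invertible. For $\hat N_{+}$ your argument is complete as written.
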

\begin{proof}
Immediate from Propositions~\ref{T:symsepn_symsepnloc}, and~\ref{T:inclusions}.
\end{proof}
The Symmetric Separability Problem is thus reduced to separability within a restricted class of locally symmetric states. The NE is no longer a witness of symmetric inseparability within this class, because it vanishes identically on it. However, the reduction will enable us to apply, to some extent, known separability results to the problem of symmetric separability.
To that end, let us consider the space of $\hat N_{\textup{local}}$-symmetric linear operators on $\mathscr{H}_A \otimes \mathscr{H}_B$, that is $L(\mathscr{H}_A \otimes \mathscr{H}_B)_{\hat N_{\textup{local}}}$. Because $\hat N_A\otimes 1_B$ and $1_A \otimes \hat N_B$ commute, they have a common diagonalizing basis $\{\Ket{\phi_j^{p,q}}\}$, and 
\beq\label{E:charge_sectors_appendix}
L(\mathscr{H}_A \otimes \mathscr{H}_B)_{\hat N_{\textup{local}}} = \bigoplus_{p,q} K_{p,q}
\eeq
where $K_{p,q}$ is the eigenspace of operators of charge $p,q$. There is no other restriction within each block, so operators in $K_{p,q}$ are of the form $\sum_{jk} m_{jk} \Ket{\phi_j^{p,q}} \Bra{\phi_k^{p,q}}$ for any matrix $\{m_{jk}\}$. In other words, $K_{p,q}=L(V_{p,q})$, with $V_{p,q}=\textup{span}\{\Ket{\phi_j^{p,q}}\}$. 
\begin{proposition}
For some integer $M$, there is an orthonormal basis $\{\Ket{u_j}\otimes\Ket{v_k}\}_{1\leq j,k \leq M}$ of $V_{p,q}$ with $\{\Ket{u_j}\}\subset \mathscr{H}_A$ and $\{\Ket{v_k}\}\subset \mathscr{H}_B$.
\end{proposition}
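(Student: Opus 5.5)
The plan is to identify $V_{p,q}$ explicitly as a tensor product of local eigenspaces, after which the product basis follows at once. Write the spectral decompositions $\hat N_A=\sum_p p\,P_A^{p}$ and $\hat N_B=\sum_q q\,P_B^{q}$, where $P_A^{p}$ (resp.\ $P_B^{q}$) is the orthogonal projector onto the eigenspace $E_A^{p}\subset\mathscr{H}_A$ (resp.\ $E_B^{q}\subset\mathscr{H}_B$). Then
\beq
\hat N_A\otimes 1_B=\sum_p p\,(P_A^{p}\otimes 1_B), \qquad 1_A\otimes\hat N_B=\sum_q q\,(1_A\otimes P_B^{q})
\eeq
are spectral decompositions on $\mathscr{H}_A\otimes\mathscr{H}_B$. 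This holds because $\{P_A^p\otimes 1_B\}_p$ and $\{1_A\otimes P_B^q\}_q$ are complete families of mutually orthogonal projectors, and a spectral decomposition is unique.

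The joint eigenspace of charges $(p,q)$ is the intersection of the two eigenspaces. Its projector is the product of the two commuting projectors, namely $(P_A^{p}\otimes 1_B)(1_A\otimes P_B^{q})=P_A^{p}\otimes P_B^{q}$. The range of this operator is exactly $E_A^{p}\otimes E_B^{q}$, so $V_{p,q}=E_A^{p}\otimes E_B^{q}$.

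To produce the basis, choose any orthonormal basis $\{\Ket{u_j}\}$ of $E_A^{p}$ and any orthonormal basis $\{\Ket{v_k}\}$ of $E_B^{q}$. The vectors $\Ket{u_j}\otimes\Ket{v_k}$ are orthonormal, since $\langle u_j\otimes v_k|u_{j'}\otimes v_{k'}\rangle=\delta_{jj'}\delta_{kk'}$. There are $\dim E_A^{p}\cdot\dim E_B^{q}=\dim V_{p,q}$ of them, so they form a basis. Moreover $\Ket{u_j}\in\mathscr{H}_A$ and $\Ket{v_k}\in\mathscr{H}_B$, as the statement requires.

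The only real obstacle is the indexing in the statement. In general $\dim E_A^{p}\neq\dim E_B^{q}$, so a single range $1\leq j,k\leq M$ is not literally correct. I would state the result with $1\leq j\leq\dim E_A^{p}$ and $1\leq k\leq\dim E_B^{q}$, and note that one may take $M=\max(\dim E_A^p,\dim E_B^q)$ only with the understanding that each index runs up to its own dimension. A consequence worth recording for later use is $K_{p,q}=L(E_A^{p})\otimes L(E_B^{q})$: within each charge block, $\hat N_{\textup{local}}$-symmetric separability reduces to ordinary separability on $E_A^{p}\otimes E_B^{q}$.
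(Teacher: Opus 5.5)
Your proof is correct, and it takes a different and more direct route than the paper.

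The paper starts from a given joint eigenbasis $\{\Ket{\phi_j^{p,q}}\}$ of $V_{p,q}$ and builds the local vectors inductively. It Schmidt-decomposes $\Ket{\phi_1^{p,q}}$. It then splits each later $\Ket{\phi_j^{p,q}}$ into a component in the span of the products found so far plus an orthogonal remainder $\xi$, Schmidt-decomposes $\xi$, and appends the new Schmidt vectors. You instead read off $V_{p,q}=E_A^{p}\otimes E_B^{q}$ from the spectral decompositions, after which any pair of local orthonormal bases works.

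Your route buys three things:
\begin{itemize}
\item It isolates the structural fact that makes the statement true. A generic subspace, such as the span of a single entangled vector, has no product basis. The paper's induction uses this fact only implicitly, to ensure that the products $\Ket{u_j}\otimes\Ket{v_k}$ lie \emph{inside} $V_{p,q}$ rather than merely spanning a space that contains it.
\item It avoids a gap in the inductive step. $\xi$ being orthogonal to all $\Ket{u_j}\otimes\Ket{v_k}$, $j,k\leq r$, does not make its Schmidt vectors orthogonal to the earlier ones. For orthonormal $\Ket{a_i}\in E_A^{p}$ and $\Ket{b_i}\in E_B^{q}$, take $\Ket{\phi_1^{p,q}}=\Ket{a_1}\otimes\Ket{b_1}$ and $\Ket{\phi_2^{p,q}}=\Ket{a_1}\otimes\Ket{b_2}$. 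Then $\xi=\Ket{\phi_2^{p,q}}$, so $\Ket{u_2}=\Ket{u_1}$.
\item It gives the exact count $\dim V_{p,q}=\dim E_A^{p}\cdot\dim E_B^{q}$, which confirms your remark on indexing. An orthonormal product basis labelled by $1\leq j,k\leq M$ forces $M=\dim E_A^{p}=\dim E_B^{q}$. So the statement as written holds only when the local multiplicities coincide, whereas the paper's Schmidt pairing always produces equally many $u$'s and $v$'s.
\end{itemize}

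Your identification $h_A^{p,q}=E_A^{p}$, $h_B^{p,q}=E_B^{q}$ is also exactly what the subsequent convex-hull decomposition into charge sectors needs. The paper's inductive construction has only the minor merit of working from whatever joint eigenbasis is at hand.
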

\begin{proof}
We proceed inductively. Begin with a Schmidt decomposition of $\Ket{\phi_1^{p,q}}$,
\beq
\Ket{\phi_1^{p,q}} = \sum_{i=1}^r \alpha_i \Ket{u_i}\otimes \Ket{v_i},
\eeq
with orthonormal sets $\{\Ket{u_1}, \dots , \Ket{u_r}\}\subset \mathscr{H}_A$, and $\{\Ket{v_1}, \dots , \Ket{v_r}\}\subset \mathscr{H}_B$. For $\Ket{\phi_2^{p,q}}$, let
\beq
\Ket{\phi_2^{p,q}} = \eta + \xi,
\eeq
with $\eta\in\textup{span}\{\Ket{u_j}\otimes \Ket{v_k}\}_{1\leq j,k \leq r}$, and $\xi$ in the orthogonal complement of $\textup{span}\{\Ket{u_j}, \Ket{v_k}\}_{1\leq j,k \leq r}$. Schmidt decompose $\xi$ as
\beq
\xi = \sum_{i=r+1}^{r+s} \alpha_i \Ket{u_i}\otimes \Ket{v_i},
\eeq
so that $\{\Ket{u_1}, \dots , \Ket{u_{r+s}}\}$ and $\{\Ket{v_1}, \dots , \Ket{v_{r+s}}\}$ are orthonormal in $\mathscr{H}_A$ and $\mathscr{H}_B$, respectively, and $\Ket{\phi_1^{p,q}}$, $\Ket{\phi_2^{p,q}}$ are in the span of $\{\Ket{u_j}\otimes\Ket{v_k}\}_{1\leq j,k \leq r+s}$. The process can be continued until each $\Ket{\phi_j^{p,q}}$ is decomposed in such manner, and stops after finitely many steps. The resulting set $\{\Ket{u_j}\otimes\Ket{v_k}\}_{1\leq j,k \leq M}$ is as in the statement.
\end{proof}
The subset of densities in $K_{p,q}$ is, in our notation, the set $\mathscr{D}(V_{p,q}) = \mathscr{D}(h_A^{p,q} \otimes h_B^{p,q})$, where $h_A^{p,q} = \textup{span}\{\Ket{u_j}\}\subset \mathscr{H}_A$, and $h_B^{p,q} = \textup{span}\{\Ket{v_k}\}\subset \mathscr{H}_B$. The subset of densities in $L(\mathscr{H}_A \otimes \mathscr{H}_B)_{\hat N_{\textup{local}}}$ is obtained as the convex hull
\beq
\dnloc = \textup{conv}\bigcup_{p,q}\mathscr{D}(h_A^{p,q} \otimes h_B^{p,q}).
\eeq
Note that $\textup{conv}\bigcup_{p,q}\sep(h_A^{p,q} \otimes h_B^{p,q})\subset\sepnloc$, and \emph{a priori} the inclusion could be proper, because a combination of an entangled state and a separable state, or even a combination of two entangled states, can be separable~\cite{Steiner2003generalized}. But since the charge sectors correspond to disjoint matrix blocks, separation by mixing cannot occur by combining elements of non-equal charges. Summing up, we obtain
\begin{proposition}\label{T:conv_hulls}
$\sepnloc = \textup{conv}\bigcup_{p,q}\sep(h_A^{p,q} \otimes h_B^{p,q})$, where $h_A^{p,q} \subset \mathscr{H}_A$, $h_B^{p,q} \subset \mathscr{H}_B$, and $p,q$ is the charge index associated with $\hat N_{\textup{local}}$-symmetry. Elements from different sectors are linearly independent.
\end{proposition}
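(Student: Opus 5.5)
The plan is to prove the two inclusions separately and then handle linear independence. The easy direction, $\textup{conv}\bigcup_{p,q}\sep(h_A^{p,q} \otimes h_B^{p,q})\subset\sepnloc$, goes as follows. A state in $\sep(h_A^{p,q}\otimes h_B^{p,q})$ is a mixture of products $\sigma_A\otimes\sigma_B$ with $\sigma_A$ supported on $h_A^{p,q}$ and $\sigma_B$ supported on $h_B^{p,q}$. Since $V_{p,q}=h_A^{p,q}\otimes h_B^{p,q}$ lies in the joint eigenspace of charge $(p,q)$, such a state is a density on $\mathscr{H}_A\otimes\mathscr{H}_B$ that is separable and lies in $K_{p,q}$. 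It therefore commutes with $\hat N_A\otimes 1_B$ and $1_A\otimes\hat N_B$. Convexity of $\sepnloc$ then gives the inclusion.

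For the hard direction, I would not try to decompose a given $\rho\in\sepnloc$ directly, since its separable decomposition may mix sectors. Instead I would use Proposition~\ref{T:inclusions}(3): $\sepnloc=\symsepnloc=\mathcal{G}[\sep]$. So every $\rho\in\sepnloc$ can be written
\beq
\rho=\sum_i p_i\,\rho_{A,i}\otimes\rho_{B,i}
\eeq
with $[\rho_{A,i},\hat N_A]=0$ and $[\rho_{B,i},\hat N_B]=0$ for each $i$. Next I decompose each local factor into blocks, $\rho_{A,i}=\sum_p \Pi_p\rho_{A,i}\Pi_p$, and likewise $\rho_{B,i}=\sum_q \Pi_q\rho_{B,i}\Pi_q$. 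Each block is positive; after normalising, each product $\Pi_p\rho_{A,i}\Pi_p\otimes\Pi_q\rho_{B,i}\Pi_q$ is a product density in the sector $(p,q)$, with weights summing to one. This writes $\rho$ as a convex combination of sector-wise product states, which is the desired reverse inclusion.

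The main obstacle is matching the $h$-spaces. The local eigenspace of $\hat N_A$ with eigenvalue $p$ tensored with that of $\hat N_B$ with eigenvalue $q$ must be identified with $V_{p,q}=h_A^{p,q}\otimes h_B^{p,q}$. I would argue this directly: the joint eigenspace of $\hat N_A\otimes 1_B$ and $1_A\otimes\hat N_B$ is exactly $E_A^p\otimes E_B^q$, so I can take $h_A^{p,q}=E_A^p$ and $h_B^{p,q}=E_B^q$. This choice is consistent with the product basis constructed in the preceding proposition.

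Linear independence across sectors follows because distinct pairs $(p,q)$ give orthogonal subspaces $V_{p,q}$. The operator spaces $L(V_{p,q})$ are then mutually orthogonal in the Hilbert–Schmidt inner product, as disjoint blocks in \eqref{E:charge_sectors_appendix}. So nonzero elements from different sectors are linearly independent. In particular, mixing across sectors cannot produce cancellation of entanglement, which is the point raised before the statement.
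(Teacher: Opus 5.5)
Your proof is correct, and on the key step it is more explicit than the paper's. The paper proves only the easy inclusion. It then handles the reverse one with the remark that, since charge sectors are disjoint matrix blocks, ``separation by mixing cannot occur by combining elements of non-equal charges.''

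You instead invoke $\sepnloc=\symsepnloc$ (Proposition~\ref{T:inclusions}) and split each locally symmetric factor into its charge blocks, which gives an explicit sector-wise product decomposition. You could even skip Proposition~\ref{T:inclusions}: for any separable decomposition of $\rho\in\sepnloc$ one has $\rho=\sum_{p,q}(\Pi_p\otimes\Pi_q)\rho(\Pi_p\otimes\Pi_q)=\sum_{i,p,q}p_i\,\Pi_p\rho_{A,i}\Pi_p\otimes\Pi_q\rho_{B,i}\Pi_q$.

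What your route buys is that it isolates the real mechanism. The sector projectors $\Pi_p\otimes\Pi_q$ are product operators, so compressing a separable decomposition into a sector keeps it separable. Block-disjointness alone would not do. The sectors of $\hat N_+$ are also mutually orthogonal blocks, yet take two qubits with $\hat N_A=\hat N_B=\textup{diag}(0,1)$ and $\Ket{\Psi^+}=(\Ket{01}+\Ket{10})/\sqrt{2}$. The separable state $\mathcal{G}_{\hat N_+}[\Ket{{+}{+}}\Bra{{+}{+}}]=\tfrac14\Ket{00}\Bra{00}+\tfrac12\Ket{\Psi^+}\Bra{\Psi^+}+\tfrac14\Ket{11}\Bra{11}$ then has an entangled charge-one block.

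Your direct identification of $V_{p,q}$ with $E_A^p\otimes E_B^q$, the tensor product of local eigenspaces, also makes the paper's inductive Schmidt construction of a product basis unnecessary.

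The one thing to fix is your closing sentence. Hilbert--Schmidt orthogonality of the $L(V_{p,q})$ does give linear independence, but it is not why cross-sector mixing cannot cancel entanglement; the $\hat N_+$ example above has orthogonal sectors too. That conclusion rests on the product form of $\Pi_p\otimes\Pi_q$, which your hard-direction argument already uses.
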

We are now in the position to apply known separability results to the different charge sectors. In particular, numerical evidence~\cite{Zyczkowski1998volume1, Zyczkowski1999volume2} suggests that each ratio 
\beq
\frac{\textup{vol}\,\sep(h_A^{p,q} \otimes h_B^{p,q}) }{ \textup{vol}\,\mathscr{D}(h_A^{p,q} \otimes h_B^{p,q})}
\eeq
 is exponentially small in the dimension $\textup{dim}\, h_A^{p,q} \otimes h_B^{p,q}$.\footnote{Entangled states are known to be vastly more numerous than separable ones. The volume of the separable space $\sep_{1_{AB}}$ is not yet known analytically, but is known to be nonzero with asymptotic lower bounds $\exp [-(d_A d_B)^{\gamma} \ln d_A d_B]$ in common measures, where $\gamma = 1$~\cite{Vidal1999robustness} or $\gamma =2$~\cite{Szarek2005volume}. Nontrivial upper bounds for multipartite separability are also known~\cite{Szarek2005volume}. For one family of natural measures, numerical evidence suggests that the volume of $\sep_{1_{AB}}$ decreases exponentially in Hilbert-space dimension, that is $\mu(\sep_{1_{AB}})= k_1 e^{-k_2 d_A d_B}$ for some positive constants $k_1 , k_2$~\cite{Zyczkowski1998volume1, Zyczkowski1999volume2}.} Geometrically, each $\sep (h_A^{p,q} \otimes h_B^{p,q})$ is comprised between two hyperplanes $C_{p,q}$ and $F_{p,q}$~\cite{Lockhart2002preserving}. The sandwiched hyperslab containing all separable states of sector $p,q$ (and many entangled states too) has thickness going to zero as the dimensions of $h_A^{p,q}$ and $h_B^{p,q}$ tend to infinity. In that limit, the separable states of sector $p,q$ cluster near a hyperplane which contains the maximally-mixed state.

The convex hull of small sets (or even nullsets) need not be small. To conclude from Proposition~\ref{T:conv_hulls} and the smallness of sectors that $\sepnloc$ itself is small, we need to remember that these sectors are mutually orthogonal. We will rely on the following geometrical intuition (see Figure~\ref{F:hull_ortho}) : For a subset $W\subset \mathbb{R}^n$, and its orthogonal complement $W^{\perp}$, let both $C_1\subset W$, $C_2\subset W^{\perp}$ be compact convex, and let $S_2\subset C_1$, $S_2 \subset C_2$ be convex subsets. If both $\textup{vol}\, S_1 / \textup{vol}\, C_1$ and $\textup{vol}\, S_2/\textup{vol}\, C_2$ tend to zero exponentially, so does 
\beq
\frac{\textup{vol conv} (S_1\cup S_2)}{\textup{vol conv} (C_1\cup C_2)}.
\eeq
Applying this intuition to the problem of the size of $\sepnloc$, we have the following:
\begin{conjecture}
If all charge sectors of $\dnloc = \textup{conv}\bigcup_{p,q}\mathscr{D}(h_A^{p,q} \otimes h_B^{p,q})$ have large dimension, we expect $\sepnloc$, the population of (symmetrically) separable states in $\dnloc$, to be exponentially small (in some combination of sector dimensions). 
\end{conjecture}
If $\hat N = \hat N_{+}$ or $\hat N = \hat N_{\times}$, we need to replace the sectors $K_{p,q}$ of Eqn.~\eqref{E:charge_sectors_appendix} by sectors of the form $K_n$ with $n=p+q$ or $n=pq$, respectively. Then the argument follows with minor modifications, giving:
\begin{conjecture}
If all charge sectors of $\dn = \textup{conv}\bigcup_{n}\mathscr{D}(h_A^{n} \otimes h_B^{n})$ have large dimension, we expect $\sepn$, the population of separable states in $\dn$, to be exponentially small (in some combination of sector dimensions). 
\end{conjecture}
\begin{figure}
\includegraphics{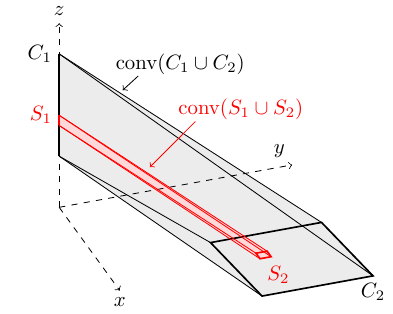}
\caption{\textbf{Convex hull of two compact convex sets belonging to mutually orthogonal spaces.} $C_1\subset W$ and $C_2\subset W^{\perp}$ are compact convex sets in $W$ and $W^{\perp}$, respectively, two mutually orthogonal subspaces of $\mathbb{R}^n$. (In the figure, $W=\textup{span}\{z\}$ and $W^{\perp}=\textup{span}\{x,y\}$.) $S_1$ and $S_2$ are convex subsets of $C_1$ and $C_2$, respectively. If both $\textup{vol}\, S_1 / \textup{vol}\, C_1$ and $\textup{vol}\, S_2/\textup{vol}\, C_2$ tend to zero, so does 
$\textup{vol conv} (S_1\cup S_2) / \textup{vol conv} (C_1\cup C_2)$. In general, it is not enough that only one among $\textup{vol}\,S_1, \textup{vol}\,S_2$ be small.}\label{F:hull_ortho}
\end{figure}

\subsection{Background : Concentration of EE}
\noindent The von Neumann entropy of a state $\rho$ (pure or mixed) is 
\beq\label{E:vN}
S(\rho) = -\text{Tr}(\rho\log\rho).
\eeq
We consider bipartite states in $\mathscr{H}_A \otimes \mathscr{H}_B$, where  $\mathscr{H}_A$ and  $\mathscr{H}_B$ have respective dimensions $d_A$ and $d_B$. The density corresponding to a pure state $\ket\psi$ will be written $\psi = \Ket\psi\Bra\psi$. The partial trace of a state (pure or mixed) will be denoted $\rho_A = \text{Tr}_B \rho$. The bipartite entanglement entropy (EE) of $\rho$ on $A$ is 
\beq
S(\rho_A) = S(\text{Tr}_B \rho),
\eeq
where $S$ is the von Neumann entropy~\eqref{E:vN}. For pure states, $S(\psi_A)=S(\psi_B)$.  

\begin{lemma}[Hayden \emph{et al}.]\label{T:hayden}
If $d_A \geq 3$, the bipartite EE on pure states is Lipshitz with constant $\eta = \sqrt{8}\log d_A$ with respect to the Euclidean norm $\lVert\cdot\rVert_2$, i.e.
\beq
|S(\psi_A) - S(\phi_A) | \leq \eta \lVert \Ket\psi - \Ket\phi \rVert_2 
\eeq
for any pure states $\Ket\psi, \Ket\phi$.
\end{lemma}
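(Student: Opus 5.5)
The natural route is the one taken in Hayden, Leung and Winter: write the entanglement entropy as a composition of two simpler maps and bound the derivative of each. The first map is $\Ket\psi \mapsto \psi_A = \textup{Tr}_B \Ket\psi\Bra\psi$. The second is $\psi_A \mapsto S(\psi_A) = \sum_i \eta_0(\lambda_i)$, where the $\lambda_i$ are the eigenvalues of $\psi_A$ and $\eta_0(x) = -x\log x$. Since the unit sphere is connected by geodesics, it suffices to bound the gradient of $f(\psi) = S(\psi_A)$ along the sphere. The Lipschitz constant then equals the supremum of $\lVert\nabla f\rVert_2$, up to the harmless factor relating chord length to arc length.

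\textbf{Step 1.} Reduce to the Schmidt coefficients. Use the Schmidt decomposition $\Ket\psi = \sum_i \sqrt{\lambda_i}\Ket{i}_A\Ket{i}_B$. Both $f$ and the Euclidean norm are invariant under local unitaries $U_A\otimes U_B$, so only the vector $x_i = \sqrt{\lambda_i}$ on the positive orthant of $S^{d_A-1}$ matters. In these coordinates $f = -\sum_i x_i^2 \log x_i^2$, and
\beq
\partial f/\partial x_i = -2x_i(\log x_i^2 + 1).
\eeq
One still has to check that perturbations leaving the Schmidt form contribute nothing to first order. Off-diagonal perturbations of $\psi_A$ do not change the eigenvalues to first order, by standard perturbation theory. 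Perturbations in the $\ket{i}_A\ket{j}_B$ directions with $j$ outside the Schmidt support change $\psi_A$ only at second order in the zero-eigenvalue block, and $\eta_0$ is monotone there. A cleaner way to handle this is to bound the derivative of $\psi \mapsto \textup{Tr}\,\eta_0(\psi_A)$ directly: the differential is $-\textup{Tr}[(\log\psi_A + 1)\, d\psi_A]$, and $d\psi_A = \textup{Tr}_B(\Ket{d\psi}\Bra\psi + \Ket\psi\Bra{d\psi})$. Cauchy--Schwarz then gives
\beq
\lVert\nabla f\rVert_2^2 \le 4\sum_i \lambda_i(\log\lambda_i + 1)^2 .
\eeq

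\textbf{Step 2.} Bound $\sum_i \lambda_i(1+\log\lambda_i)^2$ over the probability simplex with $d_A$ entries. Expanding,
\beq
\sum_i \lambda_i(1+\log\lambda_i)^2 \le 1 + \sum_i \lambda_i(\log\lambda_i)^2 ,
\eeq
since the cross term $2\sum_i\lambda_i\log\lambda_i = -2S \le 0$. The function $\lambda(\log\lambda)^2$ is not concave near $0$, so Jensen's inequality does not apply directly. I would maximize by Lagrange multipliers, or cite the known estimate $\sum_i\lambda_i(\log\lambda_i)^2 \le (\log d_A)^2$ valid for $d_A\ge3$; the maximizer is essentially uniform. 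This yields a gradient of at most $2\sqrt{1+(\log d_A)^2} \le \sqrt{8}\log d_A$ once $\log d_A \ge 1$, i.e.\ $d_A\ge 3$, which is exactly where the hypothesis enters.

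\textbf{Step 3.} Integrate the gradient bound along the great-circle arc between $\Ket\psi$ and $\Ket\phi$. After fixing the global phase so that $\langle\phi|\psi\rangle \ge 0$, arc length and chord length agree to within a factor absorbed into the constant; Hayden \emph{et al.}\ absorb it in the same way. Non-differentiability where some $\lambda_i = 0$ is handled by continuity, since it occurs on a measure-zero set.

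The main obstacle is Step 2, the sharp bound on $\sum_i \lambda_i(\log\lambda_i)^2$. The interior critical-point analysis is messy because the function is not concave near the boundary, so boundary points of the simplex must be compared separately. I would first try to show that any maximizer has at most two distinct nonzero values, then optimize over that one-parameter family.
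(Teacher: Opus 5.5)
The paper gives no proof of this lemma; it imports it from Hayden, Leung and Winter. Your outline is essentially their gradient argument: the Cauchy--Schwarz bound $4\sum_i\lambda_i(1+\ln\lambda_i)^2$, dropping the cross term $-2S$, bounding $\sum_i\lambda_i\ln^2\lambda_i$, and using $\ln d_A\ge1$. Your logs must be natural for $\log d_A\ge 1$ to mean $d_A\ge3$; the base-2 statement follows by dividing by $\ln 2$.

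\textbf{Step 3 does not give the stated constant as written.} Integrating the gradient along a great-circle arc bounds $|f(\psi)-f(\phi)|$ by $\sup\lVert\nabla f\rVert_2$ times the arc length $\theta$. Even after aligning phases so that $\langle\phi|\psi\rangle\ge0$, the ratio $\theta/\lVert\Ket\psi-\Ket\phi\rVert_2=\theta/(2\sin(\theta/2))$ can reach $\pi/(2\sqrt2)\approx1.11$. Your bound $2\sqrt{1+\ln^2d_A}\le\sqrt8\ln d_A$ already uses up the constant, so nothing is left to absorb this factor.

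A clean repair is to integrate along the chord instead. The left side is phase independent, so take $\langle\phi|\psi\rangle\ge0$. Extend $f$ off the sphere by $F(v)=-\text{Tr}[\rho_A\ln\rho_A]$ with the unnormalized $\rho_A=\text{Tr}_B\Ket{v}\Bra{v}$. Your Cauchy--Schwarz computation goes through verbatim and gives $\lVert\nabla F\rVert_2^2\le4\sum_i\lambda_i(1+\ln\lambda_i)^2$ with $\sum_i\lambda_i=r^2=\lVert v\rVert_2^2$.

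On the segment $v=(1-t)\Ket\phi+t\Ket\psi$ one has $r^2\in[1/2,1]$. Writing $\lambda_i=r^2p_i$, the bound equals $4r^2\bigl[c^2-2cS(p)+\sum_ip_i\ln^2p_i\bigr]$, where $c=1+2\ln r\in[1-\ln2,1]$ is positive. Hence it is at most $4(1+\ln^2d_A)\le8\ln^2d_A$, and the mean value theorem on the chord yields exactly the Euclidean bound.

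Alternatively, use the slack in your bound:
\begin{enumerate}
\item For $d_A\ge3$, $2\sqrt{1+\ln^2d_A}\le\sqrt8\ln d_A/1.04$.
\item The claim is trivial when $\lVert\Ket\psi-\Ket\phi\rVert_2\ge1/\sqrt8$, because $|f(\psi)-f(\phi)|\le\ln d_A$.
\item For shorter chords the arc-to-chord ratio is below $1.01$, which the slack covers.
\end{enumerate}

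\textbf{Step 2 is the real content of the lemma, and you only cite or plan it.} Your plan is viable, and its first step is immediate. At a critical point of $\sum_ig(\lambda_i)$, $g(x)=x\ln^2x$, in the relative interior of any face of the simplex, the Lagrange condition $g'(\lambda_i)=\ln^2\lambda_i+2\ln\lambda_i=\mu$ is quadratic in $\ln\lambda_i$. So at most two distinct nonzero values occur, and what remains is comparing these finitely many one-parameter families with $\ln^2d_A$.

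The hypothesis $d_A\ge3$ is needed here too, not only in $1+\ln^2d_A\le2\ln^2d_A$. Since $g''(1/d)=2d(1-\ln d)$, the uniform point is a local maximum only when $\ln d>1$. For $d=2$ the estimate is false: $\lambda=(0.17,0.83)$ gives about $0.563>\ln^22\approx0.480$.

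\textbf{The ``measure zero'' justification for singular points does not work.} A fixed path can lie entirely in the singular set: the arc between two product states consists of states with $\text{rank}\,\psi_A\le2<d_A$. The correct justification is that $f$ and $F$ are differentiable everywhere, with the same gradient formula under $0\ln0=0$. Under a perturbation of size $\epsilon$, eigenvalues emerging from $\ker\psi_A$ are $O(\epsilon^2)$ and contribute only $O(\epsilon^2\ln\epsilon)=o(\epsilon)$.
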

Lemma~\ref{T:hayden} and the fact that the (projective) space of pure states $P(\mathbb{C}^{d_A d_B}) \simeq S^{2d_A d_B-1}/U(1)$ is essentially a sphere implies the phenomenon of \emph{concentration of measure} for the pure-state bipartite EE. The following lemma is the original example of measure concentration~\cite{Ledoux2001concentration}, and will be applied to different measures of entanglement.
\begin{lemma}[Lévy's Lemma]\label{T:levy}
Let $f : S^{n}\to \mathbb{R}$ be a Lipshitz function with constant $\eta$, i.e.
\beq
|f(x)-f(y)|\leq \eta\lVert x-y\rVert_2, \hspace{2cm}\forall x,y \in S^{n},
\eeq
and let $X$ be a random variable on $S^n$ equipped with the Haar measure. Then  
\beq
\textup{Prob}(|f(X)-\mathbb{E}f(X)| > \alpha) \leq 2e^{-c(n+1)\alpha^2 / \eta^2},
\eeq
where $\mathbb{E}f(X)$ is the expectation value of $f(X)$, and $c$ is a positive constant that may be chosen as $c=(18\pi^3)^{-1}$. 
\end{lemma}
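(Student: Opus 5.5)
The plan is the classical proof of Lévy's Lemma via the isoperimetric inequality on the sphere. This is the route of Ledoux and of Milman–Schechtman, which the paper cites for Lemma~\ref{T:levy}.

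First I reduce to a one-sided median statement. Let $M_f$ be a median of $f(X)$, so that $\textup{Prob}(f\le M_f)\ge 1/2$ and $\textup{Prob}(f\ge M_f)\ge 1/2$. Let $A=\{x\in S^n : f(x)\le M_f\}$ and let $A_t=\{x : \textup{dist}(x,A)\le t\}$ be its $t$-neighbourhood. The Lipschitz condition gives $f(x)\le M_f+\eta\,\textup{dist}(x,A)$, hence $\{f>M_f+\alpha\}\subset S^n\setminus A_{\alpha/\eta}$.

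Next I invoke Lévy's spherical isoperimetric inequality. Among all sets of measure at least $1/2$, the hemisphere minimizes the measure of the $t$-neighbourhood. The complement of the $t$-neighbourhood of a hemisphere is a spherical cap of geodesic radius $\pi/2-t$. Its normalized measure is bounded by $\sqrt{\pi/8}\,e^{-(n-1)t^2/2}$, which follows by direct integration of $\sin^{n-1}\theta$ and comparison with a Gaussian. Applying this to $A$ and, symmetrically, to $\{f\ge M_f\}$, and using that chordal distance is at most geodesic distance, I obtain
\beq
\textup{Prob}(|f-M_f|>\alpha)\le 2\sqrt{\pi/8}\,e^{-(n-1)\alpha^2/(2\eta^2)}.
\eeq

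Then I pass from the median to the mean. Integrating the tail bound gives
\beq
|\mathbb{E}f-M_f|\le\mathbb{E}|f-M_f|=\int_0^\infty \textup{Prob}(|f-M_f|>s)\,ds\le C\eta/\sqrt{n+1}.
\eeq
Now split into two cases. If $\alpha\ge 2C\eta/\sqrt{n+1}$, then $|f-\mathbb{E}f|>\alpha$ forces $|f-M_f|>\alpha/2$, and the median bound applies with $\alpha/2$ in place of $\alpha$. If $\alpha<2C\eta/\sqrt{n+1}$, the exponent $c(n+1)\alpha^2/\eta^2$ is at most $4c\,C^2$, which is small when $c$ is small, so the right-hand side $2e^{-c(n+1)\alpha^2/\eta^2}$ is at least $1$ and the bound holds trivially.

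The final step is tracking constants: replace $n-1$ by $n+1$ (harmless for $n\ge 2$; small $n$ is absorbed by the same triviality argument) and absorb the factor $1/4$ from halving $\alpha$, the $\sqrt{\pi/8}$, and $C$ into a single constant $c$. The value $c=(18\pi^3)^{-1}$ is very lossy and should emerge comfortably from crude bounds in this bookkeeping.

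The main obstacle is the isoperimetric inequality itself. I would not reprove it; I would cite it from~\cite{Ledoux2001concentration}, for instance via two-point symmetrization or via the Gaussian-type bound obtained from the Prékopa–Leindler/Brunn–Minkowski route on the sphere.
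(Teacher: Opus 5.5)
The paper does not prove this lemma. It quotes it, constant included, from the concentration-of-measure literature it cites (Ledoux; Hayden, Leung and Winter). So the only benchmark is the standard derivation, and your isoperimetry $\to$ median $\to$ mean argument is that derivation and is correct. That includes the direction of the chordal/geodesic comparison: geodesic $t$-neighbourhoods sit inside chordal ones, so the chordal tail is dominated by the geodesic cap.

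The bookkeeping you left open does close, with ample room:
\begin{itemize}
\item For $n\ge 2$, your tail bound integrates to $|\mathbb{E}f-M_f|\le \frac{\pi}{2}\,\eta/\sqrt{n-1}\le \frac{\sqrt{3}\,\pi}{2}\,\eta/\sqrt{n+1}$.
\item The large-$\alpha$ case then needs only $c\le (n-1)/(8(n+1))$, i.e.\ $c\le 1/24$.
\item The small-$\alpha$ case needs $3\pi^2 c\le \ln 2$.
\end{itemize}
Both conditions hold for $c=(18\pi^3)^{-1}$.

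Your phrase ``small $n$ is absorbed by the same triviality argument'' is underspecified. For $n=1$ the median bound has exponent $n-1=0$, so the mean--median integral diverges and gives no finite $C$. Handle that case directly. Since the chordal diameter of the sphere is $2$, you have $|f-\mathbb{E}f|\le 2\eta$ everywhere. Hence the probability vanishes for $\alpha\ge 2\eta$. For $\alpha<2\eta$, the right-hand side exceeds $1$ whenever $4c(n+1)\le\ln 2$, which for this $c$ means $n+1\le 96$. So the same argument actually makes the lemma vacuous for every $n\le 95$, which shows how lossy the quoted constant is.
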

The lemma implies that a Lipshitz function on a sphere is essentially constant and equal to its average. By Lemmas~\ref{T:hayden},~\ref{T:levy} we thus have
\begin{proposition}[Hayden \emph{et al.}]\label{T:hayden_EE}
For a pure state $\Ket\phi \in \mathscr{H}_A \otimes \mathscr{H}_B$ chosen uniformly at random,
\beq\label{E:concentration_EE}
\textup{Prob}(|S(\phi_A) - \mathbb{E}S(\phi_A)| > \alpha) \leq 2e^{-c d_A d_B \alpha^2 / \eta^2},
\eeq
where $\mathbb{E}S(\phi_A)$ is the expectation value of the EE on pure states equipped with the Haar measure, and $c$ is a positive constant that may be chosen as $c=2(18\pi^3)^{-1}$. 
\end{proposition}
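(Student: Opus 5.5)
The plan is to view the EE as a function on the unit sphere of $\mathscr{H}_A\otimes\mathscr{H}_B\simeq\mathbb{C}^{d_Ad_B}\simeq\mathbb{R}^{2d_Ad_B}$, i.e.\ on $S^{2d_Ad_B-1}$. We will then apply Lévy's Lemma~\ref{T:levy} to $f(\phi)=S(\phi_A)$. Since $S(\phi_A)$ is invariant under a global phase, $f$ is well defined on the sphere itself, not only on the projective space. The uniform (unitarily invariant) measure on pure states is the pushforward of the Haar measure on the sphere. Hence expectations and probabilities computed on the sphere coincide with those in the statement.

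\textbf{Step 1: Lipschitz constant.} Lemma~\ref{T:hayden} gives $|f(\psi)-f(\phi)|\le \eta\lVert\Ket\psi-\Ket\phi\rVert_2$ with $\eta=\sqrt8\log d_A$ for $d_A\ge3$. Here $\lVert\cdot\rVert_2$ is the Euclidean norm of $\mathbb{C}^{d_Ad_B}$, which coincides with the Euclidean norm of the underlying $\mathbb{R}^{2d_Ad_B}$. So $f$ is $\eta$-Lipschitz on $S^{n}$ with $n=2d_Ad_B-1$.

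\textbf{Step 2: Apply Lévy's lemma.} With $n+1=2d_Ad_B$, Lemma~\ref{T:levy} yields
\[
\textup{Prob}\bigl(|S(\phi_A)-\mathbb{E}S(\phi_A)|>\alpha\bigr)\le 2e^{-c'\,2d_Ad_B\,\alpha^2/\eta^2},\qquad c'=(18\pi^3)^{-1}.
\]
Absorbing the factor $2$ into the constant gives $c=2(18\pi^3)^{-1}$ and the claimed bound $2e^{-c d_Ad_B\alpha^2/\eta^2}$. The mean is the same whether it is taken over the sphere or over projective space, by phase invariance.

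\textbf{Main obstacle.} The only delicate point is the sphere-versus-projective-space identification: the relevant dimension is the real sphere dimension, and it is the one that must enter Lévy's lemma. Phase invariance of $f$ settles this. The alternative route, replacing $\lVert\psi-\phi\rVert$ by a minimum over phases, only improves the Lipschitz estimate, so nothing is lost either way. The hypothesis $d_A\ge3$ is inherited from Lemma~\ref{T:hayden}.
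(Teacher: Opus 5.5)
Your proposal is correct and follows the paper's route. The paper derives this proposition directly from the Lipschitz bound of Lemma~\ref{T:hayden} and L\'evy's Lemma~\ref{T:levy} on $S^{2d_Ad_B-1}$, and your bookkeeping $n+1=2d_Ad_B$ is exactly what produces the constant $c=2(18\pi^3)^{-1}$ (with $\eta=\sqrt8\log d_A$ and $d_A\ge3$ inherited from Lemma~\ref{T:hayden}).
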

Hence pure-state EE is essentially constant and equal to its average. Together with the provable fact that 
\beq
\mathbb{E}S(\phi_A) > \log d_A - \frac{1}{2\ln2}\frac{d_A}{d_B},
\eeq
one concludes that as long as $d_A \ll d_B$, the EE is essentially always close to its maximum of $\log d_A$. Pure bipartite states generically have near-maximal entanglement~\cite{Hayden2006aspects}. 

\subsection{Concentration of NE}\label{S:conc_NE_appendix}
Let $\rho$ be a density of $\mathscr{H}_A \otimes \mathscr{H}_B$, and let $\hat N_A$ be an observable on $A$. The nonselective measurement of $\hat N_A$ results in the state $\NA \rho = \sum_{N_A} \Pi_{N_A}\rho\Pi_{N_A}$, where $\Pi_{N_A}$ is the projector onto the sector of charge $N_A$. The \emph{number entanglement} (NE)~\cite{Ma2022symmetric} of $\rho$ with respect to $\hat N_A$ is
\beq
\NE\rho = S(\NA\rho) - S(\rho).
\eeq
Note that on pure states, $\NE\phi = S(\NA\phi)$. For densities $\rho$ on $\mathscr{H}_A \otimes \mathscr{H}_B$ we will write $\Ket{\pu\rho}$ for a purification in a common larger space $\mathscr{H}_A \otimes \mathscr{H}_B \otimes \mathscr{H}_C$, i.e.
\beq
\text{Tr}_C\pu\rho = \text{Tr}_C\Ket{\pu\rho}\Bra{\pu\rho} = \rho.
\eeq
Using purifications of $\rho$ and its post-measurement relative $\NA\rho$, the NE can expressed as a difference of pure state EE's:
\beq\label{E:NE_as_EE}
\begin{aligned}
\NE\rho &= S(\NA\rho) - S(\rho)\\
	&= S(\text{Tr}_C\pu{\NA\rho})-S(\text{Tr}_C\pu{\rho})\\
	&= S\big((\pu{\NA\rho})_{A\cup B}\big)- S\big((\pu{\rho})_{A\cup B}\big).
\end{aligned}
\eeq
\begin{proposition}\label{T:NE_lipshitz}
When $\mathscr{H}_A\otimes\mathscr{H}_B$ has dimension $d_A d_B\geq 3$, the NE with respect to $\hat N_A$ is Lipshitz on any space of purifications, i.e.
\beq
|\NE\sigma - \NE\rho|\leq \eta \big\lVert \Ket{\pu{\sigma}}- \Ket{\pu{\rho}}\big\rVert_2,
\eeq
with $\eta = 4\sqrt{2}\log (d_A d_B)$, and where $\Ket{\pu{\sigma}}, \Ket{\pu{\rho}}$ are any purifications of $\sigma, \rho$ on a common space. Obviously, $\eta$ may be relaxed to $4\sqrt{2}\log D$, where $D \geq d_A d_B$ is the dimension of the purification space. 
\end{proposition}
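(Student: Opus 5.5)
The plan is to write the NE as a difference of two pure-state entanglement entropies via Eqn.~\eqref{E:NE_as_EE}, and to apply Lemma~\ref{T:hayden} to each term with the bipartition $(A\cup B)\,|\,C$. The effective ``$d_A$'' of Lemma~\ref{T:hayden} is then $d_A d_B\geq 3$, which is exactly the dimensional hypothesis. Fix a common purification space $\mathscr{H}_A\otimes\mathscr{H}_B\otimes\mathscr{H}_C$ and purifications $\Ket{\pu\sigma},\Ket{\pu\rho}$. Since $S(\rho)=S\big((\pu\rho)_{A\cup B}\big)$, Lemma~\ref{T:hayden} gives directly
\beq
|S(\sigma)-S(\rho)|\leq \sqrt{8}\log(d_Ad_B)\,\big\lVert \Ket{\pu\sigma}-\Ket{\pu\rho}\big\rVert_2 .
\eeq

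For the post-measurement term, I need purifications of $\NA\sigma$ and $\NA\rho$ that are built linearly from the given ones and do not increase distances. I will use a Stinespring dilation of the dephasing channel: adjoin a register $\mathscr{H}_R$ with orthonormal basis $\{\Ket{N_A}\}$ labelled by the eigenvalues of $\hat N_A$, and define the isometry
\beq
V=\sum_{N_A}\Pi_{N_A}\otimes\Ket{N_A}_R ,
\eeq
acting on $A$ and tensored with the identity on $B\otimes C$. Then $V\Ket{\pu\rho}$ is a purification of $\NA\rho$ with purifying system $C\otimes R$. Because $V$ is an isometry,
\beq
\lVert V\Ket{\pu\sigma}-V\Ket{\pu\rho}\rVert_2=\lVert\Ket{\pu\sigma}-\Ket{\pu\rho}\rVert_2 .
\eeq
Applying Lemma~\ref{T:hayden} again to the cut $(A\cup B)\,|\,(C\cup R)$ bounds the difference of the first terms by the same constant times the same distance. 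The triangle inequality then yields the constant $2\sqrt{8}\log(d_Ad_B)=4\sqrt{2}\log(d_Ad_B)$, as claimed.

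The main subtlety is that Lemma~\ref{T:hayden} is stated with the Lipschitz constant depending only on the dimension of the subsystem whose entropy is computed, here $A\cup B$, and not on the dimension of the purifying side. I need to check that its proof in~\cite{Hayden2006aspects} places no restriction of the form $d_A\leq d_B$ that the enlarged purifier $C\otimes R$ might break. If the enlargement did matter, I would fall back on the relaxed constant $4\sqrt{2}\log D$, with $D$ the dimension of the full purification space. This constant is monotone in the dimension and dominates $4\sqrt{2}\log(d_Ad_B)$, which explains the last sentence of the statement: the bound remains valid with $\eta$ replaced by $4\sqrt{2}\log D$ for any $D\geq d_Ad_B$.

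A second check is the global-phase issue. Lemma~\ref{T:hayden} holds for arbitrary representative vectors, so ``any purifications'' is legitimate: the bound is stated with whichever vectors are given, and minimizing over phases or over unitaries on $C$ can only improve it.
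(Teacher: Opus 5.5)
Your proof is correct and shares the paper's skeleton: rewrite the NE via \eqref{E:NE_as_EE}, apply Lemma~\ref{T:hayden} to each entropy term, and use the triangle inequality. You differ in how you control the post-measurement term.

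The paper fixes purifications of $\rho$ and $\NA\rho$ and takes infima over purifications of $\sigma$ and $\NA\sigma$. It identifies these infima with Bures distances via Uhlmann's theorem. It then uses monotonicity of fidelity under the dephasing channel to get $D_B(\NA\sigma,\NA\rho)\le D_B(\sigma,\rho)\le\lVert\Ket{\pu\sigma}-\Ket{\pu\rho}\rVert_2$.

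You instead push the given purifications through the Stinespring isometry $V=\sum_{N_A}\Pi_{N_A}\otimes\Ket{N_A}_R$. This produces purifications of $\NA\sigma$ and $\NA\rho$ at exactly the same distance; it is essentially the standard proof of fidelity monotonicity, specialized to the one channel you need.

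\emph{What your route buys.} It is more elementary and constructive: no Uhlmann, no fidelity monotonicity, no infima. It also makes explicit a point the paper leaves implicit. $\NA\rho$ can have larger rank than $\rho$, so its purifications may need a larger purifier than $\mathscr{H}_C$, and your register $R$ supplies this at no cost.

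\emph{What the paper's route buys.} It passes through the Bures distance and so yields Corollary~\ref{T:Bures}, which Proposition~\ref{T:nonzero_mean} needs, along the way. From your bound you would recover it by minimizing over purifications, which uses Uhlmann once.

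The subtlety you flag is not an obstacle, and it points the wrong way. Enlarging the purifier can only help a hypothesis of the form $\dim(\text{entropy side})\le\dim(\text{purifier})$. More to the point, the gradient bound behind Lemma~\ref{T:hayden} involves only the eigenvalues of the marginal on the entropy side, of which there are at most $d_Ad_B$. So $\sqrt8\log(d_Ad_B)$ holds across both cuts $(A\cup B)|C$ and $(A\cup B)|(C\cup R)$, whatever the purifier dimensions. No fallback is needed; the final sentence of the statement is just monotonicity of $\log$ for $D\ge d_Ad_B$.
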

\begin{proof}
From~\eqref{E:NE_as_EE}, we have
\beq
\begin{aligned}
|\NE\sigma - \NE\rho| \leq &\;|S\big((\pu{\NA\sigma})_{A\cup B}\big)- S\big((\pu{\NA\rho})_{A\cup B}\big)|\\
	 &+ |S\big((\pu{\sigma})_{A\cup B}\big)- S\big((\pu{\rho})_{A\cup B}\big)|
\end{aligned}
\eeq
From Lemma~\ref{T:hayden}, it follows that
\beq\label{E:ineq}
|\NE\sigma - \NE\rho| \leq \eta_0\left(\big\lVert \Ket{\pu{\NA\sigma}}- \Ket{\pu{\NA\rho}}\big\rVert_2
+ \big\lVert \Ket{\pu{\sigma}}- \Ket{\pu{\rho}}\big\rVert_2\right),
\eeq
where $\eta_0 = \sqrt{8}\log (d_A d_B)$. Note that inequality~\eqref{E:ineq} is true for any purifications of $\sigma$, $\NA\sigma$, $\rho$, and $\NA\rho$. Fixing $\pu{\NA\rho}$ and $\pu{\rho}$, we thus have
\beq\label{E:ineq_inf}
\begin{aligned}
|\NE\sigma - \NE\rho| \leq &\hspace{5pt}\eta_0\inf_{\Ket{\pu{\NA\sigma}}}\big\lVert \Ket{\pu{\NA\sigma}}- \Ket{\pu{\NA\rho}}\big\rVert_2\\
	&+ \eta_0\inf_{\Ket{\pu{\sigma}}}\big\lVert \Ket{\pu{\sigma}}- \Ket{\pu{\rho}}\big\rVert_2,
\end{aligned}
\eeq
where the infima are taken on purifications of $\NA\sigma$ and $\sigma$. Now
\beq
\begin{aligned}
\inf_{\Ket{\pu{\NA\sigma}}}\big\lVert \Ket{\pu{\NA\sigma}}- \Ket{\pu{\NA\rho}}\big\rVert_2^2 &=
2\inf_{\Ket{\pu{\NA\sigma}}}\left(1-\text{Re}\langle\pu{\NA\sigma}\Ket{\pu{\NA\rho}}\right)\\
	&= 2\Bigg(1-\sup_{\Ket{\pu{\NA\sigma}}}\langle\pu{\NA\sigma}\Ket{\pu{\NA\rho}}\Bigg).
\end{aligned}
\eeq
Note that $\sup_{\Ket \psi}\text{Re}\langle\psi\Ket\phi = \sup_{\Ket \psi}\langle\psi\Ket\phi$ whenever one is allowed to change the global phase of $\Ket\psi$. Using Uhlmann's theorem, we find
\beq
\inf_{\Ket{\pu{\NA\sigma}}}\big\lVert \Ket{\pu{\NA\sigma}}- \Ket{\pu{\NA\rho}}\big\rVert_2^2 = 
2\left(1-\sqrt{F(\NA\sigma , \NA\rho)}\right),
\eeq
where $F(\cdot , \cdot)$ is the fidelity between two quantum states. Because the quantum channel $\mathcal{E}(\cdot) = \NA{(\cdot)}$ is trace-preserving and completely positive, monotonicity of fidelity implies
\beq
F(\NA\sigma , \NA\rho ) \geq F(\sigma, \rho).
\eeq
Hence,
\beq\label{E:Bures}
\inf_{\Ket{\pu{\NA\sigma}}}\big\lVert \Ket{\pu{\NA\sigma}}- \Ket{\pu{\NA\rho}}\big\rVert_2^2 \leq 2\left(1-\sqrt{ F(\sigma, \rho)}\right) =\inf_{\Ket{\pu{\sigma}}}\big\lVert \Ket{\pu{\sigma}}- \Ket{\pu{\rho}}\big\rVert_2^2.
\eeq
Combining the above with~\eqref{E:ineq_inf} finally gives
\beq
|\NE\sigma - \NE\rho|\leq \eta \big\lVert \Ket{\pu{\sigma}}- \Ket{\pu{\rho}}\big\rVert_2,
\eeq
with $\eta = 4\sqrt{2}\log (d_A d_B)$. Note that $\Ket{\pu{\sigma}}, \Ket{\pu{\rho}}$ are any purifications of $\sigma, \rho$ on a common space.
\end{proof}
\begin{corollary}\label{T:Bures}
When $\mathscr{H}_A\otimes\mathscr{H}_B$ has dimension $d_A d_B\geq 3$, the NE with respect to $\hat N_A$ is (Lipshitz) continuous in the Bures distance,
\beq
|\NE\sigma - \NE\rho|\leq \eta D_B (\sigma , \rho).
\eeq
\end{corollary}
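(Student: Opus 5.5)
The corollary follows from the last step of the proof of Proposition~\ref{T:NE_lipshitz}, before a particular pair of purifications is fixed. The plan is to keep the infimum over purifications and identify it with the Bures distance through Uhlmann's theorem.

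First, recall that the Bures distance is $D_B(\sigma,\rho)^2 = 2\bigl(1-\sqrt{F(\sigma,\rho)}\bigr)$, with $F$ the fidelity convention used above. By Uhlmann's theorem, already invoked in the proof of Proposition~\ref{T:NE_lipshitz}, this equals $\inf \lVert \Ket{\pu\sigma}-\Ket{\pu\rho}\rVert_2^2$. Here $\Ket{\pu\rho}$ is fixed and the infimum runs over purifications $\Ket{\pu\sigma}$ on a common space $\mathscr{H}_A\otimes\mathscr{H}_B\otimes\mathscr{H}_C$ with $\dim\mathscr{H}_C\geq d_Ad_B$. The infimum is attained, since the set of purifications is compact (it is a unitary orbit on $C$ together with a phase).

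Second, start from inequality~\eqref{E:ineq_inf}, where both infima appear. The first infimum, over purifications of $\NA\sigma$, is bounded by~\eqref{E:Bures}, which rests on monotonicity of the fidelity under the channel $\NA{(\cdot)}$. This gives $\inf\lVert \Ket{\pu{\NA\sigma}}-\Ket{\pu{\NA\rho}}\rVert_2 \leq D_B(\sigma,\rho)$. The second infimum is exactly $D_B(\sigma,\rho)$. Adding the two and multiplying by $\eta_0=\sqrt8\log(d_Ad_B)$ yields $|\NE\sigma-\NE\rho|\leq 2\eta_0 D_B(\sigma,\rho)=\eta D_B(\sigma,\rho)$ with $\eta=4\sqrt2\log(d_Ad_B)$.

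The only point needing care is the legitimacy of~\eqref{E:ineq_inf}. One must check that Lemma~\ref{T:hayden}, applied to the bipartition $(A\cup B)\,|\,C$, holds for every choice of purifications with $\Ket{\pu\rho}$ and $\Ket{\pu{\NA\rho}}$ fixed. It also requires the dimension $d_Ad_B\geq3$ on the $A\cup B$ side; the purification space must be large enough, but its size does not enter $\eta_0$. Once that is accepted, the left-hand side does not depend on the purifications, so the infimum can be taken over each term separately and no further estimate is needed.
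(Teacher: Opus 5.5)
Your proposal is correct and takes essentially the same route as the paper's proof: you substitute the Uhlmann-plus-monotonicity bound~\eqref{E:Bures} into~\eqref{E:ineq_inf}, so the second infimum equals $D_B(\sigma,\rho)$, the first is at most $D_B(\sigma,\rho)$, and the total is $2\eta_0 D_B(\sigma,\rho)=\eta D_B(\sigma,\rho)$. Your extra remarks, that the infimum is attained and that only $d_Ad_B$ (not the purification dimension) enters $\eta_0$, are correct details the paper leaves implicit.
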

\begin{proof}
Immediate from the previous proposition, Eqn.~\eqref{E:Bures}, and the definition of the Bures distance,
\beq
D_B(\sigma , \rho) = \sqrt{2\left(1-\sqrt{ F(\sigma, \rho)}\right)}.
\eeq
We note that the Bures distance defines a metric on $\mathscr{D}$~\cite{Hayashi2006quantum, Bhatia2019bures}.\end{proof}
Normalized pure state $\Ket\Psi \in \mathscr{H}_A \otimes \mathscr{H}_B \otimes \mathscr{H}_C$ live on the sphere $S^{2d_A d_B d_C -1}$, and are many-to-one surjective onto the densities $\rho\in \mathscr{D}(\mathscr{H}_A \otimes \mathscr{H}_B)$ by the action of the partial trace :
\beq
\begin{aligned}
\pi \colon S^{2d_A d_B d_C -1} &\to \mathscr{D}(\mathscr{H}_A \otimes \mathscr{H}_B)\\
	\Ket\Psi \phantom{bla}		&\mapsto \phantom{blab}\text{Tr}_C \Psi.
\end{aligned}
\eeq
The sphere is thus partitioned into fibers
\beq
\pi^{-1}(\rho) = \big\{(\mathcal{W}\otimes \mathcal{U})\Ket{\pu\rho} \colon \mathcal{W}\in C(\rho),\; \mathcal{U}\in U(d_C)\big\},
\eeq
where $\Ket{\pu\rho}\in \mathscr{H}_A \otimes \mathscr{H}_B \otimes \mathscr{H}_C$ is any fixed purification of $\rho$, and $C(\rho) = \{\mathcal{W}\in U(d_A d_B) \colon \mathcal{W}\rho\mathcal{W}^{\dagger} = \rho\}$. From now on, we will simply write $\mathscr{D}$ instead of $\mathscr{D}(\mathscr{H}_A \otimes \mathscr{H}_B)$, for brevity. The convex set of densities inherits a measure from that construction given as 
\beq\label{E:muBvsmuHaar}
\mu_{\mathscr{D}} (b) = \mu_{\text{Haar}}(\cup_{\rho\in b} \;\pi^{-1}(\rho)),
\eeq
where $\mu_{\text{Haar}}$ is the normalized Haar measure on the sphere. Clearly, the inherited measure is a probability measure on $\mathscr{D}$ : for $b\subseteq\mathscr{D}$, we will use $\text{Prob}_{\mathscr{D}}(b)$ and $\mu_{\mathscr{D}} (b)$ interchangeably. States chosen at random with respect to $\mu_{\mathscr{D}}$ agree with the `rank-$s$ random states' of~\cite{Hayden2006aspects}, and with the `random induced mixed states' of~\cite{Shapourian2021entanglement}. They are based on the old practice of inducing probability measures on mixed states by partial tracing~\cite{Braunstein1996geometry, Hall1998random, Zyczkowski2001induced}. Now let
\beq
\begin{aligned}
f = \Delta S_{\hat N_A}\circ \pi\hspace{5pt}\colon\; S^{2d_A d_B d_C -1} &\to \mathbb{R}\\
	\Ket\Psi \phantom{bla}		&\mapsto \NE{\text{Tr}_C \Psi}.
\end{aligned}
\eeq
By Proposition~\ref{T:NE_lipshitz}, $f$ is Lipshitz on the sphere,
\beq
\big| f(\Ket\Psi) - f(\Ket\Phi)\big|\leq \eta\big\lVert\Ket\Psi - \Ket\Phi \big\rVert_2,
\eeq
with constant $\eta = 4\sqrt{2}\log (d_A d_B)$. By Lévy's Lemma~\ref{T:levy}, it presents the phenomenon of concentration of measure, and is thus essentially always equal to its expectation value :
\beq\label{E:conc_f}
\text{Prob}\big(\big| f(\Ket\Psi) - \mathbb{E}f(\Ket\Psi) \big| > \alpha \big) \leq 2e^{-cd_A d_B d_C \alpha^2 / \eta^2},
\eeq
where $\Ket\Psi$ is chosen at random on $S^{2d_A d_B d_C -1}$ equipped with the Haar measure. The constant $c$ is as in Proposition~\ref{T:hayden_EE}. Moreover, the expectation $\mathbb{E}\NE\rho$ on $\mathscr{D}$ is equal to the expectation $\mathbb{E}f(\Psi)$ on $S^{2d_A d_B d_C -1}$. Indeed, by the definition of the Lebesgue integral,
\beq\label{E:equal_expectations}
\begin{aligned}
\mathbb{E}f(\Ket\Psi) = \int_{\text{sphere}} f(\Ket\Psi) \;d\mu_{\text{Haar}} &= \int_0^{\infty}\mu_{\text{Haar}}(\{\Ket\Psi \colon f(\Ket\Psi)>x\})\; dx\\
	&= \int_0^{\infty}\mu_{\mathscr{D}}(\{\rho \colon \NE\rho >x\})\; dx\\
	&= \int_{\mathscr{D}} \NE\rho \; d\mu_{\mathscr{D}}\\
	&=\mathbb{E}\NE\rho ,
\end{aligned}
\eeq
where we have used~\eqref{E:muBvsmuHaar}. Eqns.~\eqref{E:muBvsmuHaar} and~\eqref{E:equal_expectations} allows us to rewrite~\eqref{E:conc_f} as
\beq\label{E:conc_NE}
\text{Prob}_{\mathscr{D}}(| \NE\rho - \mathbb{E}\NE\rho | > \alpha) \leq 2e^{-cd_A d_B d_C \alpha^2 / \eta^2}.
\eeq
In this inequality, $\rho$ is chosen at random with respect to the measure $\mu_{\mathscr{D}}$, which by construction depends on $d_C$. (This measure would be called $P_{N,K}$ with $N=d_A d_B$ and $K=d_C$ in the terminology of~\cite{Zyczkowski2001induced}.) Taking $(\mathscr{H}_A \otimes \mathscr{H}_B)^{\otimes 2}$ as the canonical purification space (i.e. the measure $P_{N,N}$ or equivalently the \emph{Hilbert-Schmidt measure} also defined in~\cite{Zyczkowski2001induced}, which also shows a relation to the complex Ginibre ensemble), we find:
\begin{proposition}\label{T:conc_NE_canonical}
With overwhelming probability, a (mixed or pure) random state's NE is almost exactly equal to the average NE on mixed states:
\beq\label{E:conc_NE_canonical}
\textup{Prob}_{\mathscr{D}}(| \NE\rho - \mathbb{E}\NE\rho | > \alpha) \leq 2e^{-cd_A^2 d_B^2 \alpha^2 / \eta^2},
\eeq
where $\rho$ is chosen at random with respect to the measure $\mu_{\mathscr{D}}$ inherited from the canonical construction of purifications.
\end{proposition}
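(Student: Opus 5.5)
This is essentially the specialization $d_C = d_A d_B$ of inequality~\eqref{E:conc_NE}, so the plan is to rerun the preceding chain of arguments with the canonical purification space fixed.

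\emph{Setup.} Take $\mathscr{H}_C \cong \mathscr{H}_A\otimes\mathscr{H}_B$, so $d_C = d_A d_B$. The unit sphere of purifications is then $S^{2d_A^2 d_B^2-1}$. Push the normalized Haar measure forward through $\pi\colon\Ket\Psi\mapsto\text{Tr}_C\Psi$ to obtain $\mu_{\mathscr{D}}$ via~\eqref{E:muBvsmuHaar}. This pushforward is the Hilbert--Schmidt measure $P_{N,N}$ of~\cite{Zyczkowski2001induced}. Since $d_C = d_A d_B$, $\pi$ is surjective onto all of $\mathscr{D}$, including full-rank states, so $\mu_{\mathscr{D}}$ is a probability measure on $\mathscr{D}$.

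\emph{Lipschitz bound.} Define $f = \Delta S_{\hat N_A}\circ\pi$ on the sphere. For two unit vectors $\Ket\Psi,\Ket\Phi$, their images $\pi(\Psi),\pi(\Phi)$ have these vectors as purifications on the common space $\mathscr{H}_A\otimes\mathscr{H}_B\otimes\mathscr{H}_C$. Proposition~\ref{T:NE_lipshitz} therefore gives
\[
|f(\Ket\Psi)-f(\Ket\Phi)|\le \eta\,\lVert\Ket\Psi-\Ket\Phi\rVert_2,
\qquad \eta = 4\sqrt2\log(d_Ad_B).
\]
This needs $d_Ad_B\ge3$; for smaller dimensions the statement is either vacuous or one must accept a relaxed $\eta$.

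\emph{Concentration.} Apply Lévy's Lemma~\ref{T:levy} on $S^{n}$ with $n+1 = 2d_A^2d_B^2$. This yields
\[
\text{Prob}\big(|f-\mathbb{E}f|>\alpha\big)\le 2\exp\!\left(-c'\,2d_A^2d_B^2\,\alpha^2/\eta^2\right).
\]
Absorbing the factor $2$ into the constant gives $c = 2(18\pi^3)^{-1}$, matching Proposition~\ref{T:hayden_EE}. Two translations then turn this into the stated form. First, the event $\{|f-\mathbb{E}f|>\alpha\}$ is exactly $\pi^{-1}$ of $\{\rho : |\NE\rho-\mathbb{E}f|>\alpha\}$, so its Haar probability equals its $\mu_{\mathscr{D}}$-probability. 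Second, $\mathbb{E}f = \mathbb{E}_{\mu_{\mathscr{D}}}\NE\rho$ by the layer-cake computation~\eqref{E:equal_expectations}. That computation applies because $\NE\rho\ge0$, which follows from the entropy-nondecreasing property of the pinching $\NA{(\cdot)}$.

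\emph{Main obstacle.} The only delicate points are measure-theoretic. One is the measurability of the saturated sets $\cup_{\rho\in b}\pi^{-1}(\rho)$, which holds because $\pi$ is continuous and $b$ is Borel. The other is a subtlety: Lévy's Lemma is stated on the sphere, while purifications are defined only up to a global phase. Since $f$ is constant on $U(1)$-orbits and the Haar measure is $U(1)$-invariant, working on the sphere rather than on projective space changes nothing.
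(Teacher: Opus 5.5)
Your proposal is correct and follows the paper's own argument. The paper pushes the Haar measure forward through $\pi$, uses Proposition~\ref{T:NE_lipshitz} to make $f=\Delta S_{\hat N_A}\circ\pi$ Lipschitz, applies L\'evy's Lemma to get $2e^{-cd_Ad_Bd_C\alpha^2/\eta^2}$ with $c=2(18\pi^3)^{-1}$, identifies $\mathbb{E}f$ with $\mathbb{E}\NE\rho$ by the layer-cake computation, and only then sets $d_C=d_Ad_B$; fixing $d_C=d_Ad_B$ from the start, as you do, changes nothing, and your explicit remarks on $\NE\rho\geq 0$ (needed for the $\int_0^\infty$ step) and on the Borel measurability of $\pi^{-1}(b)$ fill in points the paper leaves implicit.
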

A refinement of Proposition~\ref{T:conc_NE_canonical} which will be useful in the next section is obtained by considering only states that lie in a closed convex subset $\mathscr{K}\subseteq\mathscr{D}$. Consider the sequence of smooth mappings :
\beq\label{E:maps_1}
\begin{aligned}
&P(\mathbb{C}^{d_A^2 d_B^2})\simeq S^{2d_A^2 d_B^2 - 1}/U(1) &&\overset{i}{\longrightarrow} &&(\mathscr{H}_A \otimes \mathscr{H}_B)^{\otimes 2}\big\lvert_{\text{ran } i}
 &&\overset{j}{\longrightarrow}&&\mathscr{D}\left((\mathscr{H}_A \otimes \mathscr{H}_B)^{\otimes 2}\right)
 &&\overset{\text{Tr}_{\mathscr{H}_A \otimes \mathscr{H}_B}}{\longrightarrow}&&\mathscr{D}\\
 &\hspace{5em}\Ket{\Psi} &&\longmapsto &&\phantom{llllll}\Ket{\Psi} &&\longmapsto &&\phantom{lllllllll}\Ket{\Psi}\Bra{\Psi} &&\hspace{11pt}\longmapsto &&\rho
\end{aligned}
\eeq
where $(\cdot)\big\lvert_{\text{ran } i}$ means restriction to the range of $i$. On the one hand, $j\circ i (P(\mathbb{C}^{d_A^2 d_B^2}))$ is the boundary of the closed convex set $\mathscr{D}\left((\mathscr{H}_A \otimes \mathscr{H}_B)^{\otimes 2}\right)$, whose interior is the convex open set of impure densities on $(\mathscr{H}_A \otimes \mathscr{H}_B)^{\otimes 2}$. On the other hand, $\text{Tr}_{\mathscr{H}_A \otimes \mathscr{H}_B}^{-1}(\mathscr{K})$ is a closed convex subset of $\mathscr{D}\left((\mathscr{H}_A \otimes \mathscr{H}_B)^{\otimes 2}\right)$, by the linearity of $\text{Tr}_{\mathscr{H}_A \otimes \mathscr{H}_B}$. Therefore, the intersection of these sets,
\beq
\mathfrak{X}=j\circ i (P(\mathbb{C}^{d_A^2 d_B^2})) \cap \text{Tr}_{\mathscr{H}_A \otimes \mathscr{H}_B}^{-1}(\mathscr{K})
\eeq
(or equivalently the intersection of their boundaries) is closed, compact, and is the tangential intersection of two real topological manifolds of respective dimension $2d_A^2 d_B^2 -1$ and $d < 2d_A^4 d_B^4$. The same will hold for the homeomorphic preimage of $\mathfrak{X}$ in $(\mathscr{H}_A \otimes \mathscr{H}_B)^{\otimes 2}\big\lvert_{\text{ran } i}$, the set $\chi=j^{-1}(\mathfrak{X})$.\footnote{We emphasize that $\mathfrak{X}$ is a subset of $j((\mathscr{H}_A \otimes \mathscr{H}_B)^{\otimes 2}\big\lvert_{\text{ran } i})$, and that $j^{-1}$ is a homeomorphism between this set and the set $(\mathscr{H}_A \otimes \mathscr{H}_B)^{\otimes 2}\big\lvert_{\text{ran } i}$.}  A non-transversal intersection of manifolds will not be a manifold in general. The local Euclidean dimension may change from point to point, and some points may even lack a Euclidean neighborhood if they present a bifurcation for instance. Let $n_{\Ket{\Psi}}\geq 1$ be the dimension of an open Euclidean neighborhood of $\Ket{\Psi}\in\chi$, and if $\Ket{\Psi}$ does not have a Euclidean neighborhood, set $n_{\Ket{\Psi}}=0$. Define $D=\text{max}_{\Ket{\Psi}\in \chi} n_{\Ket{\Psi}}$, and let $\chi_0$ be the closure of the largest topological manifold of dimension $D$ in $\chi$. The integer $D$ will be called the \emph{purification dimension} of the set $\mathscr{K}$, or somewhat abusively the \emph{dimension} of $\mathscr{K}$ when the context is clear, and $\chi_0$ will be called the \emph{purifying manifold} of $\mathscr{K}$. We suppose further that $\chi_0$ is triangulable.

\emph{Remark.} In focusing on the purifying manifold $\chi_0$, we are henceforth assuming that the bulk of $\chi$ (or $\mathfrak{X}$) consists of points with local Euclidean dimension $D$. Thus, in the measure-theoretic arguments to come, we will be neglecting any region of lower Euclidean dimension or points without a Euclidean neighborhood. We believe that this assumption is quite mild. Indeed, $j\circ i (P(\mathbb{C}^{d_A^2 d_B^2}))$ is diffeomorphic to the projective space, a smooth manifold. We assume that $\text{Tr}_{\mathscr{H}_A \otimes \mathscr{H}_B}^{-1}(\mathscr{K})$ has sufficient smoothness for the bulk of $\mathfrak{X}$ to consist of nonsingular points with a well-defined local Euclidean dimension, even if it changes from point to point. As for the lower dimensional Euclidean pieces, should they be (individually) immersed in $\chi_0$, they would have measure zero there. Hence, our conclusions will be invalidated for a given $\mathscr{K}$ only if lower dimensional pieces and/or non-Euclidean pieces of $\mathfrak{X}$ are somehow too numerous to be negligeable in a probabilistic argument. Finally, the assumption that $\chi_0$ is triangulable is made to simplify the definition of the measure $\tilde\mu$ on $\chi_0$. (See below.) Otherwise, the definition of $\tilde\mu$ must be modified to take the overlaps of a finite open cover into account. Note that the class of triangulable manifolds properly contains familiar classes of manifolds, namely differentiable, piecewise differentiable, and piecewise linear manifolds~\cite{Manolescu2014triangulations, Boothby1975introduction}. 

Because we assume the compact manifold $\chi_0$ to be triangulable, it can be covered by a finite collection of domains of integration $\{\mathcal{B}_i\}_{i=1,...,M}$, where each $\mathcal{B}_i$ is diffeomorphic to a closed $D$-dimensional ball $B^D (R)$ of volume $V_i \leq 1$, with $\sum_i V_i = 1$, and all $\mathcal{B}_i\cap\mathcal{B}_j$, $i\neq j$, are nullsets. By construction, $\chi_0$ is locally a (linear) purification space of dimension $D$ for the set $\mathscr{K}\subseteq\mathscr{D}$. Proposition~\ref{T:NE_lipshitz} applies with the relaxed Lipschitz constant $\eta = 4\sqrt{2}\log D$: for any $\mathcal{B}_i$,
\beq\label{E:Lipschitz_K}
|\NE\sigma - \NE\rho|\leq \eta \big\lVert \Ket{\pu{\sigma}}- \Ket{\pu{\rho}}\big\rVert_2^{(D)},
\eeq
with $D$-dimensional Euclidean norm $\lVert \cdot\rVert_2^{(D)}$, and where $\Ket{\pu{\sigma}}, \Ket{\pu{\rho}}$ are any purifications of $\sigma, \rho$ in $\mathcal{B}_i$. (Note that this choice of $\eta$ might be suboptimal as $D$ is the dimension of the \emph{purification} space.) Restricting the Euclidean measure of $B^D(R)$ to a sphere $S^{D-1}(r)$ produces a uniform measure on the sphere, i.e. a Haar measure. Conversely, the Euclidean measure on $B^D(R)$ can be recovered by integrating over spherical shells or radius $r$ and thickness $dr$ equipped with the measure $\text{vol}(S^{D-1}(r))\mu_{\text{Haar}} dr$, where $\text{vol}(\cdot)$ indicates Euclidean volume. These measures can be pulled back to the purifying manifold $\chi_0$. Suppose that the spherical shells of $\mathcal{B}_i$ are indexed by $k$ (whose range could be finite or infinite), and denote them $\mathcal{S}_{ik}$ with (pulled-back) Haar measure $\nu_{ik}$, volume $v_{ik}$ and mean NE $m_{ik}$. A normalized measure on $\chi_0$ can be defined as $\tilde\mu (E) = \sum_{ik}v_{ik} \tilde\mu_{ik}(E)$, where $\tilde\mu_{ik}(E) = \nu_{ik}(E\cap\mathcal{S}_{ik})$, and may be pushed forward to the image of $\chi_0$ in $\mathscr{K}$:
\beq\label{E:maps_2}
(\chi_0 = \cup_{ik}\mathcal{S}_{ik}\; , \tilde\mu ) \xrightarrow{\tau= \text{Tr}_{\mathscr{H}_A \otimes \mathscr{H}_B} \circ j}(\mathscr{K} , \mu).
\eeq
(Actually, $\mu$ may be extended to $\mathscr{K}$ by stating that any set in the complement of $\tau(\chi_0)$ has zero content.) We deduce from~\eqref{E:Lipschitz_K} that the NE will exhibit concentration around its mean $m_{ik}$ on each spherical shell of $\chi_0$:
\beq
\tilde\mu(\{\Ket{\Psi} \in \mathcal{S}_{ik} \colon | \NE{\tau(\Ket{\Psi})} - m_{ik} | > \alpha\}) \leq 2v_{ik} e^{-cD \alpha^2 / \eta^2}.
\eeq
The constant $c$ may be chosen as in Lemma~\ref{T:levy}. We conclude that the same will hold in $\mathscr{K}$:
\begin{proposition}\label{T:conc_NE_subset}
Let $\mathscr{K}\subseteq\mathscr{D}$ be a closed convex set of states with purification dimension $D$, and purifying manifold $\chi_0$. Let $\{\mathcal{S}_{ik}\}$ be a covering of $\chi_0$ by almost-nowhere-intersecting spherical shells. Then
\beq\label{E:conc_NE_subset}
\mu(\{\rho \in \tau(\mathcal{S}_{ik}) \colon | \NE\rho - m_{ik} | > \alpha\}) \leq 2v_{ik} e^{-cD \alpha^2 / \eta^2},
\eeq
where $\mu$ is the pushforward of $\tilde\mu$, $\tau$ is defined by~\eqref{E:maps_1},~\eqref{E:maps_2}, the mean NE is $m_{ik}=v_{ik}^{-1}\int_{\tau(\mathcal{S}_{ik})}d\mu(\sigma)\NE\sigma$, the volume of the shell is $v_{ik}=\int_{\tau(\mathcal{S}_{ik})}d\mu(\sigma)$, and $\eta = 4\sqrt{2}\log D$. The positive constant $c$ may be chosen as $c=(18\pi^3)^{-1}$. 

\end{proposition}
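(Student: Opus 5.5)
The plan is to transport the question from $\mathscr{K}$ back to the purifying manifold $\chi_0$, apply Lévy's Lemma~\ref{T:levy} shell by shell, and then push the resulting bound forward through $\tau$.

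\emph{Step 1: the function on a shell.} Fix a shell $\mathcal{S}_{ik}$ and define $f_{ik}=\Delta S_{\hat N}\circ\tau$ restricted to it. Any two points $\Ket{\Psi},\Ket{\Phi}\in\mathcal{S}_{ik}\subset\mathcal{B}_i$ are purifications of $\tau(\Ket\Psi)$ and $\tau(\Ket\Phi)$ lying in a common domain of integration. Hence inequality~\eqref{E:Lipschitz_K}, the specialization of Proposition~\ref{T:NE_lipshitz}, gives
\beq
|f_{ik}(\Ket\Psi)-f_{ik}(\Ket\Phi)|\leq \eta\lVert\Ket\Psi-\Ket\Phi\rVert_2^{(D)}, \qquad \eta=4\sqrt{2}\log D.
\eeq
Through the identification of $\mathcal{B}_i$ with $B^D(R)$, the shell becomes a sphere $S^{D-1}(r)$ carrying the Haar measure $\nu_{ik}$.

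\emph{Step 2: Lévy on the shell.} We rescale to the unit sphere via $x\mapsto x/r$. Since $r\le R$ and the balls are normalized ($V_i\le1$), the Lipschitz constant stays at most $\eta$; we will simply absorb any such factor into $c$. Lemma~\ref{T:levy} with $n=D-1$ then yields
\beq
\nu_{ik}\big(|f_{ik}-\mathbb{E}_{\nu_{ik}}f_{ik}|>\alpha\big)\le 2e^{-cD\alpha^2/\eta^2}.
\eeq
The same layer-cake computation as in~\eqref{E:equal_expectations}, carried out with $\tilde\mu_{ik}$ and its pushforward, shows that $\mathbb{E}_{\nu_{ik}}f_{ik}$ equals $v_{ik}^{-1}\int_{\tau(\mathcal{S}_{ik})}d\mu(\sigma)\,\NE\sigma=m_{ik}$.

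\emph{Step 3: pushforward.} By construction $\tilde\mu$ restricted to $\mathcal{S}_{ik}$ equals $v_{ik}\nu_{ik}$, because the shells intersect only in nullsets. Multiplying the bound of Step 2 by $v_{ik}$ gives the corresponding estimate in $\tilde\mu$. Since $\mu=\tau_*\tilde\mu$, we have $\mu(E)=\tilde\mu(\tau^{-1}E)$, and
\beq
\tau^{-1}\{\rho\in\tau(\mathcal{S}_{ik}):|\NE\rho-m_{ik}|>\alpha\}\cap\mathcal{S}_{ik}=\{\Ket\Psi\in\mathcal{S}_{ik}:|f_{ik}-m_{ik}|>\alpha\}.
\eeq
This identity yields~\eqref{E:conc_NE_subset}.

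\emph{Main obstacle.} The delicate point is that $\tau^{-1}$ of the bad set may also contain points of other shells that map into $\tau(\mathcal{S}_{ik})$, since $\tau$ is not injective. I would handle this by reading the left-hand side of~\eqref{E:conc_NE_subset} as the mass contributed by $\mathcal{S}_{ik}$, which is the statement's intent. A second, lesser issue is keeping the radius rescaling under control; this is done by the normalization $V_i\le 1$, with the constants absorbed into $c$.
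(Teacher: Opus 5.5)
Your route is the paper's. It too takes the Lipschitz bound \eqref{E:Lipschitz_K} on a domain $\mathcal{B}_i$, applies L\'evy's lemma shell by shell to get $\tilde\mu(\{\Ket{\Psi}\in\mathcal{S}_{ik}\colon|\Delta S_{\hat N}(\tau(\Ket{\Psi}))-m_{ik}|>\alpha\})\le 2v_{ik}e^{-cD\alpha^2/\eta^2}$, and then simply asserts that the same holds in $\mathscr{K}$. Your shell-wise reading of the left-hand side is the intended one, since the paper introduces $v_{ik}$ and $m_{ik}$ as the volume and mean NE of the shell $\mathcal{S}_{ik}\subset\chi_0$ itself. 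You need that reading already in Step 2, not only for the bad set. The reason is that $\mu(\tau(\mathcal{S}_{ik}))=\tilde\mu(\tau^{-1}\tau(\mathcal{S}_{ik}))$ can exceed $\tilde\mu(\mathcal{S}_{ik})$. So the $\mu$-integrals in the statement reproduce $v_{ik}$ and $\mathbb{E}_{\nu_{ik}}f_{ik}$ only once the contributions of other shells are discarded.

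The step whose justification fails is the radius rescaling. The normalization $V_i\le 1$ does not give $r\le 1$. The unit ball in $\mathbb{R}^D$ has volume $\pi^{D/2}/\Gamma(D/2+1)$, which is already below $1$ for $D\ge 13$. A ball of volume $V_i$ has radius about $\sqrt{D/(2\pi e)}\,V_i^{1/D}$, which is of order $\sqrt{D}$ unless $V_i$ is superexponentially small in $D$.

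Rescaling $S^{D-1}(r)$ to the unit sphere multiplies the Lipschitz constant by $r$. L\'evy's lemma then only yields $2e^{-cD\alpha^2/(r^2\eta^2)}$, and a factor $r^2\sim D$ cancels the dimension in the exponent. That factor cannot be absorbed into a universal constant, let alone into the specific $c=(18\pi^3)^{-1}$ fixed by the statement.

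What you actually need is $r\le 1$ in the metric for which \eqref{E:Lipschitz_K} holds, and that must come from the chart geometry, not from the volume normalization. Suppose, as \eqref{E:Lipschitz_K} implicitly presupposes, that chart distances are the ambient Euclidean distances between unit vectors of $(\mathscr{H}_A\otimes\mathscr{H}_B)^{\otimes 2}$. Then each chart ball has diameter at most $2$, so $r\le R\le 1$, and rescaling only improves the bound. The normalization $\sum_i V_i=1$ should then be imposed on the measure rather than by rescaling the metric.

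The paper applies L\'evy's lemma on each shell without mentioning the radius at all, so the original argument leaves this point implicit too. Still, the justification you give for it does not work.
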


\subsection{Concentration of symmetry-induced entanglement in $\sepn$} 
In order to apply the concentration of NE, Proposition~\ref{T:conc_NE_subset}, to the problem of symmetric separability, we restrict our attention to the set $\sepn$. Because $\sepn$ is a closed convex set\footnote{It is the intersection of two closed convex sets, $\dn$ and $\sep$.}, it has an (integer) purification dimension $\text{p-dim }\sepn=D$, and a purifying manifold $\chi_0$, as defined in the previous section. Moreover, $\chi_0$ may be covered by almost-nowhere-intersecting spherical shells $\{\mathcal{S}_{ik}\}$ with respective volume $v_{ik}$, and mean NE $m_{ik}$. Let us finally look at the set
\beq\label{E:Y_alpha_N}
\{\rho \in\tau(\mathcal{S}_{ik}) \colon |\NE\rho - m_{ik} | > \alpha\}\subset\sepn,
\eeq
whose size is at most $2v_{ik}e^{-cD \alpha^2 / \eta^2}$, according to Proposition~\ref{T:conc_NE_subset}. 
\begin{proposition}\label{T:conc_NE_conditional}
Separable states that are symmetric for $\hat N$ present a strong concentration of NE around the mean $m_{ik}$ in each spherical shell of the purifying manifold: for $\rho\in \tau(\mathcal{S}_{ik})$,
\beq
\textup{Prob}_{\mu}(| \NE\rho - m_{ik} | > \alpha) \leq 2v_{ik}e^{-c D \alpha^2/\eta^2},
\eeq
where $m_{ik}=v_{ik}^{-1}\int_{\tau(\mathcal{S}_{ik})}d\mu(\sigma)\NE\sigma$, $D$ is the dimension of the purifying manifold, and $\eta = 4\sqrt{2}\log D$.
\end{proposition}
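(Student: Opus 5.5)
The plan is to obtain the statement as a direct specialization of Proposition~\ref{T:conc_NE_subset} to the set $\mathscr{K}=\sepn$. The first step is to check that $\sepn$ satisfies the hypotheses of that proposition. It is closed and convex, since it is the intersection of the closed convex sets $\dn$ and $\sep$; closedness of $\dn$ follows from continuity of $\rho\mapsto[\rho,\hat N]$, and convexity from linearity of the commutator. Hence it has a well-defined purification dimension $D=\textup{p-dim}\,\sepn$ and a purifying manifold $\chi_0$ inside $(\mathscr{H}_A\otimes\mathscr{H}_B)^{\otimes 2}$. Under the standing assumption that $\chi_0$ is triangulable, $\chi_0$ is covered by finitely many almost-disjoint domains $\mathcal{B}_i$ diffeomorphic to closed $D$-balls. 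Each $\mathcal{B}_i$ is then sliced into spherical shells $\mathcal{S}_{ik}$ carrying the pulled-back Haar measures $\nu_{ik}$, and $\tilde\mu$ is pushed forward along $\tau$ to $\mu$ on $\sepn$.

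The second step supplies the Lipschitz input on each shell. By Proposition~\ref{T:NE_lipshitz}, specialized as in Eqn.~\eqref{E:Lipschitz_K}, the function $f=\Delta S_{\hat N}\circ\tau$ restricted to $\mathcal{S}_{ik}\subset\mathcal{B}_i$ satisfies
\[
|f(\Ket\Psi)-f(\Ket\Phi)|\leq \eta\lVert\Ket\Psi-\Ket\Phi\rVert_2^{(D)}, \qquad \eta=4\sqrt{2}\log D.
\]
This holds because any two points of $\mathcal{B}_i$ are purifications on a common space, and $D\ge d_Ad_B$ permits the relaxation of the constant. We then apply Lévy's Lemma~\ref{T:levy} on the sphere $S^{D-1}$ underlying $\mathcal{S}_{ik}$, rescaled to unit radius if needed; the Lipschitz constant scales with the radius $r\le R$, and this is absorbed by the normalization of the ball. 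The result is the tail bound $2e^{-cD\alpha^2/\eta^2}$ with respect to $\nu_{ik}$, around the $\nu_{ik}$-mean of $f$.

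The third step transfers this bound to $\sepn$. Since $\tilde\mu$ restricted to $\mathcal{S}_{ik}$ equals $v_{ik}\nu_{ik}$, the $\tilde\mu$-measure of the bad set in the shell is at most $2v_{ik}e^{-cD\alpha^2/\eta^2}$. Pushing forward by $\tau$ gives $\mu(\{\rho\in\tau(\mathcal{S}_{ik}):|\NE\rho-m_{ik}|>\alpha\})$ bounded by the same quantity, since the preimage of this set under $\tau$ is contained in the shell's bad set up to overlaps of measure zero. We must also identify the $\nu_{ik}$-mean of $f$ with $m_{ik}=v_{ik}^{-1}\int_{\tau(\mathcal{S}_{ik})}\NE\sigma\,d\mu$. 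This is done by the layer-cake argument of Eqn.~\eqref{E:equal_expectations}, applied to the pushforward restricted to a single shell.

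The main obstacle is this last identification, together with the measure-theoretic bookkeeping of $\tau$ on a single shell. The images $\tau(\mathcal{S}_{ik})$ of distinct shells may overlap in $\sepn$, so $\mu$ restricted to $\tau(\mathcal{S}_{ik})$ need not coincide with the pushforward of $v_{ik}\nu_{ik}$ alone. To handle this, I would interpret both $\textup{Prob}_\mu$ and $m_{ik}$ as the pushforward of $\tilde\mu$ restricted to the shell, that is, as conditional on the shell label. With this reading the layer-cake identity goes through verbatim, and the bound follows.
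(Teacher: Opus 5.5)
This is essentially the paper's argument: the paper observes that $\sepn=\dn\cap\sep$ is closed and convex, so it has a purification dimension $D$ and a purifying manifold $\chi_0$, and then invokes Proposition~\ref{T:conc_NE_subset} with $\mathscr{K}=\sepn$; your second and third steps just unpack the derivation of that proposition. Your shell-conditional reading of $\textup{Prob}_\mu$ and $m_{ik}$ is the right one, since the paper's bound is really derived for $\tilde\mu$ restricted to $\mathcal{S}_{ik}$, but drop your step-three claim that $\tau^{-1}$ of the bad set lies in the shell's bad set up to a null set: fibers of $\tau$ generally cross many shells, so that reading, not a measure-zero argument, is what makes the step go through.
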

\begin{proposition}\label{T:nonzero_mean}
If there exists a state $\rho\in \tau(\mathcal{S}_{ik})$ for which $\NE\rho >0$, then the mean NE is nonzero on $\tau(\mathcal{S}_{ik})$. In other words, the mean NE is zero on $\tau(\mathcal{S}_{ik})$ iff $\NE\rho \equiv 0$ on $\tau(\mathcal{S}_{ik})$.
\end{proposition}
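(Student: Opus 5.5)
The plan is to combine three facts: the NE is nonnegative, it is continuous, and the pushed-forward measure charges every nonempty relatively open subset of the shell image. The mean $m_{ik}$ is the integral of $\NE\sigma$ against $\mu$ restricted to $\tau(\mathcal{S}_{ik})$. If the NE is positive on a set of positive measure, this integral must be positive.

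\emph{Nonnegativity.} First I will show that $\NE\rho\geq 0$ for every $\rho\in\mathscr{D}$. The map $\NA{(\cdot)}=\sum_{N_A}\Pi_{N_A}(\cdot)\Pi_{N_A}$ is a unital (hence doubly stochastic) channel. The pinched state $\NA\rho$ is majorized by $\rho$, since the diagonal blocks of a positive matrix are majorized by its spectrum. By Schur concavity of von Neumann entropy, $S(\NA\rho)\geq S(\rho)$. Equivalently, $S(\NA\rho)-S(\rho)$ is the relative entropy $S(\rho\Vert\NA\rho)\geq 0$.

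\emph{Positivity on an open set.} Next, suppose $\NE{\rho_0}=\epsilon>0$ for some $\rho_0=\tau(\Ket{\Psi_0})$ with $\Ket{\Psi_0}\in\mathcal{S}_{ik}$. By the Lipschitz bound \eqref{E:Lipschitz_K}, i.e.\ Proposition~\ref{T:NE_lipshitz} applied within $\mathcal{B}_i$, every $\Ket{\Psi}\in\mathcal{S}_{ik}$ with $\lVert\Ket\Psi-\Ket{\Psi_0}\rVert_2<\epsilon/(2\eta)$ satisfies $\NE{\tau(\Ket\Psi)}>\epsilon/2$. This set $U$ is a nonempty relatively open subset of the shell $\mathcal{S}_{ik}$. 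The shell is diffeomorphic to a round sphere $S^{D-1}(r)$ carrying the pulled-back Haar measure $\nu_{ik}$. Haar measure on a sphere gives positive mass to nonempty open sets, and the diffeomorphism maps open sets to open sets. Hence $\nu_{ik}(U)>0$.

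\emph{Conclusion.} By definition of the pushforward,
\[
v_{ik}m_{ik}=\int_{\mathcal{S}_{ik}}\NE{\tau(\Ket\Psi)}\,d\tilde\mu\geq v_{ik}\,\nu_{ik}(U)\,\frac{\epsilon}{2}>0.
\]
Here nonnegativity lets us drop the contribution from the complement of $U$, and $v_{ik}>0$ for a genuine shell.

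The converse direction of the ``iff'' is trivial: if $\NE\rho\equiv0$ on $\tau(\mathcal{S}_{ik})$, the integral vanishes. The contrapositive of the first part then gives that $m_{ik}=0$ forces $\NE\rho\equiv 0$ there.

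\emph{Main obstacle.} The delicate point is the measure-theoretic one: that $\tilde\mu$ restricted to the shell really has full support. This could fail if the shell is degenerate (zero radius, or $v_{ik}=0$), or if the pullback of Haar measure through the chart is singular. I would handle this by appealing to the construction of $\mathcal{B}_i$ as a diffeomorphic image of a closed ball, so that the pulled-back shell measures are equivalent to Lebesgue/Haar measure. Shells with $v_{ik}=0$ I would simply discard as null, since $m_{ik}$ is not defined for them anyway.
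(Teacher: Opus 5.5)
Your argument is correct and follows the paper's proof: continuity of the NE makes the positivity set open, a nonempty open set has positive measure, and the integral is then strictly positive. There are two differences. First, you work upstairs on the shell $\mathcal{S}_{ik}$ with the Euclidean Lipschitz bound \eqref{E:Lipschitz_K}, whereas the paper works downstairs in $\tau(\mathcal{S}_{ik})$ with the Bures bound of Corollary~\ref{T:Bures}. Second, you make explicit two points the paper uses silently: $\NE\rho\geq 0$, which is needed to replace $\int_{\tau(\mathcal{S}_{ik})}$ by $\int_S$, and the fact that the pulled-back Haar measure gives positive mass to nonempty open subsets of the shell.
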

\begin{proof}
Suppose $\NE\rho = \epsilon >0$, and let $\sigma$ be $(\epsilon /\eta)$-close to $\rho$ in Bures distance, $D_B(\sigma , \rho) < \epsilon / \eta$. From Corollary~\ref{T:Bures}, $|\NE\sigma - \NE\rho|<\epsilon$ so $\NE\sigma >0$. Therefore the set $S=\{\rho\in\tau(\mathcal{S}_{ik}) \colon \NE\rho >0\}$ is open. If nonempty, we have $\mu(S)>0$, and the mean NE is
\beq
m_{ik} = \int_{\tau(\mathcal{S}_{ik})}d\mu(\sigma)\NE\sigma  = \int_S d\mu(\sigma)\NE\sigma  >0.
\eeq
\end{proof}
Because $\symsepn$ is of zero measure in $\sepn$ (unless $\hat N$ is nondegenerate), Proposition~\ref{T:nonzero_mean} indicates that a generic $\hat N$ will have a nonzero mean NE, $m_{ik}>0$, on each $\tau(\mathcal{S}_{ik})$. We contend that these local means strongly concentrate around $m=\sum_{ik} v_{ik}m_{ik}$, the global mean NE on $\sepn$. Indeed, by Proposition~\ref{T:conc_NE_conditional} the NE strongly concentrates around its local mean $m_{ik}$ on each shell.\footnote{We also contend that the NE will have asymptotically vanishing variance on each shell. Indeed, using the cumulative distribution function $v_{ik}^{-1}\textup{Prob}_{\mu}(| \NE\rho - m_{ik} | \leq \alpha)$, we may express the variance of NE on shell $\tau(\mathcal{S}_{ik})$ as:
\beq
\begin{aligned}
\text{Var}(\NE\rho) &= \mathbb{E}\left((\NE\rho - m_{ik})^2\right)\\
&=v_{ik}^{-1}\int_0^{\infty}d\alpha\;\alpha^2\frac{d}{d\alpha}\textup{Prob}_{\mu}(| \NE\rho - m_{ik} | \leq \alpha).
\end{aligned}
\eeq   
The cumulative distribution function is monotone increasing and, following Proposition~\ref{T:conc_NE_conditional}, it is comprised in the tight interval between $\ell(\alpha)=1-2e^{-c D \alpha^2 /\eta^2}$ and 1. Thus its slope can be large only on a small set, and on average is expected to be close to that of $\ell(\alpha)$. Therefore,
\beq
\text{Var}(\NE\rho) \approx \int_0^{\infty}d\alpha\;\alpha^2\frac{d}{d\alpha}\ell(\alpha)=O\left((\log D)^2 /D\right).
\eeq
} By (Lipschitz) continuity, Proposition~\ref{T:NE_lipshitz}, the NE must agree where shells meet, and this in turn forces local means $m_{ik}$ to remain close to the global mean $m=\sum_{ik} v_{ik}m_{ik}$. The phenomenon also tends to keep their variance $\text{Var}(\{m_{ik}\})$ small.

Define $\epsilon_{ik}=|m-m_{ik}|$. It follows from what has just been said that most of the $\epsilon_{ik}$ are small, and from the properties of the variance, that $\text{Var}(\{\epsilon_{ik}\})=\text{Var}(\{m_{ik}\})$ is also small. Now for any $\alpha > \epsilon_{ik}$, we have $|\NE\rho - m|>\alpha \implies |\NE\rho - m_{ik}|>\alpha - \epsilon_{ik}$, so on each shell $\tau(\mathcal{S}_{ik})$ we get from Proposition~\ref{T:conc_NE_conditional}:
\beq
\textup{Prob}_{\mu}(| \NE\rho - m | > \alpha) \leq \textup{Prob}_{\mu}(| \NE\rho - m_{ik} | > \alpha - \epsilon_{ik}) \leq 2v_{ik}e^{-c D (\alpha - \epsilon_{ik})^2/\eta^2}.
\eeq
Summing up the contributions from all shells, we find
\beq\label{E:prob_all}
\textup{Prob}_{\mu}(| \NE\rho - m | > \alpha) \leq 2\sum_{ik} v_{ik} e^{-c D (\alpha - \epsilon_{ik})^2 /\eta^2},
\eeq
for a $\mu$-random states $\rho\in\sepn$. The sum may be upper bounded by a generalized Jensen's inequality~\cite{Liao2017sharpening},
\beq
\mathbb{E}h(X)\leq h(\mathbb{E}X)+\text{Var}(X)\,\text{sup}\frac{h''(x)}{2},
\eeq
with $h(x)=\exp (-cDx^2/\eta^2)$, and random variable $X$ having value $\alpha - \epsilon_{ik}$ with probability $v_{ik}$. We find
\beq\label{E:var_appendix}
\sum_{ik} v_{ik}e^{-c D (\alpha - \epsilon_{ik})^2 /\eta^2}
\leq e^{-c D(\alpha - \bar\epsilon)^2/\eta^2} + \text{Var}(\{\epsilon_{ik}\})O\left(\frac{D^3}{\eta^6} e^{-O(D^2/\eta^4)}\right),
\eeq
where $\bar\epsilon=\sum_{ik} v_{ik}\epsilon_{ik} \to 0$. Remembering that $\eta = O(\log D)$, we see that the last term vanishes even for moderate values of $D$, unless the $\epsilon_{ik}$'s have a violently diverging variance. We have argued above that such was not the case. We conclude that~\eqref{E:prob_all} can be restated as:
\begin{proposition}\label{T:genN}
For an observable $\hat N$ with large enough purification dimension $D$, the NE on $\sepn$ strongly concentrates around its mean value:
\beq
\textup{Prob}_{\mu}(| \NE\rho - m | > \alpha) \lesssim O(e^{-c D \alpha^2 /\eta^2}),
\eeq
where $m$ is the (strictly positive) mean NE of $\hat N$ averaged over all states in $\sepn$, $c$ is a positive constant that may be chosen as $c=(18\pi^3)^{-1}$, and $\eta = 4\sqrt{2}\log D$.
\end{proposition}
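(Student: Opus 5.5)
The plan is to assemble Proposition~\ref{T:genN} from the shellwise concentration of Proposition~\ref{T:conc_NE_conditional}, and to control how far the local means $m_{ik}$ can drift from the global mean $m=\sum_{ik}v_{ik}m_{ik}$.

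\textbf{Step 1: transfer each shell bound to the global mean.} Set $\epsilon_{ik}=|m-m_{ik}|$. For $\alpha>\epsilon_{ik}$, the triangle inequality gives
\beq
|\NE\rho-m|>\alpha \;\Rightarrow\; |\NE\rho-m_{ik}|>\alpha-\epsilon_{ik}.
\eeq
Proposition~\ref{T:conc_NE_conditional} then bounds the $\mu$-mass of the bad set inside $\tau(\mathcal{S}_{ik})$ by $2v_{ik}e^{-cD(\alpha-\epsilon_{ik})^2/\eta^2}$.

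\textbf{Step 2: sum over shells.} The shells are almost-nowhere-intersecting and $\sum_{ik}v_{ik}=1$. Summing therefore gives~\eqref{E:prob_all}, which is an expectation $\mathbb{E}h(X)$ with $h(x)=e^{-cDx^2/\eta^2}$, where $X$ takes the value $\alpha-\epsilon_{ik}$ with probability $v_{ik}$.

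\textbf{Step 3: bound the expectation.} I would apply the sharpened Jensen inequality $\mathbb{E}h(X)\le h(\mathbb{E}X)+\tfrac12\text{Var}(X)\sup h''$. Note that $\sup|h''|=O(D/\eta^2)$. This yields~\eqref{E:var_appendix}, namely $e^{-cD(\alpha-\bar\epsilon)^2/\eta^2}$ plus a correction term proportional to $\text{Var}(\{\epsilon_{ik}\})$. It then remains to show two things:
\begin{itemize}
\item $\bar\epsilon=\sum v_{ik}\epsilon_{ik}$ is negligible;
\item the variance correction is dominated by the exponential for large $D$.
\end{itemize}
Given both, the bound becomes $O(e^{-cD\alpha^2/\eta^2})$.

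\textbf{Step 4: strict positivity of $m$.} This comes from Proposition~\ref{T:nonzero_mean}. By Corollary~\ref{T:symsepn_sepnloc} and the measure-zero statement, $\symsepn$ is null in $\sepn$ for degenerate $\hat N$. Because $\NE\rho$ vanishes only on $\dnloc$, some state in the support of $\mu$ has $\NE\rho>0$. By Bures continuity (Corollary~\ref{T:Bures}), the positivity set is open, so $m_{ik}>0$ on a shell of positive weight, and hence $m>0$.

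\textbf{Main obstacle.} The hard part is controlling the $\epsilon_{ik}$ in Step 3. My first attempt would use Lipschitz continuity (Proposition~\ref{T:NE_lipshitz}) across shared boundaries of adjacent shells and adjacent balls $\mathcal{B}_i$. On each shell the NE is within $O(\eta/\sqrt D)$ of $m_{ik}$ except on an exponentially small set. Adjacent shells share (up to Lipschitz error $\eta\,dr$) nearby purifications, so adjacent means differ by $O(\eta/\sqrt D)$ plus the radial step. Chaining these comparisons across the connected manifold $\chi_0$ bounds $\sup\epsilon_{ik}$, and hence the variance, by something that vanishes as $D\to\infty$, provided the triangulation's combinatorial diameter stays controlled. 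This chaining is the weakest link: it depends on the geometry of the cover and is where only a heuristic ``$\lesssim$'' rather than a sharp inequality can be claimed.
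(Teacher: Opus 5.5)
Your Steps 1--4 follow the paper's own route: the shellwise triangle inequality, the weighted sum over shells, the Liao--Berg sharpened Jensen inequality, positivity via Proposition~\ref{T:nonzero_mean}, and Lipschitz matching of the local means. However, Step 3 does not close with your own estimate. For $h(x)=e^{-ax^2}$ with $a=cD/\eta^2$, the global supremum of $h''$ is $4ae^{-3/2}=\Theta(D/\eta^2)$. The correction $\tfrac12\,\text{Var}(\{\epsilon_{ik}\})\sup h''$ is therefore dominated by $e^{-cD\alpha^2/\eta^2}$ only if the variance is itself exponentially small. Take even the best case your chaining could give, $\epsilon_{ik}=O(\eta/\sqrt D)$: then $\text{Var}\cdot D/\eta^2=O(1)$, and the bound is vacuous. (The paper's factor $O(D^3\eta^{-6}e^{-O(D^2/\eta^4)})$ in \eqref{E:var_appendix} is of the order of $h''(\sqrt a)$, not of $\sup h''$, so this step is not settled there either.)

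There is also a range problem. The shell bound of Step 1 holds only when $\epsilon_{ik}<\alpha$. Shells with $\epsilon_{ik}\ge\alpha$ contribute their full weight $v_{ik}$, and a variance bound controls that weight only polynomially, via Chebyshev. What you actually need is a uniform bound $\max_{ik}\epsilon_{ik}\le\delta<\alpha$, or exponentially small total weight on the shells violating it. Given that, Jensen is superfluous: $h$ decreases on $[0,\infty)$ and $\sum_{ik}v_{ik}=1$, so the right side of \eqref{E:prob_all} is at most $2e^{-cD(\alpha-\delta)^2/\eta^2}$. If you keep Jensen, take the supremum of $h''$ only over the range $[\alpha-\delta,\alpha]$ of $X$, where it is exponentially small.

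The chaining, as described, cannot supply such a uniform bound. Concentric shells of one ball are related by radial scaling, which carries Haar measure to Haar measure and moves every point by $\Delta r$. So adjacent means differ by at most $\eta\,\Delta r$, but these increments add up to $\eta R_i$, which does not vanish with $D$. Shellwise concentration plus Lipschitz continuity never forces shell means together: $f(x)=\lVert x\rVert$ on $B^D(R)$ is $1$-Lipschitz and constant on every shell, yet its shell means fill $[0,R]$.

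The missing ingredient is volume concentration in high-dimensional balls. Under the Euclidean measure, shells with $r\le(1-t)R$ carry relative weight $(1-t)^D\le e^{-tD}$. Inner shells should therefore be dumped into the error term rather than chained; radial chaining then only has to span a layer of thickness $tR$. Only the outer shells of adjacent balls need matching across shared boundary pieces, and that is where your caveat about the combinatorial diameter genuinely bites.

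Step 4 is sound in structure. Needing only one shell of positive weight is more careful than the paper's claim that $m_{ik}>0$ on every shell. Note, though, that for $\hat N_A\otimes\hat N_B$ the NE can vanish off $\dnloc$. For example, take $\rho=\Ket{0}\Bra{0}\otimes\Ket{+}\Bra{+}$ with $\hat N_A=\hat N_B=\textup{diag}(0,1)$. So the existence of a state with positive NE in the support of $\mu$ should be argued directly rather than from that claim.
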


Our analytical findings are confirmed by numerics, as depicted in Figs.~\ref{F:numerics_qudits_sum},~\ref{F:numerics_qudits_prod},~\ref{F:numerics_qubits_sum_prod}. Random separable states were generated (through convex combinations of Haar-random pure product states, and using Carathéodory's theorem on extremal points on convex sets), before they were submitted to nonselective $\hat N$-measurement to produce random states in $\sepn$. The distribution of NE values was then plotted for different system dimensions, showing a fast narrowing around the mean as dimension is increased. The distributions are empirically found to closely follow a chi distribution of noninteger order $k$ depending on dimension in a way that we cannot yet explain analytically. Let us simply recall that the chi distribution of integer order $k$ corresponds to the Euclidean distance between the origin and a $k$-tuple of normal random variables. The connection to our random states might stem from their being build out of Haar-random pure states whose normally distributed components are submitted to projection, and nonlinear operations yielding von Neumann entropies. If the putative connection is valid, the noninteger order $k$ is bound to depend on (the degeneracy sectors of) $\hat N$. 

\begin{figure}
\includegraphics[width=\textwidth]{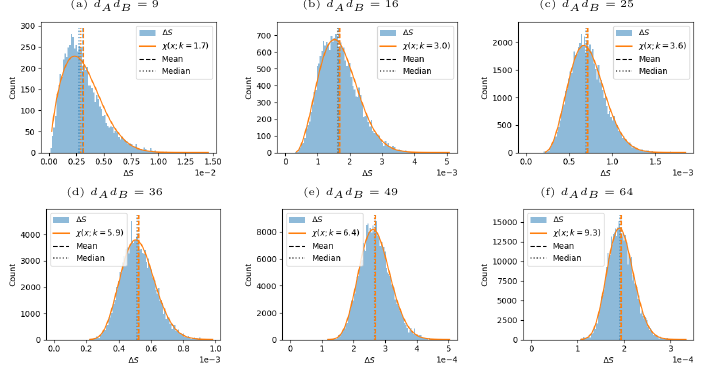}
\caption{Distribution of NE values for 2-qudit states. Each subsystem $A$, $B$ supports a single qudit of dimension $d_A = d_B =d \in \{2,3,4\}$, and $\hat N_A \otimes 1_B + 1_A \otimes \hat N_B$ with $\hat N_A$, $\hat N_B$ possessing nondegenerate eigenvalues $\{1,2,...,d\}$, e.g. single-qudit level-number on each subsystem. The distributions are fitted to an empirical chi distribution of order $k$ that depends on dimension.}
\label{F:numerics_qudits_prod}
\end{figure}

\begin{figure}
\includegraphics[width=\textwidth]{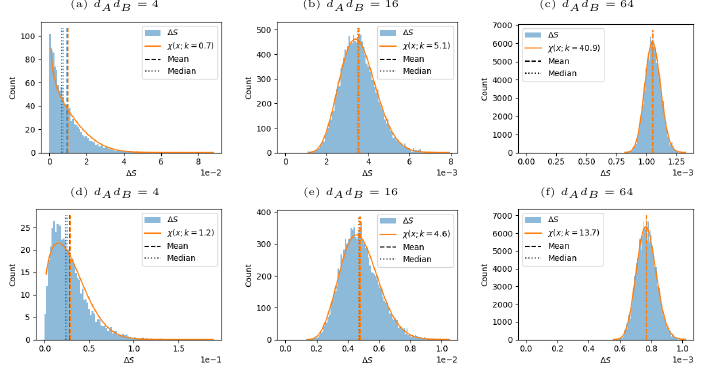}
\caption{Distribution of NE values for multi-qubit states. Each subsystem $A$, $B$ supports $q$ qubits for a total dimension $d_A d_B = 2^{2q}$, $q=1,2,3$. $\hat N_A$, $\hat N_B$ have the form $\sum \sigma_{z}$. \textbf{(a-c)} $\hat N = \hat N_A \otimes \hat N_B$. \textbf{(d-f)} $\hat N = \hat N_A \otimes 1_B + 1_A \otimes N_B$. The distributions are fitted to an empirical chi distribution of order $k$ that depends on dimension.}
\label{F:numerics_qubits_sum_prod}
\end{figure}

\bibliographystyle{apsrev4-1}
\bibliography{refsNE}

@article{Bartlett2007reference,
  title = {Reference frames, superselection rules, and quantum information},
  author = {Bartlett, Stephen D. and Rudolph, Terry and Spekkens, Robert W.},
  journal = {Rev. Mod. Phys.},
  volume = {79},
  issue = {2},
  pages = {555--609},
  numpages = {0},
  year = {2007},
  month = {Apr},
  publisher = {American Physical Society},
  doi = {10.1103/RevModPhys.79.555},
  url = {https://link.aps.org/doi/10.1103/RevModPhys.79.555}
}

@inproceedings{Bennett1984quantum,
author = {Bennett, Charles and Brassard, Gilles},
year = {1984},
month = {12},
pages = {175-179},
title = {Quantum cryptography: Public key distribution and coin tossing},
volume = {1},
booktitle = {Proceedings of IEEE International Conference on Computers, Systems and Signal Processing},
location = {Bangalore, India},
url = {https://web.archive.org/web/20200130165639/http://researcher.watson.ibm.com/researcher/files/us-bennetc/BB84highest.pdf}
}

@article{Bennett1993teleporting,
  title = {Teleporting an unknown quantum state via dual classical and Einstein-Podolsky-Rosen channels},
  author = {Bennett, Charles H. and Brassard, Gilles and Cr\'epeau, Claude and Jozsa, Richard and Peres, Asher and Wootters, William K.},
  journal = {Phys. Rev. Lett.},
  volume = {70},
  issue = {13},
  pages = {1895--1899},
  numpages = {0},
  year = {1993},
  month = {Mar},
  publisher = {American Physical Society},
  doi = {10.1103/PhysRevLett.70.1895},
  url = {https://link.aps.org/doi/10.1103/PhysRevLett.70.1895}
}

@article{Bhatia2019bures,
title = {On the Bures–Wasserstein distance between positive definite matrices},
journal = {Expositiones Mathematicae},
volume = {37},
number = {2},
pages = {165-191},
year = {2019},
issn = {0723-0869},
doi = {https://doi.org/10.1016/j.exmath.2018.01.002},
url = {https://www.sciencedirect.com/science/article/pii/S0723086918300021},
author = {Rajendra Bhatia and Tanvi Jain and Yongdo Lim}
}

@book{Boothby1975introduction,
  title={An introduction to differentiable manifolds and Riemannian geometry},
  author={W. M. Boothby},
  year={1975},
  publisher={Academic Press, Elsevier Science},
  url={https://api.semanticscholar.org/CorpusID:120912603}
}

@Article{Bouwmeester1997experimental,
  author={Dik Bouwmeester and Jian-Wei Pan and Klaus Mattle and Manfred Eibl and Harald Weinfurter and Anton Zeilinger},
  title={{Experimental quantum teleportation}},
  journal={Nature},
  year=1997,
  volume={390},
  number={6660},
  pages={575-579},
  month={December},
  doi={10.1038/37539},
  url={https://ideas.repec.org/a/nat/nature/v390y1997i6660d10.1038_37539.html}
}

@article{Braunstein1996geometry,
title = {Geometry of quantum inference},
journal = {Physics Letters A},
volume = {219},
number = {3},
pages = {169-174},
year = {1996},
issn = {0375-9601},
doi = {https://doi.org/10.1016/0375-9601(96)00365-9},
url = {https://www.sciencedirect.com/science/article/pii/0375960196003659},
author = {Samuel L. Braunstein}
}

@article{Bremner2009random,
  title = {Are Random Pure States Useful for Quantum Computation?},
  author = {Bremner, Michael J. and Mora, Caterina and Winter, Andreas},
  journal = {Phys. Rev. Lett.},
  volume = {102},
  issue = {19},
  pages = {190502},
  numpages = {4},
  year = {2009},
  month = {May},
  publisher = {American Physical Society},
  doi = {10.1103/PhysRevLett.102.190502},
  url = {https://link.aps.org/doi/10.1103/PhysRevLett.102.190502}
}

@article{Das2021universal,
  title = {Universal Limitations on Quantum Key Distribution over a Network},
  author = {Das, Siddhartha and B\"auml, Stefan and Winczewski, Marek and Horodecki, Karol},
  journal = {Phys. Rev. X},
  volume = {11},
  issue = {4},
  pages = {041016},
  numpages = {38},
  year = {2021},
  month = {Oct},
  publisher = {American Physical Society},
  doi = {10.1103/PhysRevX.11.041016},
  url = {https://link.aps.org/doi/10.1103/PhysRevX.11.041016}
}

@article{Gharibian2010strong,
author = {Gharibian, Sevag},
title = {Strong NP-hardness of the quantum separability problem},
year = {2010},
issue_date = {March 2010},
publisher = {Rinton Press, Incorporated},
address = {Paramus, NJ},
volume = {10},
number = {3},
issn = {1533-7146},
journal = {Quantum Info. Comput.},
month = mar,
pages = {343–360},
numpages = {18}
}

@article{Gross2009quantum,
  title = {Most Quantum States Are Too Entangled To Be Useful As Computational Resources},
  author = {Gross, D. and Flammia, S. T. and Eisert, J.},
  journal = {Phys. Rev. Lett.},
  volume = {102},
  issue = {19},
  pages = {190501},
  numpages = {4},
  year = {2009},
  month = {May},
  publisher = {American Physical Society},
  doi = {10.1103/PhysRevLett.102.190501},
  url = {https://link.aps.org/doi/10.1103/PhysRevLett.102.190501}
}

@inproceedings{Grover1996fast,
author = {Grover, Lov K.},
title = {A fast quantum mechanical algorithm for database search},
year = {1996},
isbn = {0897917855},
publisher = {Association for Computing Machinery},
address = {New York, NY, USA},
url = {https://doi.org/10.1145/237814.237866},
doi = {10.1145/237814.237866},
booktitle = {Proceedings of the Twenty-Eighth Annual ACM Symposium on Theory of Computing},
pages = {212–219},
numpages = {8},
location = {Philadelphia, Pennsylvania, USA},
series = {STOC '96}
}

@article{Gurvits2002largest,
  title = {Largest separable balls around the maximally mixed bipartite quantum state},
  author = {Gurvits, Leonid and Barnum, Howard},
  journal = {Phys. Rev. A},
  volume = {66},
  issue = {6},
  pages = {062311},
  numpages = {7},
  year = {2002},
  month = {Dec},
  publisher = {American Physical Society},
  doi = {10.1103/PhysRevA.66.062311},
  url = {https://link.aps.org/doi/10.1103/PhysRevA.66.062311}
}

@inproceedings{Gurvits2003classical,
author = {Gurvits, Leonid},
title = {Classical deterministic complexity of Edmonds' Problem and quantum entanglement},
year = {2003},
isbn = {1581136749},
publisher = {Association for Computing Machinery},
address = {New York, NY, USA},
url = {https://doi.org/10.1145/780542.780545},
doi = {10.1145/780542.780545},
booktitle = {Proceedings of the Thirty-Fifth Annual ACM Symposium on Theory of Computing},
pages = {10–19},
numpages = {10},
location = {San Diego, CA, USA},
series = {STOC '03}
}

@article{Hall1998random,
title = {Random quantum correlations and density operator distributions},
journal = {Physics Letters A},
volume = {242},
number = {3},
pages = {123-129},
year = {1998},
issn = {0375-9601},
doi = {https://doi.org/10.1016/S0375-9601(98)00190-X},
url = {https://www.sciencedirect.com/science/article/pii/S037596019800190X},
author = {Michael J.W. Hall}
}

@article{Hayden2006aspects,
   title={Aspects of Generic Entanglement},
   volume={265},
   ISSN={1432-0916},
   url={http://dx.doi.org/10.1007/s00220-006-1535-6},
   DOI={10.1007/s00220-006-1535-6},
   number={1},
   journal={Communications in Mathematical Physics},
   publisher={Springer Science and Business Media LLC},
   author={Hayden, Patrick and Leung, Debbie W. and Winter, Andreas},
   year={2006},
   month=mar, pages={95–117} }

@book{Hayashi2006quantum,
author = {Hayashi, Masahito},
year = {2006},
month = {01},
pages = {},
title = {Quantum Information: An Introduction},
publisher = {Springer Berlin, Heidelberg},
isbn = {978-3-540-30265-0},
journal = {Quantum Information: An Introduction},
doi = {10.1007/3-540-30266-2}
}

@article{Howard2014contextuality,
author = {Howard, Mark and Wallman, Joel and Veitch, Victor and Emerson, Joseph},
year = {2014},
month = {01},
pages = {},
title = {Contextuality supplies the magic for quantum computation},
volume = {510},
journal = {Nature},
doi = {10.1038/nature13460}
}

@article{Hyllus2012Fisher,
  title = {Fisher information and multiparticle entanglement},
  author = {Hyllus, Philipp and Laskowski, Wies\l{}aw and Krischek, Roland and Schwemmer, Christian and Wieczorek, Witlef and Weinfurter, Harald and Pezz\'e, Luca and Smerzi, Augusto},
  journal = {Phys. Rev. A},
  volume = {85},
  issue = {2},
  pages = {022321},
  numpages = {10},
  year = {2012},
  month = {Feb},
  publisher = {American Physical Society},
  doi = {10.1103/PhysRevA.85.022321},
  url = {https://link.aps.org/doi/10.1103/PhysRevA.85.022321}
}

@book{Ledoux2001concentration,
author = {Ledoux, M.},
year = {2001},
month = {01},
pages = {},
title = {The Concentration of Measure Phenomenon},
publisher = {American Mathematical Society}
}

@article{Liao2017sharpening,
  title={Sharpening Jensen's Inequality},
  author={Jason Liao and Arthur Berg},
  journal={The American Statistician},
  year={2017},
  volume={73},
  pages={278 - 281},
  url={https://api.semanticscholar.org/CorpusID:88515366}
}

@article{Lloyd1996universal,
  title={Universal Quantum Simulators},
  author={Seth Lloyd},
  journal={Science},
  year={1996},
  volume={273},
  pages={1073 - 1078},
  url={https://api.semanticscholar.org/CorpusID:43496899}
}

@article{Lockhart2002preserving,
  title = {Preserving entanglement under perturbation and sandwiching all separable states},
  author = {Lockhart, Robert B. and Steiner, Michael J.},
  journal = {Phys. Rev. A},
  volume = {65},
  issue = {2},
  pages = {022107},
  numpages = {4},
  year = {2002},
  month = {Jan},
  publisher = {American Physical Society},
  doi = {10.1103/PhysRevA.65.022107},
  url = {https://link.aps.org/doi/10.1103/PhysRevA.65.022107}
}

@article{Ma2022symmetric,
  title = {Symmetric inseparability and number entanglement in charge-conserving mixed states},
  author = {Ma, Zhanyu and Han, Cheolhee and Meir, Yigal and Sela, Eran},
  journal = {Phys. Rev. A},
  volume = {105},
  issue = {4},
  pages = {042416},
  numpages = {12},
  year = {2022},
  month = {Apr},
  publisher = {American Physical Society},
  url = {https://journals.aps.org/pra/abstract/10.1103/PhysRevA.105.042416}
}

@article{Macieszczak2019coherence,
author = {Macieszczak, Katarzyna and Levi, Emanuele and Macri, Tommaso and Lesanovsky, Igor and Garrahan, Juan},
year = {2019},
month = {05},
pages = {},
title = {Coherence, entanglement, and quantumness in closed and open systems with conserved charge, with an application to many-body localization},
volume = {99},
journal = {Physical Review A},
doi = {10.1103/PhysRevA.99.052354}
}

@inproceedings{Manolescu2014triangulations,
  title={Triangulations of Manifolds},
  author={Ciprian Manolescu},
  year={2014},
  booktitle={Notices of the International Congress of Chinese Mathematicians},
  volume={2},
  pages={21-23},
  url={https://link.intlpress.com/JDetail/1806612934258262017}
}

@book{Nielsen_Chuang2010quantum, 
place={Cambridge}, 
title={Quantum Computation and Quantum Information: 10th Anniversary Edition}, 
publisher={Cambridge University Press}, 
author={Nielsen, Michael A. and Chuang, Isaac L.}, 
year={2010}
}

@article{Palazuelos2022genuine,
  doi = {10.22331/q-2022-06-13-735},
  url = {https://doi.org/10.22331/q-2022-06-13-735},
  title = {Genuine multipartite entanglement of quantum states in the multiple-copy scenario},
  author = {Palazuelos, Carlos and Vicente, Julio I. de},
  journal = {{Quantum}},
  issn = {2521-327X},
  publisher = {{Verein zur F{\"{o}}rderung des Open Access Publizierens in den Quantenwissenschaften}},
  volume = {6},
  pages = {735},
  month = jun,
  year = {2022}
}

@inproceedings{Parez2024fate,
  title={The Fate of Entanglement},
  author={Parez, Gilles and Witczak-Krempa, William},
  year={2024},
  url={https://api.semanticscholar.org/CorpusID:267627292}
}

@article{Popescu1994Bell,
  title = {Bell's inequalities versus teleportation: What is nonlocality?},
  author = {Popescu, Sandu},
  journal = {Phys. Rev. Lett.},
  volume = {72},
  issue = {6},
  pages = {797--799},
  numpages = {0},
  year = {1994},
  month = {Feb},
  publisher = {American Physical Society},
  doi = {10.1103/PhysRevLett.72.797},
  url = {https://link.aps.org/doi/10.1103/PhysRevLett.72.797}
}

@article{Popescu1995bell,
  title = {Bell's Inequalities and Density Matrices: Revealing ``Hidden'' Nonlocality},
  author = {Popescu, Sandu},
  journal = {Phys. Rev. Lett.},
  volume = {74},
  issue = {14},
  pages = {2619--2622},
  numpages = {0},
  year = {1995},
  month = {Apr},
  publisher = {American Physical Society},
  doi = {10.1103/PhysRevLett.74.2619},
  url = {https://link.aps.org/doi/10.1103/PhysRevLett.74.2619}
}

@article{Raussendorf2013contextuality,
  title = {Contextuality in measurement-based quantum computation},
  author = {Raussendorf, Robert},
  journal = {Phys. Rev. A},
  volume = {88},
  issue = {2},
  pages = {022322},
  numpages = {7},
  year = {2013},
  month = {Aug},
  publisher = {American Physical Society},
  doi = {10.1103/PhysRevA.88.022322},
  url = {https://link.aps.org/doi/10.1103/PhysRevA.88.022322}
}

@article{Schuh2004nonlocal,
  title = {Nonlocal Resources in the Presence of Superselection Rules},
  author = {Schuch, N. and Verstraete, F. and Cirac, J. I.},
  journal = {Phys. Rev. Lett.},
  volume = {92},
  issue = {8},
  pages = {087904},
  numpages = {4},
  year = {2004},
  month = {Feb},
  publisher = {American Physical Society},
  doi = {10.1103/PhysRevLett.92.087904},
  url = {https://link.aps.org/doi/10.1103/PhysRevLett.92.087904}
}

@article{Schuh2004quantum,
  title = {Quantum entanglement theory in the presence of superselection rules},
  author = {Schuch, Norbert and Verstraete, Frank and Cirac, J. Ignacio},
  journal = {Phys. Rev. A},
  volume = {70},
  issue = {4},
  pages = {042310},
  numpages = {15},
  year = {2004},
  month = {Oct},
  publisher = {American Physical Society},
  doi = {10.1103/PhysRevA.70.042310},
  url = {https://link.aps.org/doi/10.1103/PhysRevA.70.042310}
}

@article{Seevinck2001sufficent,
  title = {Sufficient conditions for three-particle entanglement and their tests in recent experiments},
  author = {Seevinck, Michael and Uffink, Jos},
  journal = {Phys. Rev. A},
  volume = {65},
  issue = {1},
  pages = {012107},
  numpages = {7},
  year = {2001},
  month = {Dec},
  publisher = {American Physical Society},
  doi = {10.1103/PhysRevA.65.012107},
  url = {https://link.aps.org/doi/10.1103/PhysRevA.65.012107}
}

@article{Shapourian2021entanglement,
  title = {Entanglement Negativity Spectrum of Random Mixed States: A Diagrammatic Approach},
  author = {Shapourian, Hassan and Liu, Shang and Kudler-Flam, Jonah and Vishwanath, Ashvin},
  journal = {PRX Quantum},
  volume = {2},
  issue = {3},
  pages = {030347},
  numpages = {28},
  year = {2021},
  month = {Sep},
  publisher = {American Physical Society},
  doi = {10.1103/PRXQuantum.2.030347},
  url = {https://link.aps.org/doi/10.1103/PRXQuantum.2.030347}
}

@INPROCEEDINGS{Shor1994algorithms,
  author={Shor, P.W.},
  booktitle={Proceedings 35th Annual Symposium on Foundations of Computer Science}, 
  title={Algorithms for quantum computation: discrete logarithms and factoring}, 
  year={1994},
  volume={},
  number={},
  pages={124-134},
  doi={10.1109/SFCS.1994.365700}}

@article{Steiner2003generalized,
  title = {Generalized robustness of entanglement},
  author = {Steiner, Michael},
  journal = {Phys. Rev. A},
  volume = {67},
  issue = {5},
  pages = {054305},
  numpages = {4},
  year = {2003},
  month = {May},
  publisher = {American Physical Society},
  doi = {10.1103/PhysRevA.67.054305},
  url = {https://link.aps.org/doi/10.1103/PhysRevA.67.054305}
}

@article{Szarek2005volume,
  title = {Volume of separable states is super-doubly-exponentially small in the number of qubits},
  author = {Szarek, Stanislaw J.},
  journal = {Phys. Rev. A},
  volume = {72},
  issue = {3},
  pages = {032304},
  numpages = {10},
  year = {2005},
  month = {Sep},
  publisher = {American Physical Society},
  doi = {10.1103/PhysRevA.72.032304},
  url = {https://link.aps.org/doi/10.1103/PhysRevA.72.032304}
}

@article{Verstraete2003quantum,
  title = {Quantum Nonlocality in the Presence of Superselection Rules and Data Hiding Protocols},
  author = {Verstraete, F. and Cirac, J. I.},
  journal = {Phys. Rev. Lett.},
  volume = {91},
  issue = {1},
  pages = {010404},
  numpages = {4},
  year = {2003},
  month = {Jul},
  publisher = {American Physical Society},
  doi = {10.1103/PhysRevLett.91.010404},
  url = {https://link.aps.org/doi/10.1103/PhysRevLett.91.010404}
}

@article{Vidal1999robustness,
  title = {Robustness of entanglement},
  author = {Vidal, Guifr\'e and Tarrach, Rolf},
  journal = {Phys. Rev. A},
  volume = {59},
  issue = {1},
  pages = {141--155},
  numpages = {0},
  year = {1999},
  month = {Jan},
  publisher = {American Physical Society},
  doi = {10.1103/PhysRevA.59.141},
  url = {https://link.aps.org/doi/10.1103/PhysRevA.59.141}
}

@article{Wen2023separable,
doi = {10.1088/1751-8121/ace810},
url = {https://dx.doi.org/10.1088/1751-8121/ace810},
year = {2023},
month = {jul},
publisher = {IOP Publishing},
volume = {56},
number = {33},
pages = {335302},
author = {Wen, Robin Yunfei and Kempf, Achim},
title = {Separable ball around any full-rank multipartite product state},
journal = {Journal of Physics A: Mathematical and Theoretical}
}

@book{Zeng2015quantum,
  title={Quantum Information Meets Quantum Matter},
  author={Bei Zeng and Xie Chen and Duanlu Zhou and Xiao-Gang Wen},
  journal={Quantum Science and Technology},
  publisher = {Springer New York, NY},
  year={2015},
  url={https://api.semanticscholar.org/CorpusID:118528258}
}

@article{Zyczkowski1998volume1,
    author = "Zyczkowski, Karol and Horodecki, Pawel and Sanpera, Anna and Lewenstein, Maciej",
    title = "{On the volume of the set of mixed entangled states}",
    eprint = "quant-ph/9804024",
    archivePrefix = "arXiv",
    doi = "10.1103/PhysRevA.58.883",
    journal = "Phys. Rev. A",
    volume = "58",
    pages = "883",
    year = "1998"
}

@article{Zyczkowski1999volume2,
    author = "Zyczkowski, Karol",
    title = "{On the volume of the set of mixed entangled states. 2.}",
    eprint = "quant-ph/9902050",
    archivePrefix = "arXiv",
    doi = "10.1103/PhysRevA.60.3496",
    journal = "Phys. Rev. A",
    volume = "60",
    pages = "3496",
    year = "1999"
}

@article{Zyczkowski2001induced,
doi = {10.1088/0305-4470/34/35/335},
url = {https://dx.doi.org/10.1088/0305-4470/34/35/335},
year = {2001},
month = {aug},
publisher = {},
volume = {34},
number = {35},
pages = {7111},
author = {Karol Zyczkowski and  Hans-Jürgen Sommers},
title = {Induced measures in the space of mixed quantum states},
journal = {Journal of Physics A: Mathematical and General}
}

\end{document}